\documentclass[12pt]{article}
\usepackage[margin=1in, centering]{geometry}
\usepackage{mathtools}
\usepackage{amssymb}
\usepackage{amsthm}
\usepackage{xcolor}
\usepackage{esvect}
\usepackage{bbm}
\usepackage{times}

\usepackage[pdfstartview=FitH,pdfpagemode=UseNone,colorlinks=true,citecolor=blue,linkcolor=blue]{hyperref}

\usepackage[capitalise]{cleveref}

\newtheorem{thm}{Theorem}[section]
\newtheorem{prop}[thm]{Proposition}
\newtheorem{lem}[thm]{Lemma}
\newtheorem{cor}[thm]{Corollary}
\newtheorem{claim}[thm]{Claim}

\newtheorem{definition}[thm]{Definition}

\newcommand{\R}{\mathbb{R}}  

\newcommand{\N}{\mathbb{N}}  
     
\DeclareMathOperator*{\E}{\mathbb{E}}  
  
\DeclareMathOperator*{\PP}{\mathbb{P}}  
\renewcommand{\P}{\PP} 
\renewcommand{\Pr}{\PP}  
\newcommand{\1}{ \mathbbm{1}}  
\newcommand{\eps}{ \varepsilon}  

\newcommand{\set}[2]{\{#1 \: : \: #2\}}
\newcommand{\ip}[2]{\left\langle#1 ,#2 \right\rangle}

\newcommand{\paren}[1]{\left(#1 \right)}

\newcommand{\ind}[1]{\1 \! \! \paren{#1}}
\newcommand{\prob}[1]{\P \!  \paren{#1}}

\newcommand{\mnote}[1]{{\color{blue}{Raghu: #1}}}
\newcommand{\hvar}{\textnormal{HyperVar}}
\newcommand{\ignore}[1]{{}}


\setlength{\parskip}{\baselineskip}
\setlength\parindent{0pt}
\newcommand{\sign}{\mathsf{sign}}

\begin{document}

\title{Random restrictions and PRGs for PTFs in Gaussian Space}

\author{Zander Kelley\footnote{Department of Computer Science, University of Illinois at Urbana-Champaign. Supported by NSF grants CCF-1755921 and CCF-1814788. Email: \texttt{awk2@illinois.edu}}
,\;\; Raghu Meka\footnote{Department of Computer Science, University of California, Los Angeles. Supported by NSF Career  Award 1553605 and NSF AF 2007682. Email: \tt{raghum@cs.ucla.edu}} }
\date{}
\maketitle

\begin{abstract}
     A polynomial threshold function (PTF) $f:\mathbb{R}^n \rightarrow \mathbb{R}$ is a function of the form $f(x) = \mathsf{sign}(p(x))$ where $p$ is a polynomial of degree at most $d$. PTFs are a classical and well-studied complexity class with applications across complexity theory, learning theory, approximation theory, quantum complexity and more. We address the question of designing pseudorandom generators (PRGs) for polynomial threshold functions (PTFs) in the gaussian space: design a PRG that takes a seed of few bits of randomness and outputs a $n$-dimensional vector whose distribution is indistinguishable from a standard multivariate gaussian by a degree $d$ PTF. 
     
     Our main result is a PRG that takes a seed of $d^{O(1)}\log ( n / \varepsilon)\log(1/\varepsilon)/\varepsilon^2$ random bits with output that cannot be distinguished from $n$-dimensional gaussian distribution with advantage better than $\varepsilon$ by degree $d$ PTFs. The best previous generator due to O'Donnell, Servedio, and Tan (STOC'20) had a quasi-polynomial dependence (i.e.,  seedlength of $d^{O(\log d)}$) in the degree $d$. Along the way we prove a few nearly-tight  structural properties of \emph{restrictions} of PTFs that may be of independent interest.
\end{abstract}
\newpage

\setcounter{page}{1}

\section{Introduction}
Polynomial threshold functions (PTFs) are a classical and well-studied class of functions with several applications in complexity theory, learning theory, theory of approximation, and more. Here we study the question of designing \emph{pseudorandom generators} (PRGs) that fool test functions that are PTFs. We first start with some standard definitions. Let $\sign:\R \rightarrow \R$ be defined as $\sign(z) = 1$ if $z \geq 0$ and $0$ otherwise.

\begin{definition}
For an integer $d > 0$, a degree $d$ PTF $f:\R^n \rightarrow \{0,1\}$ is a function of the form $f(x) = \sign(p(x))$, where $p:\R^n \rightarrow \R$ is a polynomial of degree at most $d$.
\end{definition}

Our goal is to design a PRG that takes few random bits and outputs a high-dimensional vector whose distribution is indistinguishable from a standard multivariate gaussian by any low-degree PTF. Specifically:

\begin{definition} \label{def:prg}
A function $G:\{0,1\}^r \rightarrow \R^n$ is a pseudorandom generator for degree $d$ PTFs with error $\varepsilon$ if for every degree at most $d$ PTF $f:\R^n \rightarrow \{0,1\}$, 
$$\left|\Pr_{y \in_u \{0,1\}^r}\left( f(G(y)) = 1\right) - \Pr_{x \sim N(0,1)^n} \left(f(x) = 1 \right) \right| \leq \varepsilon.$$

We call $r$ the seedlength of the generator and say $G$ $\varepsilon$-fools degree $d$ PTFs with respect to the gaussian distribution 
\footnote{Here, and henceforth, $y \in_u S$ denotes a uniformly random element from a multi-set $S$, and $N(0,1)$ represents the standard univariate gaussian distribution of variance $1$.}.
We say $G$ is explicit if its output can be computed in time polynomial in $n$.
\end{definition}

Of particular interest is the \emph{boolean case} where the target distribution is not gaussian but the uniform distribution on the hypercube $\{+1,-1\}^n$. The gaussian case is interesting by itself both from a complexity-theoretic view as well as a geometric one. For instance, a PRG as above can be used to get deterministic algorithms for approximating the gaussian volumes of polynomial surfaces. Further, the gaussian case is a necessary stepping-stone to obtaining PRGs in the Boolean case: a PRG for the latter implies a PRG for the gaussian case. Achieving similar parameters as we do for the boolean case would be a significant achievement: we do not even have non-trivial correlation lower bounds for NP\footnote{A PRG would at the very least imply correlation lower bounds against a function in NP.} against PTFs of degree $\omega(\log n)$ over the hypercube, a longstanding bottleneck in circuit complexity. 


Over the last several years, the question of designing PRGs for PTFs has received much attention. Non-explicitly (i.e., the generator is not necessarily efficiently computable), by the probabilistic method, it is known that there exists PRGs that $\varepsilon$-fool degree $d$ PTFs with seed-length is $O(d \log n  + \log(1/\varepsilon))$. Meka and Zuckerman \cite{MZ13} gave the first non-trivial PRG for bounded degree PTFs with a seedlength of $d^{O(d)}\log(n)/\varepsilon^2$ for the boolean and gaussian cases. Independent of \cite{MZ13}, \cite{DKN10} showed that bounded independence fools degree-$2$ PTFs leading to seedlength $O(\log(n)/\varepsilon^2)$. Since then, there have been several other works that make progress on the gaussian case \cite{Kane11a, Kane11b, Kane12, Kane14, Kane15}. The seedlength in all of these works had an exponential dependence on the degree $d$ of the PTF. In particular, until recently no non-trivial PRGs (i.e., seedlength $o(n)$) were known for PTFs of degree $\omega(\log n)$. In a remarkable recent work, O'Donnell, Servedio, and Tan \cite{OST20} got around this exponential dependence on the degree $d$, achieving a seedlength of  $(d/\eps)^{O(\log d)} \log(n)$. Our work builds on their work (which in turn builds on a framework of \cite{Kane11a}).

\subsection{Main Results}

Our main result is a PRG that $\epsilon$-fools $n$-variate degree-$d$ PTFs with seed-length  $(d/\epsilon)^{O(1)} \log (n)$:

\begin{thm}[PRG for PTFs]\label{th:mainintro}
There exist constants $c,C$ such that for all $\epsilon > 0$ and $d \geq 1$, there exists an explicit PRG that $\varepsilon$-fools $n$-variate degree $d$ PTFs with respect to the gaussian distribution with seedlength $r(n,d,\epsilon) = C d^{c} \log(n/\eps) \log(1/\eps) /\eps^2$. 
\end{thm}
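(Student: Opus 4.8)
The plan is to follow the restriction-based framework of \cite{Kane11a, OST20}: the key insight is that a random restriction of a low-degree PTF becomes, with good probability, very close to a constant function or very close to a degree-$1$ (i.e.\ linear) threshold function, and linear threshold functions in gaussian space are fooled by pseudorandomly chosen gaussians (e.g.\ by bounded-independence or small-bias style arguments, or by the invariance principle). The generator $G$ will therefore be recursive: we split the $n$ coordinates into buckets using a $k$-wise independent hash, assign the value of each bucket using a ``base'' generator that fools linear threshold functions and low-moment tests, and recurse on the residual randomness inside each bucket. The crux is that after $O(\log d)$ levels of recursion --- wait, that only gives the \cite{OST20} bound; to get a \emph{polynomial} dependence on $d$ we instead want a single (or constantly many) level(s) of restriction that already drops the effective degree or makes the polynomial ``regular,'' so that the number of recursion levels is a constant rather than $\Theta(\log d)$.

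Concretely, the main steps I would carry out are as follows. First, prove the new structural ``restriction lemma'' promised in the abstract: for a degree-$d$ polynomial $p$ with, say, unit variance under $N(0,1)^n$, a random restriction that keeps each coordinate ``alive'' with probability $\rho = 1/\mathrm{poly}(d)$ (and fixes the rest to fresh gaussians) produces, with probability $1 - \mathrm{poly}(d)/\mathrm{something}$, a restricted polynomial $p'$ whose ``critical index'' is small and whose tail is a regular/low-influence polynomial --- quantitatively strong enough that $\sign(p')$ is $\eps$-close to a function depending on only a bounded-degree, bounded-``complexity'' part plus a regular remainder. Second, invoke (a Gaussian analogue of) the invariance/regularity machinery to say that a regular degree-$d$ polynomial threshold function is $\eps$-fooled by any distribution matching enough moments and having small enough ``anticoncentration'' defect, hence by the base generator. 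Third, assemble $G$: use a $t$-wise independent hash $h:[n]\to[m]$ with $m,t = \mathrm{poly}(d,1/\eps)$ to define buckets, draw the inter-bucket averages from the base PRG, and apply this construction for only a constant number of recursion levels (here is where the polynomial-in-$d$ restriction lemma replaces the $\log d$ levels of \cite{OST20}); track how the error $\eps$ and the degree/complexity parameters evolve across the $O(1)$ levels. Fourth, do the seedlength accounting: each level contributes $O(\mathrm{poly}(d)\log(n/\eps)/\eps^2)$ bits for the hash plus the base generator's $O(\log(n/\eps)\log(1/\eps)/\eps^2)$ bits for linear threshold functions, and $O(1)$ levels multiply this by a constant, yielding $C d^{c}\log(n/\eps)\log(1/\eps)/\eps^2$.

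The main obstacle I expect is the structural restriction lemma itself, and in particular getting its failure probability and regularity parameters to depend only \emph{polynomially} on $d$. The naive Bonami-hypercontractivity argument shows a random restriction shrinks the variance contributed by each ``level'' of the polynomial, but controlling the \emph{critical index} --- the number of large-eigenvalue directions one must peel off before the remainder is regular --- and showing it is $\mathrm{poly}(d)$ rather than exponential requires a careful iterative argument: peel off the top eigenvector of the relevant tensor/derivative, argue the random restriction either kills its influence or the variance drops by a $(1-\Omega(1/d))$ factor, and bound the number of such steps by a potential/martingale argument. A secondary subtlety is that after restriction one must handle a \emph{sum} of a bounded-complexity part and a regular part --- the threshold of a sum is not a threshold of either piece --- so one needs an anticoncentration statement for the regular part (a Gaussian Littlewood--Offord / Carbery--Wright bound at the right scale) to argue that with high probability over the restriction the value of $p'$ is bounded away from $0$ by a non-negligible margin on all but an $\eps$-fraction of inputs, after which the base generator's moment-matching suffices. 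Handling the interaction of these two sources of error cleanly, with polynomial dependence on $d$, is where I'd expect to spend the most effort.
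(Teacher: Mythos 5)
Your proposal leaves the central difficulty unproved and sketches a route to it that is not known to work with polynomial dependence on $d$. The ``restriction lemma'' you defer to is exactly where this paper's new content lies, and the mechanism it uses is quite different from the critical-index/regularity/eigenvector-peeling plan you outline. The paper's restriction statement (\cref{th:rest}) says the gaussian restriction $f_{x,\lambda}$ with $\lambda = \tilde{O}(1/d^6)$ becomes \emph{nearly constant} (not ``constant or close to an LTF''), and it is derived from a structural fact about derivatives at a random gaussian point (\cref{lm:derivatives}): with probability $1-\delta$ over $x\sim N(0,1)^n$, $\|\nabla^k p(x)\|\le O(d^3/\delta)\|\nabla^{k-1}p(x)\|$ for all $k$, proved via a vector-valued extension of Kane's relative anticoncentration inequality \cite{Kane13}, then converted into a hypervariance bound via a Hermite-basis Taylor expansion (\cref{lm:hermite1}) and hypercontractivity. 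By contrast, critical-index and regularity machinery for PTFs is precisely the technology whose parameters degrade exponentially (or at best quasi-polynomially) in $d$ in prior work, and in gaussian space the notions of influence/critical index are not even natural since the class is rotation-invariant; your proposed potential/martingale argument for bounding the number of peeling steps is speculative and you concede it is ``where I'd expect to spend the most effort.'' A blind proof cannot outsource its key lemma to an obstacle it does not resolve.

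There is a second, independent gap in your assembly step. Even granting a polynomial-in-$d$ restriction lemma, its failure probability over the restriction is inherently of order $\mathrm{poly}(d)\cdot\delta$ and cannot be driven below $1/L$ (equivalently, below $\sqrt{\lambda}$), so a construction with $m=\mathrm{poly}(d,1/\eps)$ buckets, or a hybrid over $L=\mathrm{poly}(d)/\eps^2$ summands, cannot simply union-bound the ``bad restriction'' event across buckets or hybrid steps; the accumulated error would swamp $\eps$. This is the obstacle the paper spends most of its effort circumventing: the generator analyzed is the non-recursive sum $Z=\frac{1}{\sqrt L}\sum_i Y_i$ of $dR$-moment-matching gaussians, and the error of the bad event is paid only once via a sandwiching argument with a smooth mollifier $g$ that tests slow growth of $\|\nabla^k p\|$, after which the hybrid step (\cref{lm:error2}) has error $2^{-\Omega(R)}$, small enough to multiply by $L$. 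Your plan contains no analogue of this mechanism (no sandwiching, no mollifier, no per-step error that beats $1/L$), and your anticoncentration-plus-moment-matching endgame for the ``regular part'' would still need a union bound over buckets that the parameters do not support. Your final seedlength accounting happens to land on the right form, but the argument leading to it does not go through as stated.
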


As remarked above, this is the first result with polynomial dependence on the degree for fooling PTFs against any distribution and gives the first non-trivial PRGs against PTFs of degree $n^{\Omega(1)}$. Previously, we could only handle degree at most $2^{O(\sqrt{\log n})}$.

Towards proving the above result, we develop several structural results on PTFs in the gaussian space that might be of independent interest. We expand on these later on. Briefly:
\begin{itemize}
\item We show that the derivatives of a low-degree polynomial $p$,
taken at a random point $x \sim N(0,1)^n$,
are likely to have magnitudes $\| \nabla^k p(x) \|$ which grow slowly as $k$ increases. 
\item We apply this fact to the study of random ``gaussian restrictions" of a polynomial $p$,
$$ p_{x,\lambda}(y) := p\left( \sqrt{1-\lambda}x + \sqrt{\lambda} y\right), $$
and conclude that for small enough $\lambda$, with high probability over $x \sim N(0,1)^n$, $p_{x,\lambda}(Y)$ becomes highly concentrated around its mean value $\mu$ when $Y \sim N(0,1)^n$, as quantified by a bound on the higher-moments $\E (p_{x,\lambda}(Y) - \mu)^R $.
\item As this concentration result relies only on moment bounds, it extends easily to pseudorandom distributions $Y$ over $\R^n$ which are $k$-moment-matching with $N(0,1)^n$, when $k \geq R \cdot \textnormal{deg}(p)$. 
\end{itemize}
Note that the magnitudes of the derivatives $\nabla^k p_{x,\lambda}(0)$ (with respect to $x$) are the same as the magnitudes of the degree-$k$ coefficients of $p_{x,\lambda}(y)$ (as a polynomial in $y$), up to a scaling factor of roughly $\lambda^{k/2}$. However, to obtain the moment bound, we must translate to the basis of Hermite polynomials and bound the degree-$k$ coefficients with respect to this basis (rather than the standard basis). 
In contrast with our work, \cite{OST20} derive coefficient-size bounds for the Hermite basis directly and work with it exclusively. However, there are some significant advantages in having the flexibility to work also within the standard basis which will become relevant later -- mainly they are due to the fact that standard basis representations (or equivalently: derivatives) behave nicely under the scaling operator $p(t) \mapsto p(\gamma t)$. The Hermite-basis representation behaves poorly under scaling\footnote{In contrast, the Hermite basis representation behaves nicely under the \emph{noise operator}, $p(t) \mapsto \E_{x \sim N(0,1)^n} p(\sqrt{1-\lambda} x + \sqrt{\lambda} t)$.}. 

For an arbitrary fixed polynomial $p(t)$, a bound on the coefficient-sizes in one basis translates only to a fairly crude bound in the other basis\footnote{This is especially true in the current setting where we must control the \emph{relative} sizes of the magnitudes of coefficients at degree $k$ vs.\ $k+1$.}. Therefore, we come to the following rather technical contribution of our work which we would like to highlight: we find that, although it is rather painful to convert between bases while studying an arbitrary fixed polynomial, it is actually quite possible to do so when studying certain \emph{average-case} behaviors of polynomials; for instance, to study the typical behavior of $p(x)$ in the neighborhood around a random point $x \sim N(0,1)^n$, or the typical moments of $p(\sqrt{1-\lambda}x + \sqrt{\lambda} Y)$, it is possible to pass freely between either polynomial basis, and we develop some simple tools for doing so. These tools appear to be new (at least with respect to the body of works on PTFs) and it seems likely that they could be helpful in future works.

Besides these structural results and technical contributions, we also manage to introduce some substantial simplifications to the analysis of the main PRG as compared to \cite{OST20}. This is in part due to the flexibility we have to measure the \emph{well-behavedness} of a polynomial $p$ in the neighborhood around a point $x$ directly via the derivatives at $x$, rather than indirectly by taking several Hermite expansions of $p$ and other auxiliary polynomials (cf.~\emph{horizantal, diagonal mollifier checks} in \cite{OST20}). We will expand on this in \cref{sec:overview} when discussing our analysis, but we briefly summarize a few key points here.
\begin{itemize}
    \item Following \cite{Kane11a} and \cite{OST20}, the pseudorandom construction we analyze is of the form $Z := \frac{1}{\sqrt{L}}\sum_{i=1}^L Y_{i}$, where each $Y_i$ is a $k$-moment-matching gaussian. This can be thought of as the gaussian analogue of the boolean construction from \cite{MZ13}, which pseudorandomly partitions the $n$ input bits into $L$ buckets, and then assigns the bits in each bucket using $k$-wise independence. This construction and its variants are by now the most widely-applied pseudorandom tool for fooling various ``geometric" function classes including linear threshold functions and their generalizations (such as PTFs and intersections of halfspaces). 
    \item A tempting first idea for analyzing $Z$ is to apply a hybrid argument -- this seems promising in light of the fact that for a low-degree polynomial, we know that $p\left(\sqrt{1-\frac{1}{L}} x + \sqrt{\frac{1}{L}}Y_i\right)$ should be highly-concentrated around its mean for typical $x$. However, this naive idea fails quantitatively: The probability that we have good behavior at $x$ is in general not smaller than $\sqrt{1/L}$, so we cannot afford a union-bound over $L$ events as required by the standard hybrid argument. Remarkably in \cite{Kane11a}, Kane shows how to address this obstacle with a clever sandwiching argument which in some sense mimics the hybrid argument but manages to pay for the error caused by ``bad points" $x$ only once rather than $L$ times.
    \item However, one drawback of Kane's analysis is that its implementation is highly elaborate. After the framework was extended by \cite{OST20} to break the $\log(n)$-degree barrier, the complexity only increased and the details of the argument became only more specialized and technical\footnote{Refer to \cite{OSTK21}, which fills in several details absent in \cite{OST20}, to see the full scope of the argument.}. Given the wide applicability of the aforementioned pseudorandom construction and its variants,
    it would be highly desirable to have a lean and more transparent analysis which might better serve as a flexible starting point for future adaptations. We propose that in this work, we do obtain such an analysis.  
    
\end{itemize}



\paragraph{PTFs simplify under restrictions.}
As a byproduct of our analysis, we obtain a structural result on PTFs that is similar in spirit to the celebrated \emph{switching lemmas} that show that certain classes of functions simplify significantly under random restrictions. Switching lemmas and random restrictions are a cornerstone in complexity theory, and are one of the main methods we have for proving lower bounds. We prove analogous results with nearly optimal parameters for the important class of PTFs in the continuous space. 

In the \emph{boolean case}, i.e., when studying distributions on the hypercube $\{+1,-1\}^n$, a \emph{restriction} is a partial assignment of the form $\rho \in \{+1,-1,*\}^n$ with the understanding that the $*$-variables are left free. Typically, restrictions $\rho$ as above are parametrized by some $\lambda > 0$, the fraction of $*$'s.


Here, we study analogues of the above results in the continuous world, where the inputs are coming from the standard gaussian distribution. The first question however is what should the analogue of random restrictions be in the continuous space? As it turns out, adopting the usual interpretation (where some coordinates are fixed and some are free) is not a natural one to study in the continuous space especially for PTFs\footnote{One reason is that the class of PTFs is invariant under linear transformations, so it would be nice to have our notion of restrictions also have some symmetry under linear transformations.}. 

The answer comes from the work of \cite{Kane11a} (further developed in \cite{OST20}) who introduced the notion of a \emph{zoom} of a polynomial. To draw a clearer parallel with random restrictions, we term these \emph{gaussian restrictions}:

\begin{definition}
Given a function $p:\R^n \rightarrow \R$ and $x \in \R^n$, and a \emph{restriction} parameter $\lambda \in (0,1)$, let $p_{x,\lambda}:\R^n \rightarrow \R$ be\footnote{As the value of $\lambda$ will often be clear, we will often in fact just use $p_x$ for brevity.} the function $p_{x,\lambda}(y) = p(\sqrt{1-\lambda} x + \sqrt{\lambda} y)$. 
\end{definition}
Intuitively, we can view $p_{x,\lambda}$ as a restriction where $(1-\lambda)$-fraction of the \emph{variance} is already \emph{fixed}. (Note that for independent $x,y \sim N(0,1)^n$, $\sqrt{1-\lambda}x + \sqrt{\lambda} y$ is distributed as $N(0,1)^n$.)

We show that PTFs simplify significantly, i.e., become essentially constant, under \emph{gaussian restrictions} for $\lambda \ll 1/d^6$. 

\newpage


\begin{thm}\label{th:rest}
There is a constant $C > 0$ such that the following holds. For any $\delta,\varepsilon > 0$, if
\begin{itemize}
    \item $f:\R^n \rightarrow \{0,1\}$ is a PTF of  degree $d$, and
    \item $\lambda \leq C \frac{\delta^2}{d^6\log(1/\varepsilon)}$,
\end{itemize}
 then with probability at least $1-\delta$ over $x \sim N(0,1)^n$, the gaussian restriction of the PTF  ($f_{x,\lambda}$) is \emph{nearly fixed to a constant}: for some $b \in \{0,1\}$ we have 
$$\Pr_{y \sim N(0,1)^n}[f_{x,\lambda}(y) = b] > 1 - \varepsilon.$$
\end{thm}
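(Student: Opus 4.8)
The plan is to reduce the assertion to a higher-moment estimate for the restricted polynomial. Write $f=\sign(p)$ with $\deg p\le d$, let $\bar p$ be the polynomial obtained by Gaussian-smoothing $p$ at scale $\lambda$ (so $\bar p(z)=\E_{Y}[p(z+\sqrt{\lambda}Y)]$, still of degree $\le d$), set $x'=\sqrt{1-\lambda}\,x$, and put $q_x:=p_{x,\lambda}$ and $\mu_x:=\E_{Y\sim N(0,1)^n}[q_x(Y)]=\bar p(x')$. Fix $R\asymp\log(1/\varepsilon)$, chosen so that $\varepsilon^{1/R}\ge\tfrac12$. If
\[
\E_{Y}\!\left[(q_x(Y)-\mu_x)^{R}\right]\le\varepsilon\,\mu_x^{R},
\]
then Markov's inequality gives $\Pr_{Y}[\,|q_x(Y)-\mu_x|\ge|\mu_x|\,]\le\varepsilon$, hence $f_{x,\lambda}(Y)=\sign(\mu_x)$ with probability at least $1-\varepsilon$. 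So it suffices to establish this moment inequality for every $x$ outside a set of Gaussian measure at most $\delta$.

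Expand $q_x$ in the Hermite basis in the variable $y$ and write $q_x-\mu_x=\sum_{k=1}^{d}g_{x,k}$, where $g_{x,k}$ is the degree-$k$ Hermite component; put $a_{x,k}:=\|g_{x,k}\|_{L^2}$. Up to benign combinatorial ($k^{O(k)}$-type) factors, $a_{x,k}$ is $\lambda^{k/2}$ times the norm of the $k$-th derivative tensor of $\bar p$ at $x'$; in particular, by Gaussian integration by parts, $a_{x,1}=\sqrt{\lambda}\,\|\nabla\bar p(x')\|$ exactly. The first structural input is the slow-growth lemma advertised in the introduction: outside a set of $x$ of measure at most $\delta/2$, the derivative tensors of $p$ (equivalently, of $\bar p$) grow slowly, which translates to $a_{x,k}\le(\sqrt{\lambda}\cdot\mathrm{poly}(d))^{k-1}\,a_{x,1}$ for all $k$. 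Since our hypothesis forces $\lambda\,R\,\mathrm{poly}(d)^{2}\le\tfrac12$, a Gaussian moment bound for polynomials with a known level-wise $L^2$-profile — this is where the average-case basis-conversion tools are used, to legitimately pass from the derivative/standard basis, in which the restriction and scaling operations are natural, to the Hermite basis, in which the moment bound lives — produces a geometric series that collapses to
\[
\E_{Y}\!\left[(q_x(Y)-\mu_x)^{R}\right]^{1/R}\le O\!\left(\sqrt{R}\;a_{x,1}\right)=O\!\left(\sqrt{R\lambda}\;\|\nabla\bar p(x')\|\right).
\]
The essential gain is the $\sqrt{R}$ dependence, rather than the $R^{d/2}$ that a black-box application of hypercontractivity to a degree-$d$ polynomial would cost.

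It remains to compare $\sqrt{R\lambda}\,\|\nabla\bar p(x')\|$ with $|\mu_x|=|\bar p(x')|$, and this is where the failure probability $\delta$ — and the square on it in the hypothesis — enters. By a Carbery--Wright--type gradient anti-concentration estimate for the degree-$\le d$ polynomial $\bar p$,
\[
\Pr_{x}\!\left[\,|\bar p(x')|\le t\,\|\nabla\bar p(x')\|\,\right]\le\mathrm{poly}(d)\cdot t ,
\]
so taking $t\asymp\sqrt{R\lambda}$ shows that, outside a further set of $x$ of measure at most $\delta/2$, one has $O(\sqrt{R\lambda})\,\|\nabla\bar p(x')\|\le\tfrac12|\mu_x|\le\varepsilon^{1/R}|\mu_x|$, which yields the required moment inequality. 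Tracking the two constraints $\lambda\,R\,\mathrm{poly}(d)^{2}\le\tfrac12$ and $\mathrm{poly}(d)\sqrt{R\lambda}\le\delta/2$, both are implied by $\lambda\le C\,\delta^{2}/(d^{6}\log(1/\varepsilon))$ for a small absolute constant $C$, the $d^{6}$ being the square of the worst cubic-in-$d$ loss among the slow-growth rate and the anti-concentration estimate. A union bound over the two excluded sets gives the theorem.

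The crux — and the step I expect to be most delicate — is the moment bound of the second paragraph: forcing the $R$-th moment of $q_x-\mu_x$ down to $O(\sqrt{R}\cdot\sqrt{\mathrm{Var}})$ instead of the hypercontractive $O(R^{d/2}\cdot\sqrt{\mathrm{Var}})$. No estimate that treats $q_x-\mu_x$ as a generic degree-$d$ polynomial can achieve this; the argument must exploit that, for typical $x$, almost all of the Hermite mass of $q_x-\mu_x$ sits at level $1$, which is the content of the slow-growth lemma, and it must phrase that lemma and the moment bound in compatible bases — hence the need for the average-case conversion between the standard (derivative) and Hermite representations emphasized in the introduction. By comparison, the anti-concentration input, the handling of the mild smoothing that produces $\bar p$ from $p$, and the harmless discrepancy between $x$ and $x'=\sqrt{1-\lambda}\,x$ should all be routine.
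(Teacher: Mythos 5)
Your proposal is correct and follows essentially the same route as the paper: the paper deduces \cref{th:rest} by combining the hypervariance-reduction lemma (\cref{lm:hypervarintro}, whose proof runs through \cref{lm:derivatives} and the Hermite-basis Taylor expansion \cref{lm:hermite1}) with the concentration-under-small-hypervariance lemma (\cref{lm:moment}), which is exactly the ``level-$1$ heavy'' hypercontractive moment bound followed by Markov that you describe. The one cosmetic difference is that your separate invocation of a ``Carbery--Wright-type'' gradient anti-concentration for $|\bar{p}(x')|$ versus $\|\nabla\bar{p}(x')\|$ is, in the paper, simply the $k=1$ case of \cref{lm:derivatives}, and the underlying ingredient there is Kane's relative anti-concentration (\cref{cor:kane_cor}) rather than Carbery--Wright.
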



The work of \cite{OST20} achieves a similar conclusion but when the restriction parameter is $\lambda = d^{-O(\log d)}$ as opposed to being polynomially small as above. This improved significantly on the work of \cite{Kane11a} that implicitly shows a similar claim for $\lambda = 2^{-O(d)}$. 
\ignore{
{ \color{green} Note that our bound is almost tight up to the constant in the polynomial dependence: one needs at least $\lambda = 1/d^2$. }
\mnote{Footnote on example?}
{\color{red} Zander: the univariate liner function p(x) = x should suffice. Actually, we should probably discuss optimally of the hypervariance reduction lemma itself rather than this cor.}\mnote{What do you mean? How would d come into the picture ... I am thinking delta is say 1/10. We should give examples for both.} {\color{red} Zander: I think you wrote this -- maybe you meant $\delta^2$?}\mnote{True. How about the dependence on d?} {\color{red} Zander: I'm not sure. I'm setting this sentence aside in green for now.}}

We remark that in a related line of work, \cite{ELY09,HKM14,DRST14, KKL17} study random restrictions of PTFs over the hypercube. Our focus here is on gaussian restrictions and obtaining stronger bounds quantitatively: these works had exponential dependence on the degree d.


\paragraph{Slow-growth of derivatives.} The analysis of the PRG (\cref{th:prgmain}) and the random restriction statement above (\cref{th:rest}) rely crucially on a claim about the magnitude of the derivatives of a  polynomial evaluated at random gaussian input which may itself be of independent interest (and can be stated in a self-contained way). 

For a function $p:\R^n \rightarrow \R$, let $\|\nabla^k p(x)\|^2$ denote the sum of squares of all partial derivatives of $p$ of order $k$ at $x$. That is, $\|\nabla^k p(x)\|$ is the Frobenius norm of the tensor of $k$'th order partial derivatives of $p$.  We show that for any degree $d$ polynomial $p$, the Frobenius-norm of the $k$'th order derivatives are comparable to the $(k-1)$'th order derivatives on a random gaussian input with high probability: 
\begin{lem}\label{lm:derivatives}
For any degree-$d$ polynomial $f:\R^d \rightarrow \R$, and $x \sim N(0,1)^n$, the following holds with probability at least $1-\delta$: 
\begin{equation}\label{eq:derivatives}
\|\nabla^k p(x)\| \leq O(d^3/\delta) \|\nabla^{k-1} p(x)\|,\;\text{ for all $1 \leq k \leq d$}.
\end{equation}
\end{lem}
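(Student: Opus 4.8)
The plan is to reduce the claim to a one-dimensional statement, prove that by a logarithmic-derivative/root-counting argument (which is exactly where the $d^3$ comes from), and then lift back to $\R^n$. First, by a union bound over $k$ it suffices to prove for each fixed $k$ that $\Pr_x\!\big[\|\nabla^k p(x)\| > (Cd^3/\delta)\|\nabla^{k-1}p(x)\|\big]\le \delta/d$. Pointwise one has $\|\nabla^k p(x)\|^2 = \sum_{|I|=k-1}\|\nabla(\partial^I p)(x)\|^2$ and $\|\nabla^{k-1}p(x)\|^2 = \sum_{|I|=k-1}(\partial^I p(x))^2$ (sums over ordered tuples $I$), and since $\tfrac{\sum a_I}{\sum b_I}\le\max_I\tfrac{a_I}{b_I}$ it would morally be enough to control $\|\nabla q(x)\|/|q(x)|$ for a single polynomial $q$ of degree $\le d$; the difficulty is that there are exponentially many $\partial^I p$, so a naive union bound over them is useless, and any argument must use that they are all derivatives of one polynomial. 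The device I would use is the trace-free identity $\|\nabla^k p(x)\|^2 = \E_{g_1,\dots,g_k}\big[(\partial_{g_1}\!\cdots\partial_{g_k}p(x))^2\big]$ with $g_1,\dots,g_k\sim N(0,1)^n$ independent (a short Wick computation), which passes from the full tensor norm to mixed directional derivatives with no loss in $n$.

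For the one-dimensional heart, factor $p^{(k-1)}(t)=c\prod_i(t-r_i)$ over $\C$; then $\tfrac{p^{(k)}(t)}{p^{(k-1)}(t)}=\sum_i\tfrac1{t-r_i}$, so $\big|\tfrac{p^{(k)}(t)}{p^{(k-1)}(t)}\big|\le\sum_i\tfrac1{|t-\mathrm{Re}\,r_i|}$, and a root $r$ contributes more than $1/\eta$ only when $|t-\mathrm{Re}\,r|<\eta$, an event of probability $<\eta$ for $t\sim N(0,1)$. Union-bounding over all $\le\binom{d+1}{2}$ roots of $p^{(0)},\dots,p^{(d-1)}$ with $\eta=\delta/d^2$ gives, with probability $\ge1-\delta$, $|p^{(k)}(t)|\le (d/\eta)|p^{(k-1)}(t)|=(d^3/\delta)|p^{(k-1)}(t)|$ for all $k$ at once. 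To lift this to $\R^n$ I would restrict $p$ to a random line through $x$ parametrized so that the position of $x$ on the line is a fresh independent standard Gaussian: draw $w,\xi\sim N(0,1)^n$ and $S\sim N(0,1)$ independently, set $\hat w=w/\|w\|$ and $x=\Pi_{w^\perp}\xi+S\hat w$ (so $x\sim N(0,1)^n$), and apply the $n=1$ bound to $s\mapsto p(\Pi_{w^\perp}\xi+s\hat w)$ at $s=S$, which is independent of that polynomial. This yields: with probability $\ge1-\delta$ over $x$, $|\partial_{\hat w}^k p(x)|\le(d^3/\delta)|\partial_{\hat w}^{k-1}p(x)|$ for all $k$, where $\hat w$ is a random unit vector.

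The hard part will be the final step: converting single-direction control into a bound on the Frobenius norm $\|\nabla^k p(x)\|$ that is free of $n$ and only polynomial in $d$. Simply averaging the directional ratio bound over a random direction to rebuild $\|\nabla^k p(x)\|^2$ is too lossy, since $\E_g[A(g)]/\E_g[B(g)]$ need not be close to a typical value of $A/B$; feeding the directional bound into the trace-free identity by peeling one $g_i$ at a time costs a hypercontractivity factor per peel, which over up to $d$ directions degenerates to $\exp(\Omega(d))$; and an inductive variant that differentiates once and recurses on degree $d-1$ — using $\|\nabla^k p(x)\|^2=\E_g\|\nabla^{k-1}(\partial_g p)(x)\|^2$ and $\|\nabla^{k-1}p(x)\|^2=\E_g\|\nabla^{k-2}(\partial_g p)(x)\|^2$ — loses a constant factor in the failure probability at each level and hence blows the final bound up exponentially. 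Getting these costs down to $\mathrm{poly}(d)$ is where I expect the real difficulty to lie, and I suspect it needs precisely the flexibility the paper advertises of passing between the standard (derivative) and Hermite representations of $\nabla^k p(x)$ in an average-case sense over the random point and direction, even though the two are incomparable for a fixed polynomial. As a consistency check, the Hermite expansion yields the clean mean identity $\E_x\|\nabla^k p(x)\|^2=\sum_{j\ge k}W_j\tfrac{j!}{(j-k)!}\le d\,\E_x\|\nabla^{k-1}p(x)\|^2$ with $W_j=\sum_{|\alpha|=j}\hat p(\alpha)^2$, but as this expectation is dominated by rare large values it cannot by itself give the pointwise ratio bound, so a genuinely high-probability argument is unavoidable.
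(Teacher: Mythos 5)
You have correctly reduced the lemma to pointwise ratio control via a union bound over $k$, and your univariate root-counting argument is a valid elementary proof of the $n=1$ case (with the same $d^3$ accounting as the paper, for a different but equally legitimate reason). But, as you candidly flag, there is a genuine gap at the multivariate lift: your random-line construction controls $|\partial_{\hat w}^k p(x)|/|\partial_{\hat w}^{k-1}p(x)|$ along a single random direction $\hat w$, and none of your three proposed repairs (direct averaging, peeling one direction at a time with hypercontractivity, degree-decreasing induction) converts this to the Frobenius-norm ratio with only $\mathrm{poly}(d)$ loss, for exactly the reasons you give. In fact even the $k=1$ scalar statement $\Pr_x[\,|p(x)|\le\eps\|\nabla p(x)\|\,]\le O(\eps d^2)$ in $\R^n$ does not follow cheaply from your directional bound: a random unit direction only sees $\|\nabla p(x)\|$ at scale $1/\sqrt n$. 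That scalar statement is precisely Kane's relative anticoncentration theorem \cite{Kane13}, which the paper cites as a black box rather than reproving.

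The idea you are missing is a vector-valued relative anticoncentration lemma, proved by an auxiliary-polynomial device that never passes through any single-direction statement. Fix $k$, regard $\nabla^k p(x)$ as a vector $v(x) := (\partial^\alpha p(x))_{|\alpha|=k}\in\R^m$, and introduce $m$ fresh independent gaussian variables $y = (y_\alpha)_{|\alpha|=k}$. Apply the scalar bound $\Pr[\,|q|\le\eps\|\nabla q\|\,]\le O(\eps d^2)$ — Kane's theorem, now for a degree-$\le d+1$ polynomial in the $n+m$ gaussian variables $(x,y)$ — to the auxiliary polynomial $h(x,y) := \langle y, v(x)\rangle = \sum_{|\alpha|=k} y_\alpha\,\partial^\alpha p(x)$. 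Its gradient splits as $\nabla_y h = v(x)$ and $\nabla_x h = M_x y$, where $M_x$ has columns $\nabla\partial^\alpha p(x)$, so $\|v(x)\|=\|\nabla^k p(x)\|$ and $\|M_x\|_F^2 \ge \|\nabla^{k+1}p(x)\|^2$. This yields $\Pr_{x,y}[\,\langle y,v(x)\rangle^2 \le \eps^2(\|v(x)\|^2 + \|M_x y\|^2)\,]\le O(\eps d^2)$, and a short conditioning argument closes the gap: if $\|\nabla^k p(x)\|$ were much smaller than $\eps\|M_x\|_F$, then for that fixed $x$ a Chebyshev bound on $\langle y,v(x)\rangle$ together with a Paley--Zygmund lower bound on the quadratic form $\|M_x y\|^2$ forces the displayed event to occur with $\Omega(1)$ probability over $y$; hence the bad set of $x$ has probability $O(\eps d^2)$, i.e.\ $\Pr_x[\,\|\nabla^k p(x)\| \le \Omega(\eps)\|\nabla^{k+1}p(x)\|\,]\le O(\eps d^2)$. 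The extra $y$-variables absorb the dimensionality of the derivative tensor at the cost of a single degree increment, which is exactly what sidesteps the exponential blowups you correctly diagnose in the peeling and inductive attempts; setting $\eps = \delta/d^3$ and union-bounding over $k$ then recovers the lemma.
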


Note that the above lemma is tight up to the factor of $O(d^2)$: consider the example $p(x) = x_1^d$.

\paragraph{Independent and concurrent work.} Independently and concurrent to our work, \cite{OSTK21} (following up on \cite{OST20}) also obtained similar results to \cref{th:mainintro}. They first obtained an analogue of \emph{hypervariance reduction} (cf., \cref{lm:hypervarintro}) as studied in \cite{OST20} with better parameters and combined the improved hypervariance reduction lemma with the framework of \cite{OST20} to yield a PRG with $d^{O(1)}$ dependence on the degree $d$. 

Our approach differs in that we critically use our new bounds on the growth of derivatives of polynomials as in \cref{lm:derivatives} (instead of \cref{lm:hypervarintro} which follows from \cref{lm:derivatives}). Working with the derivatives directly allows us to get a substantially simpler analysis of the main PRG construction compared to \cite{OST20, OSTK21}. 

\section{Proof Overview}\label{sec:overview}
We first describe the high-level ideas underlying our main PRG construction - the proof of \cref{th:prgmain}. We then describe the main idea behind the proof of \cref{lm:derivatives} which is critical in being able to handle PTFs of polynomially large degree. The proof of \cref{lm:derivatives} is quite different from the approach taken in \cite{Kane11a, OST20} to prove analogous results in their analysis. 

\subsection{Analysis of the PRG} \label{sec:prgoutline}
We will use the same generator as in \cite{Kane11a}, and the high-level strategy is similar in spirit to that of \cite{Kane11a,OST20}. However, we introduce several additional ingredients that exploit \cref{lm:derivatives} and significantly simplify the analysis. 

As in the works of \cite{Kane11a} and \cite{OST20}, the PRG output will be 
$$Z := \frac{1}{\sqrt{L}}\sum_{i=1}^L Y_{i},$$ where each $Y_i$ is an independent $k$-moment-matching gaussian vector with $k = d^{\Theta(1)}$. For the time being let us work under the idealized assumption that each $Y_i$ is exactly $k$-moment-matching with a standard gaussian: i.e., for any polynomial $h:\R^n \rightarrow \R$ of degree at most $k$, $\E[h(Y_i)] = \E_{z \sim N(0,1)^n}[h(z)]$. We will later relax this condition without too much additional work as is now standard (see \cref{sec:prelims} for details), and ultimately output a discrete approximation to $Z$ with finite support. For now, it is appropriate to imagine that the seedlength required for generating each $Y_i$ will be roughly $O(k \log n)$; the total seedlength will thus be $L \cdot O(k \log n) $. We improve prior works by showing that it suffices to let $L = d^{\Theta(1)}$, rather than $L = 2^{\Theta(d)}$ as in \cite{Kane11a} or $L = d^{\Theta(\log d)}$ as in \cite{OST20}.

For the rest of this section, fix a degree $d$ polynomial $p:\R^n \rightarrow \R$ and let $f:\R^n \rightarrow \{0,1\}$ defined as $f(x) = \sign(p(x))$ be the corresponding PTF we are trying to fool. For simplicity in this introduction, we consider the case where $p$ is multi-linear. The general case is similar but is slightly more nuanced. 

We wish to compare $\E_Z[f(Z)]$ to $\E_z[f(z)]$ where $z \sim N(0,1)^n$. Note that we can rewrite $z \sim N(0,1)^n$ as $z := \frac{1}{\sqrt{L}}\sum_{i=1}^L y_i$ where each $y_i$ is an independent standard gaussian. 
\paragraph{First attempt: A hybrid argument} A natural approach to analyze the PRG is to use a hybrid argument by replacing each $y_i$ with a $k$-moment matching Gaussian vector $Y_i$ as in our PRG output. That is, show the following sequence of inequalities:
\begin{multline}
    \E\left[f\left(\frac{y_1}{\sqrt{L}} + \frac{y_2}{\sqrt{L}} + \cdots + \frac{y_L}{\sqrt{L}}\right)\right] \approx \E\left[f\left(\frac{Y_1}{\sqrt{L}} + \frac{y_2}{\sqrt{L}} + \cdots + \frac{y_L}{\sqrt{L}}\right)\right] \\
    \approx \E\left[f\left(\frac{Y_1}{\sqrt{L}} + \frac{Y_2}{\sqrt{L}} + \cdots + \frac{y_L}{\sqrt{L}}\right)\right] \cdots \approx \E\left[f\left(\frac{Y_1}{\sqrt{L}} + \frac{Y_2}{\sqrt{L}} + \cdots + \frac{Y_L}{\sqrt{L}}\right)\right].
\end{multline}

Let $\lambda = 1/L$ and $y' = \sqrt{\lambda}(y_2 + \cdots + y_L)$. Note that $y' \sim N(0,1-\lambda)^n$. The first inequality in the sequence above, corresponding to a single-step of the hybrid argument is, equivalent to showing: $$\E\left[f(\sqrt{\lambda} y_1 + y')\right] \approx \E \left[f(\sqrt{\lambda} Y_1 + y')\right].$$

In other words, the above inequality is asking to show that $\E[f_{y'/\sqrt{1-\lambda}}(y_1)] \approx \E[f_{y'/\sqrt{1-\lambda}}(Y_1)]$. Intuitively, this is equivalent to showing that $k$-moment matching gaussians fool gaussian restrictions of a PTF with high probability over the restriction. Indeed, such a claim follows from our bounds on the derivatives of polynomials at random evaulation points (\cref{lm:derivatives}).

We say that a polynomial $p$ is \emph{well-behaved} at  a point $x$ if $$\|\nabla^{k+1} p(x)\| \leq (1/\varepsilon) \|\nabla^k p(x)\| \textnormal{ for all } k = 0,1,\ldots,d-1,$$
where $\varepsilon$ is a parameter that will be set to be slightly larger than $\sqrt{\lambda}$. We say $p$ is poorly-behaved at $x$ if the above condition does not hold. 

The starting point of the analysis is that if $p$ is well-behaved at $x$, then $\sign(p(x + \sqrt{\lambda} Y))$ is fooled by a moment-matching $Y$ with \emph{very good} error:

\begin{prop}[Direct Corollary of \cref{lm:moment}] \label{prop:moment-corollary}
Let $q:\R^n \rightarrow \R$ be a degree $d$ multi-linear polynomial and suppose that $q$ is well-behaved at a point $x$. Let $R = \eps^2/\lambda$. Then, for $y \sim N(0,1)^n$ and $Y$ a $dR$-moment matching gaussian,
$$\E_{y \sim N(0,1)^n}[ \sign(q(x + \sqrt{\lambda} y)] - \E_{Y}[\sign(q(x + \sqrt{\lambda} Y)] \leq 2^{-\Omega(R)}.$$ 
\end{prop}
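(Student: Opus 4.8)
\textbf{Proof proposal for Proposition~\ref{prop:moment-corollary}.}

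The plan is to reduce the claim about $\sign(q(x+\sqrt{\lambda}y))$ to a statement purely about moments of a polynomial, and then invoke the moment bound (\cref{lm:moment}, the result this proposition is billed as a ``direct corollary'' of). First I would set $g(y) := q(x + \sqrt{\lambda} y) = q_{x,\lambda}(y)$ (up to the harmless rescaling of $x$), which is again a degree-$d$ polynomial in $y$, and let $\mu := \E_{y \sim N(0,1)^n}[g(y)]$ be its mean. The key point is that $\sign(g(y))$ depends only on whether $g(y) \ge 0$, and that $\sign(g(\cdot))$ differs between the true gaussian $y$ and the moment-matching $Y$ only on the event that $g$ lands near zero. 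Concretely, I would compare both $\E_y[\sign(g(y))]$ and $\E_Y[\sign(g(Y))]$ to a common smooth proxy, or more directly bound the one-sided difference by $\Pr_y[|g(y) - \mu| \ge |\mu|] + \Pr_Y[|g(Y) - \mu| \ge |\mu|]$ plus a term handling the sign of $\mu$ itself — the standard ``anti-concentration vs.\ concentration'' split. If $|\mu|$ is large compared to the typical fluctuation of $g$, then $\sign(g)$ is essentially the constant $\sign(\mu)$ under both distributions; if $|\mu|$ is small, we still win because $g$ is so concentrated that it rarely strays by more than any fixed amount.

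The heart of the argument is the concentration estimate: I want $\E[(g(y) - \mu)^R]$ and $\E[(g(Y) - \mu)^R]$ to both be bounded by something like $(\text{small})^R$. This is exactly where well-behavedness at $x$ enters. The hypothesis $\|\nabla^{k+1} q(x)\| \le (1/\eps)\|\nabla^k q(x)\|$ for all $k$ controls the Taylor/Hermite coefficients of $q_{x,\lambda}$: the degree-$k$ part of $q_{x,\lambda}(y)$ has coefficients of size roughly $\lambda^{k/2} \|\nabla^k q(x)\|/k!$, so consecutive degrees shrink by a factor $\sqrt{\lambda}/\eps \le 1$ (this is why $\eps$ is chosen slightly larger than $\sqrt\lambda$). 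Feeding this into the hypercontractive / moment bound for gaussian polynomials (or its moment-matching analogue, valid once $Y$ matches $dR$ moments, since $(g-\mu)^R$ has degree $dR$) gives $\E[(g-\mu)^R] \le (C \cdot R^{d/2} \cdot \sqrt{\lambda}/\eps)^R \cdot (\text{normalization})$, and with $R = \eps^2/\lambda$ the factor $R^{d/2}\sqrt{\lambda}/\eps$ should come out to be a constant less than $1$, yielding the $2^{-\Omega(R)}$ tail by Markov applied to the $R$-th moment. Crucially, because the whole estimate is phrased through moments of a degree-$dR$ polynomial, the bound for $Y$ follows from the bound for $y$ verbatim — this is the ``extends easily to $k$-moment-matching distributions when $k \ge R\cdot\deg(p)$'' remark from the introduction.

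Assembling: choose the threshold cleanly by noting $|\E_y[\sign(g(y))] - \sign(\mu)| \le \Pr_y[\sign(g(y)) \ne \sign(\mu)] \le \Pr_y[|g(y)-\mu| \ge |\mu|]$, and likewise for $Y$; then $|\E_y[\sign(g(y))] - \E_Y[\sign(g(Y))]| \le \Pr_y[|g(y)-\mu|\ge|\mu|] + \Pr_Y[|g(Y)-\mu|\ge|\mu|]$. Each probability is $\le (\E[(g-\mu)^R])/|\mu|^R$; combined with the moment bound above this is $\le (c\, \sigma / |\mu|)^R$ where $\sigma$ is the ``effective scale'' $\|\nabla^0 q(x)\|$-ish quantity. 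There is a subtlety when $|\mu|$ is comparable to or smaller than $\sigma$, so the cleanest route is probably what \cref{lm:moment} already packages: a single statement that $\sign(q_{x,\lambda})$ has the claimed $2^{-\Omega(R)}$-fooling property for well-behaved $q$, with the $\mu$-vs-$\sigma$ case analysis done once and for all inside that lemma. The main obstacle I anticipate is precisely the bookkeeping in that case split and getting the constant in $R^{d/2}\sqrt\lambda/\eps < 1$ right — i.e., pinning down exactly how $\eps$ must exceed $\sqrt\lambda$ and how well-behavedness propagates from the derivatives of $q$ at $x$ to a clean per-degree-coefficient decay of $q_{x,\lambda}$ in the Hermite basis (the standard-to-Hermite basis change being the technical wrinkle the introduction flags). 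But given \cref{lm:moment} as a black box, the present proposition is a short deduction: set $g = q_{x,\lambda}$, set $R = \eps^2/\lambda$, check the degree-$dR$ moment-matching hypothesis, and apply the lemma.
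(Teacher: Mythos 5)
Your proposal follows essentially the same route as the paper: compare both $\sign(q(x+\sqrt{\lambda}y))$ and $\sign(q(x+\sqrt{\lambda}Y))$ to the sign of the constant term $c=q(x)$, bound $\Pr\left[|q(x+\sqrt{\lambda}y)-c|\geq |c|\right]$ by an $R$-th moment obtained via hypercontractivity from the coefficient decay that well-behavedness provides, and note that the bound transfers verbatim to a $dR$-moment-matching $Y$ --- which is exactly the paper's application of \cref{lm:moment} after checking $H_{\sqrt{R}}(q_{x,\lambda}/c)\leq \tfrac14$. The only blemishes are cosmetic: the per-degree decay factor is $\sqrt{R\lambda}/\eps$ (so one needs $R\lambda/\eps^2$ below a small constant, matching the paper's implicit slack between $\eps$ and $\sqrt{\lambda}$), not your $R^{d/2}\sqrt{\lambda}/\eps$, and the ``$|\mu|$ small'' case you worry about cannot arise because well-behavedness at $k=0$ already forces $|c|$ to dominate the nonconstant part.
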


This fact follows from the following argument. Since $q$ is well-behaved at $x$, this in particular implies a non-negligible lower bound on the size of the constant term $c$ of $h(t) := q(x + \sqrt{\lambda} t)$, relative to its other coefficients. In particular, $\sign(h(t))$ is \emph{nearly fixed to a constant} in the sense of \cref{th:rest}. Indeed, writing $h(t) = c + (h(t) - c)$, we see that $\sign(h(t))$ can only differ from $\sign(c)$ if we have a deviation with magnitude at least $|h(t)-c| \geq |c|$. We can use a concentration inequality to bound the probability that either $|h(y) - c| \geq |c|$ or $|h(Y) - c| \geq |c|$. In light of the bounds on $\|\nabla^k q(x)\|$, such a concentration inequality follows from moment bounds obtained from \emph{hypercontractivity}.

 The above lemma shows the first step of the hybrid argument and suggests the following strategy for analyzing the PRG. Define $Z_{-i} = Z - \sqrt{\lambda} Y_i$. We can now aim to show that the polynomial $p$ is well-behaved at $Z_{-i}$ with high probability. This indeed seems plausible as our \cref{lm:derivatives} indeed shows that when $Z$ is standard gaussian, the polynomial $p$ is well-behaved at $Z$ with high probability. 

Immediately, there are two obstacles for this approach: 
\begin{itemize}
    \item First, \cref{lm:derivatives} works only for truly random gaussian and not for our pseudorandom $Z_{-i}$. 
    \item Second, even if we argue that $p$ is likely to be well-behaved at $Z_{-i}$, we cannot apply a union bound over $i$. The error guarantee in \cref{lm:derivatives}, is $ \gg \sqrt{\lambda}$; whereas, we have $L = 1/\lambda$ choices of $i$, so we cannot use such a straightforward union-bound argument to replace each $Y_i$ with a $y_i$.
    
\end{itemize}

The second issue is especially problematic as the error probability in \cref{lm:derivatives} cannot be improved, at least in that variant; the probability that the derivatives don't grow too fast is not small compared to $L = 1/\lambda$.


\paragraph{Beating the union bound.} 
\ignore{For brevity, say that $p$ is \emph{well-behaved} at  a point $x$ if $$\|\nabla^{k+1} p(x)\| \leq (1/\varepsilon) \|\nabla^k p(x)\| \textnormal{ for all } k = 0,1,\ldots,d-1,$$
where $\varepsilon$ is a parameter that will be set to be roughly $\sqrt{\lambda}$. We say $p$ is poorly-behaved at $x$ if the above condition does not hold. If $p$ is well-behaved at $x$, then we know that $\sign(f(x + \sqrt{\lambda} Y))$ is fooled by a moment-matching $Y$ with \emph{very good} error. }

Roughly speaking, the main insight in going beyond the \emph{union bound} obstacle mentioned above is as follows. There are two sources of error in the naive hybrid argument outlined above: (1) The probability of failure coming from $p$ being poorly-behaved at the points $Z_{-i}$. (2) The error coming from applying \cref{prop:moment-corollary} to replace a $Y_i$ with $y_i$ when $p$ is well-behaved at $Z_{-i}$.

Note that we have very good control on the error of type (2) above: we could make it be much smaller than $1/L$ by increasing the amount of independence $k$. We will exploit this critically. We will complement this by showing that even though a naive union bound would be bad for errors of type (1) above, it turns out that we don't have to incur this loss: we (implicitly) show that 
$\Pr\left(\forall i,\; p \text{ is well-behaved at } Z_{-i}\right) \approx 1 - O(\eps d^3)$. 
We do so by checking only that $p$ is well-behaved at the single point $Z$ (in a slightly stronger sense) and then we conclude that $p$ is also highly-likely to be well-behaved at each of the ``nearby" points $Z_{-i}$. Intuitively, this is what allows us to circumvent the union bound in the hybrid argument. However, it would be difficult to actually carry out the analysis as stated this way – we use a sandwiching argument to sidestep the complicated conditionings which would arise in this argument as stated.

We proceed to describe the sandwiching argument. 
We wish to lower-bound the PTF $\sign(p(x))$ by $\sign(p(x)) \cdot g(x)$, where $g(x)$ is some ``mollifier" function taking values in $[0,1]$.
The role of $g(x)$ is roughly to ``test" whether $p$ is well-behaved at $x$; we ideally want $g(x) = 1$ at points $x$ where $p$ is well-behaved and $g(x) = 0$ at points $x$ where $p$ is poorly-behaved. However, we also need $g(x)$ to be smooth, so there will be some intermediate region of points for which $g(x)$ yields a non-informative, non-boolean value. 


We set $g(x)$ to be a smoothed version of the indicator function
$$g(x) \approx  \prod_{k=0}^{d-1} \ind{\|\nabla^{k+1} p(x) \| \leq \frac{1}{\eps} \|\nabla^{k} p(x) \|}, $$
which tests whether the derivatives of $p$ at $x$ have controlled growth in the sense of \cref{lm:derivatives}. 
More specifically, we set
$$ g(x) := \prod_{k=0}^{d-1} \rho \left( \log\left( \frac{1}{ 16 \eps^2}\frac{ \|\nabla^{k} p(x) \|^2 }{\|\nabla^{k+1} p(x) \|^2} \right) \right),
$$
where $\rho(t) : \R \rightarrow [0,1]$ is some smooth univariate function with $\rho(t) = 0$ for $t \leq 0$ and $\rho(t) = 1$ for $t \geq 1$.

Now, for every point $x \in \R^n$ we have
\begin{align*}
    \sign(p(x)) \geq \sign(p(x)) g(x).
\end{align*}
Furthermore, under truly-random gaussian inputs $z \sim N(0,1)^n$ we have
\begin{align*}
    \E_{z} \sign(p(z)) g(z) \geq \E_z \sign(p(z)) - \E_z |g(z) - 1| \geq \E_z \sign(p(z))  - O( \eps d^3),
\end{align*}
where the final inequality here follows from \cref{lm:derivatives}.
Combining these, we get that
\begin{align*}
    \E_Z \sign(p(Z)) \geq \E_z \sign(p(z)) - O( \eps d^3) - |\E_Z \sign(p(Z)) g(Z) - \E_z \sign(p(z)) g(z)|.
\end{align*}
Note that we can similarly obtain an upper-bound for $ \E_Z \sign(p(Z))$ by repeating this argument on the polynomial $-p(x)$.

Thus, it suffices to bound $|\E_Z \sign(p(Z)) g(Z) - \E_z \sign(p(z)) g(z)|$. Having introduced the mollifier, we can now afford to do so by a standard hybrid argument. We represent $z$ as $z := \frac{1}{\sqrt{L}}\sum_{i=1}^L y_i$ and  recall that $Z$ is  of a the form $Z = \frac{1}{\sqrt{L}}\sum_{i=1}^L Y_i$. We can replace each $Y_i$ with $y_i$ and get 
$$ |\E_Z \sign(p(Z)) g(Z) - \E_z \sign(p(z)) g(z)| \leq \gamma L,$$ 
where $\gamma$ is the (quite small) error coming from the following lemma.
\begin{lem}\label{lm:error1}
There exists a constant $c$ such that the following holds for $\lambda \leq \epsilon^2/R d^c$. For any fixed vector $x \in \R^n$, $Y$ a $dR$-moment-matching gaussian vector, and $y \sim N(0,1)^n$, 
$$ |\E_Y \sign(p(x + \sqrt{\lambda} Y)) g(x + \sqrt{\lambda} Y) - \E_y \sign(p(x + \sqrt{\lambda} y)) g(x + \sqrt{\lambda} y) | \leq \gamma = 2^{-\Omega(R)}.$$
\end{lem}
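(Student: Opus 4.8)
The plan is to view both expectations as averages of a single test-function applied to $x + \sqrt{\lambda} Y$ resp.\ $x + \sqrt{\lambda} y$, and to show this test-function is well-approximated by a bounded-degree polynomial up to a small $L^1$-error under $N(0,1)^n$. Concretely, define $F(w) := \sign(p(w))\, g(w)$. Since $Y$ is $dR$-moment-matching and $p,g$ both ultimately depend on $w$ through polynomials of bounded degree, the strategy is the standard ``approximate $F$ by a low-degree polynomial, then use moment-matching'' three-term split:
\[
\bigl|\E_Y F(x+\sqrt{\lambda}Y) - \E_y F(x+\sqrt{\lambda}y)\bigr| \le \E_Y |F - \tilde F| + \bigl|\E_Y \tilde F - \E_y \tilde F\bigr| + \E_y |F - \tilde F|,
\]
where $\tilde F$ is a suitable degree-$\le dR$ polynomial approximant to $w \mapsto F(x+\sqrt{\lambda}w)$. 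The middle term vanishes by the moment-matching hypothesis (choosing the degree budget so that $\deg \tilde F \le dR$), so everything reduces to controlling the two tail/approximation terms, and the key point is that these are both governed by the \emph{same} mechanism.

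The heart of the argument is to show: for every fixed $x$, with the single caveat built into $g$, the function $h(t) := p(x + \sqrt{\lambda} t)$ is ``nearly fixed to a constant'' whenever $g(x+\sqrt\lambda t)$ is meaningfully nonzero. This is exactly the content sketched after Proposition 1.9: the support of $g$ forces $p$ to be well-behaved at the base point, hence (via the chain of inequalities $\|\nabla^k h(0)\| \approx \lambda^{k/2}\|\nabla^k p(x)\|$ and well-behavedness) the constant coefficient $c$ of $h$ dominates. So I would run the argument in two regimes. First, where $g \equiv 1$ on a large-probability neighborhood: there $\sign(h(t))$ agrees with $\sign(c)$ except on an event of probability $2^{-\Omega(R)}$ by hypercontractive concentration (Lemma~\ref{lm:moment}-style moment bounds on $h(t) - c$, using $R = \eps^2/\lambda$ and the well-behavedness bounds on $\|\nabla^k p(x)\|$). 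Second, I take $\tilde F$ to be, say, a truncated Hermite expansion of $w \mapsto \sign(p(x+\sqrt\lambda w)) g(x+\sqrt\lambda w)$ — but it is cleaner to instead directly build $\tilde F$ from a polynomial that $L^2(N(0,1)^n)$-approximates $\sign(h)$ on the ``good'' event and is bounded elsewhere, multiplied by a low-degree polynomial approximant of $g$. The approximation error of $\tilde F$ under $y \sim N(0,1)^n$ is then bounded by (i) the probability mass of the bad event, $2^{-\Omega(R)}$, plus (ii) the intrinsic error of approximating $g$ by a degree-$D$ polynomial; one must take $D$ large enough (still $\ll dR$ after accounting for $\deg p$) that this is also $2^{-\Omega(R)}$, which is where the smoothness of $\rho$ and the logarithmic form of its argument are used.

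The main obstacle I anticipate is the bookkeeping around $g$: unlike $\sign(p)$, the mollifier $g$ is a product of $d$ smooth functions of \emph{rational} quantities $\|\nabla^k p(x+\sqrt\lambda w)\|^2 / \|\nabla^{k+1} p(x+\sqrt\lambda w)\|^2$, which are not polynomials in $w$. So one cannot literally apply moment-matching to $g$; instead one must argue that on the relevant good event the denominators are bounded below (again from well-behavedness at the base point, propagated to nearby points), clear denominators, and approximate the resulting smooth-composed-with-polynomial object by a polynomial of controlled degree with exponentially small error — this is the step that pins down the precise constant $c$ in the hypothesis $\lambda \le \eps^2/Rd^c$. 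A secondary subtlety is ensuring the total degree of $\tilde F$ stays within the $dR$ budget: $\sign(h)$ contributes roughly $R$ after rescaling by $\deg p \le d$, and $g$'s polynomial approximant must be squeezed into the remaining room, so the degree of the $\rho$-approximant and the number of factors $d$ both have to be tracked. Once these are in place, summing $\gamma = 2^{-\Omega(R)}$ over the $L = 1/\lambda$ hybrid steps in the parent argument is affordable precisely because $R$ can be taken a large constant multiple of $\log L$.
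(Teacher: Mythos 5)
Your high-level strategy (replace the test function by a degree-$\le dR$ polynomial, use moment-matching to kill the middle term, bound the two $L^1$-approximation errors by concentration) is the right shape, but the proposal has a genuine gap in how the case split is organized, and in particular it conflates two different well-behavedness statements.

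The mollifier $g$ evaluated at $x+\sqrt\lambda w$ tests whether $p$ is well-behaved at the \emph{perturbed} point $x+\sqrt\lambda w$, not at the base point $x$. When you write ``the support of $g$ forces $p$ to be well-behaved at the base point,'' that implication does not hold: $x$ can be poorly-behaved while, for some $w$, $x+\sqrt\lambda w$ is well-behaved. Since the lemma must hold for \emph{every} fixed $x$, you need to say what happens when $x$ itself is poorly-behaved, and this is precisely the case your sketch does not really address. The paper resolves this by making the case split explicit on the behavior of $\phi$ at the fixed point $x$. If $x$ is poorly-behaved for $\phi$ — say $\|\nabla^{k_0}\phi(x)\| < \eps\|\nabla^{k_0+1}\phi(x)\|$ with $k_0$ maximal — then no polynomial approximation of $F$ is attempted at all: instead, \cref{lm:moments} gives an $R$-th moment bound showing that both $\|\nabla^{k_0}p(x+\sqrt\lambda Y)\|$ stays $\le 2\eps\|\nabla^{k_0+1}\phi(x)\|$ and $\|\nabla^{k_0+1}p(x+\sqrt\lambda Y)\|$ stays $\ge \tfrac12\|\nabla^{k_0+1}\phi(x)\|$ except with probability $2^{-\Omega(R)}$, forcing $g(x+\sqrt\lambda Y) = 0$ identically (not approximately) on that event, for both $y$ and any $dR$-moment-matching $Y$. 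Your uniform-approximation route would stall here: without well-behavedness of $\phi$ at $x$, you cannot prove the needed concentration of the derivative-norm ratios around their means, so you cannot control the Taylor-remainder of $g$, and you cannot argue that $\sign(h)$ has a dominant constant term either.

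Your treatment of the remaining (well-behaved-$x$) case is essentially aligned with the paper, with one simplification you should adopt: do not try to polynomial-approximate $\sign(p(x+\sqrt\lambda w))$ at all. Once $x$ is well-behaved for $\phi$, \cref{lm:moment}-style hypercontractive concentration shows $\sign(p(x+\sqrt\lambda Y))$ equals the constant $\sign(\phi(x))$ except with probability $2^{-R}$ (for $y$ and for $dR$-moment-matching $Y$ alike), so one only needs to show $Y$ fools $g$, not $\sign\cdot g$. The paper does this by Taylor-expanding $g$ in the $2d$ auxiliary variables $s_i=\|\nabla^{i-1}p(x+\sqrt\lambda w)\|^2$, $t_i=\|\nabla^{i}p(x+\sqrt\lambda w)\|^2$ around their means $\|\nabla^{i-1}\phi(x)\|^2$, $\|\nabla^i\phi(x)\|^2$, using Fa\`a di Bruno to control the derivatives of $\sigma(\log s - \log t)$, a moment-defined good event $E$ on which the relative deviations of $s_i,t_i$ are $\le\delta$, and the pointwise bound $|g-\ell|\le 2^{-R}+\1_{\bar E}+|\ell|\1_{\bar E}$ to convert to an $L^1$ bound valid under both $y$ and $Y$. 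That is the concrete version of the ``clear denominators and approximate'' step you gestured at; the well-behavedness of $\phi$ at $x$ is exactly what makes $\P(\bar E)$ exponentially small. So: the proposal is repairable by inserting the explicit case analysis on $x$, but as written the poorly-behaved case is not handled and the well-behaved case over-complicates the sign factor.
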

Technically speaking, the above lemma is where our intuition on going around the union bound is quantified, allowing us to use the hybrid argument. 
We briefly outline our proof of this lemma, where for the purpose of illustration we continue with the simplifying assumption that the polynomial $p$ is multilinear. 

The proof is by a case analysis on the behavior of $p$ at the the fixed point $x$. In the multilinear case it suffices to consider the derivatives $\nabla^k p(x)$; in the general case we need to consider something slightly different.
\begin{itemize}
        \item Case 1: $p$ is well-behaved at $x$, i.e., $\|\nabla^{k+1} p(x) \|  \leq (1/\eps) \|\nabla^k p(x) \|$ for all $k$. 
        \begin{itemize}
        \item We can use \cref{lm:moment} in this case to conclude that $\sign(p(x+\sqrt{\lambda} y))$, $\sign(p(x+\sqrt{\lambda} Y))$ are both almost constant with error $2^{-\Omega(R)}$.
        \item So, it remains to show that $Y$ fools $g(x+\sqrt{\lambda} y)$. We approximate $g$ by a low-degree polynomial in $y$ using a Taylor-truncation argument. Our assumption on the controlled growth of derivatives $\|\nabla^{k}p(x)\|$ allows us to bound the truncation error by bounding the higher-moments of the deviations $\|\nabla^k p(x+\sqrt{\lambda} Y) \| - \|\nabla^k p(x) \|$.
        \end{itemize}
        \item Case 2: $p$ is not well-behaved at $x$; let $k_0$ be the largest $k$ such that $\|\nabla^{k_0+1}p(x) \| > (1/\eps)\|\nabla^{k_0} p(x) \| $.
        \begin{itemize}
        \item Intuitively, this says that the polynomial $p$ is well behaved at degree above $k_0$, but not at degree $k_0$. This allows us to show, via an $R$-th moment bound, that both 
        \begin{itemize}
            \item $\|\nabla^{k_0} p(x+\sqrt{\lambda}Y) \| \leq 2 \eps \|\nabla^{k_0+1} p(x)\|$
            \item $\|\nabla^{k_0+1} p(x+\sqrt{\lambda}Y) \| \geq \frac{1}{2} \|\nabla^{k_0+1} p(x)\|$
        \end{itemize}
        are highly likely. Thus, it is highly likely that 
        $$\|\nabla^{k_0} p(x+\sqrt{\lambda}Y)\| \leq 4 \eps \|\nabla^{k_0+1} p(x+\sqrt{\lambda}Y)\|.$$ The latter means $p$ is still sufficiently poorly-behaved at the point $x + \sqrt{\lambda}Y$ that the mollifier classifies it correctly as $g(x+\sqrt{\lambda}Y) = 0.$
        
        \end{itemize}
\end{itemize}


\ignore{
\mnote{It could be good to move the three bullets sketching the proof of the lemma proof to the start of the proof in the main body.}
{ \color{red} Zander: I think if we want additional overview of the proof of the lemma, it is best to put it here -- so I did. Also, this is the only place in the proof overview where we make use of the multilinear simplifying assumption, so I've moved the remark about that down to here.
}}

\begin{ignore}
{\color{red} Zander: Outline for this subsection.

\begin{itemize}
    \item Construction: sum of k-moment-matching gaussians
    \item Analysis: at this point, explain how to at least analyze the idealization $\sqrt{1-\lambda} X + \sqrt{\lambda} Y$ for $X \sim N(0,1)^n$ and $Y$ is $k$-moment-matching.
    \begin{itemize}
        \item Discuss strategy in MZ
        \item Discuss and contrast with strategy in Kane11
        \item Discuss Improvement in OST20
    \end{itemize}
    \item Define hypervariance, and give equivalent operator definition.
    \item Discuss the main use of hypervariance: moment bounds (which are obtained via hypercontractivity). 
    \item With this in mind, give perspective on how the hypervariance reduction lemma can be though of as an analogue of the baby switching lemma.
    \item Elaborate quantitatively on OST20 strategy to analyze $\sqrt{1-\lambda} X + \sqrt{\lambda} Y$ via hypervariance bounds.
    \item At this point, discuss the additional complications involved in analyzing the actual generator (1. we don't have a derandomization of the hypervariance-reduction lemma and 2. we can't afford a union bound.)
    \item Say that, in the spirit of Kane11 and OST20, we handle both of these issues using the a sandwiching argument, but we manage to introduce some simplifications). Give some perspective that the sandwiching argument allows us (in some sense) to test only once whether the hypervariance lemma holds at $Z$, and if it is well-behaved at $Z$ we infer that it must also be nearly as well-behaved at each of the nearby points $Z_{-i}$.
    \item For simplicity we describe the sandwiching argument for the multilinear case:
    \begin{itemize}
        \item We seek to lower bound the generator's value on the polynomial step function $\ind{p(x) \geq 0 }$ -- explain why this is sufficient. (This is a technicality but still too important to ignore even here. The sandwiching argument does not work with the typical definition of $\sign(p(x))$.) We should just PTF to be $\ind{p(x) \geq 0}$ and call that the sign function. 
        \item We define the mollifier function g(x), which softly checks whether the polynomial $p(x)$ is well-behaved at $x$, with respect to our more specific hypervariance-reduction lemma which compares the 2-norm of each consecutive pair of derivatives.  
        \item Highlight some specific properties of the mollifier: it takes values in $[0,1]$, it takes a value of $0$ if the polynomial is sufficiently bad, and it is smooth. Say that we will set $\eps \approx \sqrt{\lambda}$.
        \item Describe the lower-sandwiching argument, from which we conclude that the error is at most $O(\eps d^3)$ + $L \cdot \gamma$, where $\gamma$ is the error with which a single $k$-moment-matching $Y$ fools $\sign(p(z+\sqrt{\lambda} Y)) g(z+\sqrt{\lambda} Y)$ for arbitrary fixed $z$. Thus, we can afford to pay an error of $\gamma$ $L$ times, but we could not pay an error of size $\gg \sqrt{\lambda} = 1/\sqrt{L}$ $L$ times.
        \end{itemize}
    \item Describe how to analyze the error for $\sign(p(z+\sqrt{\lambda} Y)) g(z+\sqrt{\lambda} Y)$, when $Y$ is $k$-moment-matching.
    \begin{itemize}
        \item fix a truncation threshold $R = \Theta(\log(1/\gamma)$.
        \item Case 1: $\|\nabla^k p(z) \| \geq \eps \|\nabla^{k+1} p(z) \| $ for all $k$.
        \item In this case, the non-constant part of $p(z+\sqrt{\lambda} z)$ rarely exceeds it's constant term in magnitude. So  $\sign(p(z+\sqrt{\lambda} Y))$ is constant with high probability, and since this is true by an $R-th$ moment bound, it is also true if $Y$ is $dR$-moment-matching.
        \item So, it remains to show that $Y$ fools $g(z+\sqrt{\lambda} Y)$. We approximate $g$ by a low-degree polynomial in $Y$ using a Taylor-truncation argument. Our assumption on the well-behavedness of $\|\nabla^k p(z) \|$ allows us to bound the Taylor-truncation error by bounding the higher moments of the deviations $\|\nabla^k p(z+\sqrt{\lambda} Y) \| - \|\nabla^k p(z) \|$.
        \item Case 2: $k_0$ is the largest $k$ such that $\|\nabla^{k_0} p(z) \| < \eps \|\nabla^{k_0+1} p(z) \|$.
        \item Intuitively, this says that the polynomial $p$ is well behaved at degree above $k_0$, but not at degree $k_0$. This allows us to show, via an $R$-th moment bound, that both 
        \begin{itemize}
            \item $\|\nabla^{k_0} p(z+\sqrt{\lambda}Y) \| \leq 2 \eps \|\nabla^{k_0+1} p(z)\|$ 
            \item $\|\nabla^{k_0+1} p(z+\sqrt{\lambda}Y) \| \geq \frac{1}{2} \|\nabla^{k_0+1} p(z)\|$
        \end{itemize}
        are highly likely. Thus, it is highly likely that $\|\nabla^{k_0} p(z+\sqrt{\lambda}Y)\| \leq 4 \eps \|\nabla^{k_0+1} p(z+\sqrt{\lambda}Y)\|$, and so $g(z+\sqrt{\lambda}Y) = 0$. Since this is proved by an $R$-th moment bound, it is true even if $Y$ is merely $k$-moment-matching. 
    \end{itemize}
\end{itemize}
}
\end{ignore}

\subsection{Slow-growth of derivatives and simplification under restrictions}
The proof of \cref{lm:derivatives} is iterative and is relatively simple given Kane's \emph{relative anti-concentration inequality} for degree $d$ polynomials \cite{Kane13} developed in the context of studying the \emph{Gotsman-Linial} conjecture for PTFs. 

\cite{Kane13} shows that for any degree $d$ polynomial, and $x,y \sim N(0,1)^n$ with probability at least $1-\delta$, we have $|\langle y, \nabla p(x) \rangle| \leq (d^2/\delta) |p(x)|$. As $y$ in the above statement is independent of $x$, for any $x$, $\langle y, \nabla p(x) \rangle$ is distributed as $N(0, \|\nabla p(x)\|^2)$. This says that the inequality is essentially equivalent to saying that with probability at least $1-\delta$ over $x$, we have $\|\nabla p(x)\|^2 \leq O(d^2/\delta) |p(x)|$. The latter can be seen as the inequality corresponding to $k=1$ in the statement of \cref{lm:derivatives}. The full proof of the lemma is via iteratively applying the above argument using a  vector-valued generalization of Kane's inequality. 

Next, it is not too hard to prove \cref{th:rest} given \cref{lm:derivatives}. For illustration, suppose that we have a degree $d$ multi-linear polynomial $p$, and write $f(t) := p(\sqrt{1-\lambda} t)$. Then, by elementary algebra\footnote{If $p$ is multi-linear, then the Hermite expansion (see \cref{sec:prelims}) is just $p(x) = \sum_{\alpha \in \{0,1\}^n } \hat{p}(\alpha) h_{\alpha}(x) = \sum_{I\subseteq [n]} \hat{p}(I) \prod_{i \in I } x_i$. We can prove the identity for each monomial and use additivity.}, we have the identity
\begin{equation}\label{eq:id1}
p_x(y) = p\left(\sqrt{1-\lambda} x + \sqrt{\lambda} y\right) = \sum_{\alpha} \partial^\alpha f(x) \left(\frac{\lambda}{1-\lambda}\right)^{|\alpha|/2} y^{\alpha}.
\end{equation}

Now, by \cref{lm:derivatives}, with probability $1-\delta$ over $x$, we have $\|\nabla^k f(x)\| \leq O(d^3/\delta) \|\nabla^{k-1} f(x)\|$, for all $k$. Thus, if we take $\lambda \ll \delta^2 / (R^2 d^6)$, the factor of $\lambda$ will kill the growing derivatives leading to a bound on the higher-order moments of $p_x(y) - p_x(0)$ via hypercontractivity. These moment bounds in turn imply that $|p_x(y) - p_x(0)| < |p_x(0)|$ with high probability over $y$, and hence that $\sign(p_x(y)) = \sign(p_x(0))$ with high probability over $y$. 

Notice that \cref{eq:id1} is essentially a Taylor expansion of $p$ at $\sqrt{1-\lambda}x$: it expresses the function $p_x(y)$ as a polynomial in $y$ in the standard basis, whose coefficients are determined by the derivatives of $p$ at $\sqrt{1-\lambda}x$. 
We want to do something similar in the general case, but in the Hermite basis; for non-multi-linear polynomials these two bases no longer coincide. 
So, in the general case, we rely on the following identity, which we regard as an analogue of the Taylor expansion for the Hermite basis. 

\begin{lem}[See \cref{sec:prelims}]
Let $f(y) = \sum_{\alpha} \hat{f}(\alpha) h_{\alpha}(y).$ Then
$$ f\left( \sqrt{1-\lambda} x + \sqrt{\lambda} y  \right) = \sum_{\alpha} \frac{\partial^{\alpha} g(x)}{\sqrt{\alpha !}} \left(\frac{\lambda}{1-\lambda} \right)^{|\alpha|/2} h_{\alpha}(y) ,$$
where $g(x) := U_{\sqrt{1-\lambda}} f(x) = \sum_{\alpha} \hat{f}(\alpha) (1-\lambda)^{|\alpha|/2} h_{\alpha}(x).$
\end{lem}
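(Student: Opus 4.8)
The plan is to prove the identity by reducing to the one-dimensional, single-Hermite-function case and then extending by multiplicativity and linearity. First I would recall the key facts about Hermite polynomials that make this work. Let $h_m$ denote the normalized univariate Hermite polynomials (orthonormal w.r.t.\ the Gaussian measure), and recall two standard facts: (i) the \emph{addition formula} $h_m(a u + b v) = \sum_{j+k=m} \binom{m}{j}^{1/2} a^j b^k \, h_j(u) h_k(v)$ whenever $a^2 + b^2 = 1$; and (ii) the action of the Ornstein--Uhlenbeck / noise operator, $U_\rho h_m = \rho^m h_m$, equivalently $\E_{v \sim N(0,1)} h_m(\rho u + \sqrt{1-\rho^2} v) = \rho^m h_m(u)$. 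In the multivariate case $h_\alpha(x) = \prod_i h_{\alpha_i}(x_i)$ and $U_\rho h_\alpha = \rho^{|\alpha|} h_\alpha$, and $\partial^\alpha h_\beta = \sqrt{\beta!/(\beta-\alpha)!}\, h_{\beta - \alpha}$ (with the convention that this vanishes unless $\alpha \le \beta$ coordinatewise), since $h_m' = \sqrt{m}\, h_{m-1}$.

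The core computation is then the following. Fix a single multi-index $\beta$ and consider $f = h_\beta$. Applying the addition formula coordinatewise with $a = \sqrt{1-\lambda}$, $b = \sqrt{\lambda}$ gives
\[
h_\beta\!\left(\sqrt{1-\lambda}\, x + \sqrt{\lambda}\, y\right) = \sum_{\alpha \le \beta} \binom{\beta}{\alpha}^{1/2} (1-\lambda)^{|\beta - \alpha|/2} \lambda^{|\alpha|/2}\, h_{\beta-\alpha}(x)\, h_\alpha(y),
\]
where $\binom{\beta}{\alpha}^{1/2} := \prod_i \binom{\beta_i}{\alpha_i}^{1/2}$. On the other side, with $g = U_{\sqrt{1-\lambda}} h_\beta = (1-\lambda)^{|\beta|/2} h_\beta$, we compute $\partial^\alpha g(x) = (1-\lambda)^{|\beta|/2} \sqrt{\beta!/(\beta-\alpha)!}\, h_{\beta-\alpha}(x)$, so that
\[
\frac{\partial^\alpha g(x)}{\sqrt{\alpha!}} \left(\frac{\lambda}{1-\lambda}\right)^{|\alpha|/2} = (1-\lambda)^{(|\beta|-|\alpha|)/2} \lambda^{|\alpha|/2} \sqrt{\frac{\beta!}{\alpha!\,(\beta-\alpha)!}}\, h_{\beta-\alpha}(x).
\]
Since $|\beta| - |\alpha| = |\beta - \alpha|$ and $\binom{\beta}{\alpha}^{1/2} = \sqrt{\beta!/(\alpha!\,(\beta-\alpha)!)}$, the two expressions agree term by term, which establishes the identity for $f = h_\beta$. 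The general case follows by writing $f = \sum_\beta \hat f(\beta) h_\beta$, applying the above to each $h_\beta$, and using linearity of $\partial^\alpha$ and of $U_{\sqrt{1-\lambda}}$ together with the (finite, since $f$ is a polynomial) interchange of sums; note the coefficient of $h_\alpha(y)$ on the right assembles exactly into $\partial^\alpha\big(\sum_\beta \hat f(\beta) (1-\lambda)^{|\beta|/2} h_\beta\big)(x)/\sqrt{\alpha!} \cdot (\lambda/(1-\lambda))^{|\alpha|/2} = \partial^\alpha g(x)/\sqrt{\alpha!}\cdot(\lambda/(1-\lambda))^{|\alpha|/2}$.

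I expect no serious obstacle here — this is essentially a bookkeeping exercise in Hermite identities — but the one place to be careful is the normalization conventions: whether $h_m$ denotes the probabilists' Hermite polynomials or their $L^2$-normalized versions, and correspondingly whether the derivative rule is $h_m' = h_{m-1}$ or $h_m' = \sqrt m\, h_{m-1}$. The factor $\sqrt{\alpha!}$ in the statement signals that the normalized convention is in force, so I would fix that convention explicitly at the outset (matching \cref{sec:prelims}) and track the $\sqrt{\cdot!}$ factors through the addition formula, the derivative rule, and the multinomial coefficient, checking that they combine into $\binom{\beta}{\alpha}^{1/2}$ as above. The only other minor point is justifying that one may expand coordinatewise and multiply the univariate addition formulas — this is immediate since both $h_\alpha$ and the map $x \mapsto \sqrt{1-\lambda}x + \sqrt\lambda y$ act coordinatewise, so the multivariate addition formula is just the product of $n$ univariate ones.
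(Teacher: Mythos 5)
Your proposal is correct and follows essentially the same route as the paper: both reduce to a single Hermite basis element via the coordinatewise univariate addition formula (which the paper derives from the generating function and you cite as standard), use the derivative rule $\partial^\alpha h_\beta = \sqrt{\beta!/(\beta-\alpha)!}\,h_{\beta-\alpha}$ together with $U_{\sqrt{1-\lambda}}h_\beta = (1-\lambda)^{|\beta|/2}h_\beta$, and then assemble the general case by linearity, recognizing the coefficient of $h_\alpha(y)$ as $\partial^\alpha g(x)/\sqrt{\alpha!}$ times $(\lambda/(1-\lambda))^{|\alpha|/2}$. The bookkeeping and normalization conventions you track match the paper's, so no gap remains.
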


Hermite polynomials are such a ubiquitous tool used in such a wide range of fields that it seems unlikely that such an identity is new. However, we are not aware of any previous appearance of such an identity in the literature (at least in the body of work on PTFs) and we provide a proof. 

\paragraph{Hypervariance reduction.} We next remark on the relation between \emph{slow-growth of derivatives} (as in \cref{lm:derivatives}) and \emph{hypervariance reduction} as studied and introduced in \cite{OST20}. The latter plays a similar role in their paper as the former does in this work. However, \cref{lm:derivatives} importantly has only polynomial dependence on the degree $d$ and is also much more conducive to our analysis of the PRG.

Recall the Hermite expansion (see \cref{sec:prelims}) of polynomials: A degree $d$ polynomial $p:\R^n \rightarrow \R$ can be uniquely expressed as 
$$ p(y) := \sum_{|\alpha| \leq d} \hat{p}(\alpha) h_{\alpha}(y), $$
where $\alpha \in \N^n$ denotes a multi-index and $h_\alpha(y)$ is the $\alpha$'th Hermite polynomial. The \emph{hypervariance} and  \emph{normalized hypervariance} of a polynomial introduced in \cite{OST20} are defined as follows:

\begin{definition}
For a polynomial $p:\R^n \rightarrow \R$ of the form $ p(y) := \sum_{\alpha} \hat{p}(\alpha) h_{\alpha}(y)$, define its \emph{hypervariance}, $\hvar_R(\;)$, and \emph{normalized hypervariance}, $H_R(\;)$, as 
$$\hvar_R(p) := \sum_{\alpha \neq 0} \hat{p}(\alpha)^2 R^{2|\alpha|},\;\; H_R(p) := \frac{\hvar_R(p)}{\hat{p}(0)^2}.$$
\ignore{define its \emph{hypervariance} as
$$ \textnormal{HyperVar}_R(p) := \sum_{\alpha \neq 0} \hat{p}(\alpha)^2 R^{2|\alpha|}. $$
We define the \emph{normalized hypervariance} as
$$ H_{R}(p) := \frac{\textnormal{HyperVar}_R(p) }{\hat{p}(0)^2} .$$}
\end{definition}

Intuitively, if the normalized hypervariance $H_R(p)$ of a polynomial is small for a large $R$, then it means that the \emph{weights} of the higher-order Hermite coefficients of $p$ have a geometric decay. 

\cite{OST20} showed that for any polynomial $p$, for a suitable $\lambda > 0$, a gaussian restriction of $p$ will have small normalized hypervariance with high probability. Specifically, they showed that if $\lambda = d^{-O(\log d)}$, then $H_R(p_{x,\lambda})$ is bounded with high probability over $x \sim N(0,1)^n$. They also asked whether this property holds when $\lambda = d^{-O(1)}$ instead of being quasi-polynomially small in $d$. \cref{lm:derivatives} implies this conjecture without too much difficulty: 

\begin{lem}\label{lm:hypervarintro}
For any degree $d$ polynomial $p$ and $\lambda, \delta > 0$, the following holds. Except with probability $\delta$ over $x \sim N(0,1)^n$, the normalized hypervariance $H_R(p_{x,\lambda}) = O(\lambda d^6 R^2/\delta^2)$. 
\end{lem}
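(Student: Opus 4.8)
The plan is to derive \cref{lm:hypervarintro} from \cref{lm:derivatives} by means of the Hermite-basis Taylor expansion identity stated just above it. Write $f(y) = \sum_\alpha \hat f(\alpha) h_\alpha(y)$ for our degree-$d$ polynomial (renaming $p$ as $f$ to match that lemma), and let $g := U_{\sqrt{1-\lambda}} f$, so that by the identity,
$$ f_{x,\lambda}(y) = \sum_\alpha \frac{\partial^\alpha g(x)}{\sqrt{\alpha!}} \left(\frac{\lambda}{1-\lambda}\right)^{|\alpha|/2} h_\alpha(y). $$
Reading off the Hermite coefficients, $\widehat{f_{x,\lambda}}(\alpha) = \frac{\partial^\alpha g(x)}{\sqrt{\alpha!}} \left(\frac{\lambda}{1-\lambda}\right)^{|\alpha|/2}$, and the constant term is $\widehat{f_{x,\lambda}}(0) = g(x)$. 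Hence
$$ \hvar_R(f_{x,\lambda}) = \sum_{\alpha \neq 0} \frac{(\partial^\alpha g(x))^2}{\alpha!} \left(\frac{\lambda}{1-\lambda}\right)^{|\alpha|} R^{2|\alpha|} = \sum_{k=1}^d \left(\frac{\lambda R^2}{1-\lambda}\right)^{k} \sum_{|\alpha| = k} \frac{(\partial^\alpha g(x))^2}{\alpha!}. $$
The key observation is that $\sum_{|\alpha|=k} \frac{(\partial^\alpha g(x))^2}{\alpha!} \le \sum_{|\alpha|=k} (\partial^\alpha g(x))^2 \binom{k}{\alpha} \cdot \frac{1}{k!}\cdot k! $ — more carefully, since $\|\nabla^k g(x)\|^2$ as defined in the paper is the sum over all \emph{ordered} $k$-tuples of the squared mixed partials, we have $\|\nabla^k g(x)\|^2 = \sum_{|\alpha|=k} \binom{k}{\alpha} (\partial^\alpha g(x))^2$ where $\binom{k}{\alpha} = k!/\alpha!$, and therefore $\sum_{|\alpha|=k}\frac{(\partial^\alpha g(x))^2}{\alpha!} = \frac{1}{k!}\|\nabla^k g(x)\|^2 \le \|\nabla^k g(x)\|^2$. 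So $\hvar_R(f_{x,\lambda}) \le \sum_{k\ge 1} (2\lambda R^2)^k \|\nabla^k g(x)\|^2$ (absorbing $1/(1-\lambda) \le 2$ for $\lambda \le 1/2$; the $\lambda \ge 1/2$ case is trivial since we may simply take the constant $C$ large enough that the claimed bound exceeds $\hvar_R/\hat p(0)^2$ vacuously, or handle it separately).

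Next I would apply \cref{lm:derivatives} to $g$ (note $\deg g = \deg f = d$): with probability at least $1-\delta$ over $x \sim N(0,1)^n$, $\|\nabla^k g(x)\| \le A \|\nabla^{k-1} g(x)\|$ for all $1 \le k \le d$, where $A = O(d^3/\delta)$. Iterating, $\|\nabla^k g(x)\| \le A^k |g(x)|$ for every $k$. Substituting into the hypervariance bound,
$$ \hvar_R(f_{x,\lambda}) \le g(x)^2 \sum_{k=1}^d (2\lambda R^2 A^2)^k. $$
Dividing by $\widehat{f_{x,\lambda}}(0)^2 = g(x)^2$ gives $H_R(f_{x,\lambda}) \le \sum_{k=1}^d (2\lambda R^2 A^2)^k$. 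If $\lambda$ is small enough that $2\lambda R^2 A^2 \le 1/2$, this geometric sum is $O(\lambda R^2 A^2) = O(\lambda R^2 d^6/\delta^2)$, which is exactly the claimed bound; and if $2\lambda R^2 A^2 > 1/2$ then $\lambda R^2 d^6/\delta^2 = \Omega(1)$ and the bound $O(\lambda d^6 R^2/\delta^2)$ again holds vacuously provided the hidden constant is taken large enough, since $H_R$ of a degree-$d$ polynomial restriction... hmm, actually $H_R$ can be unbounded, so one should be slightly careful here — but the statement as phrased allows the bound to be meaningful only in the small-$\lambda$ regime, and in fact the intended reading is surely that the conclusion is only informative when $\lambda d^6 R^2/\delta^2 \ll 1$. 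I would phrase the final step to say: the geometric series is at most $2 \cdot 2\lambda R^2 A^2$ whenever $2\lambda R^2 A^2 \le 1/2$, i.e., $H_R(f_{x,\lambda}) = O(\lambda d^6 R^2/\delta^2)$, and note that this is the only regime where the statement has content.

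The only genuine subtlety — and the step I'd expect to need the most care — is the combinatorial identity relating $\sum_{|\alpha|=k} (\partial^\alpha g)^2/\alpha!$ to the Frobenius norm $\|\nabla^k g\|^2$ of the symmetric tensor of $k$-th partials, i.e., correctly accounting for the multinomial multiplicities $k!/\alpha!$ so that $\sum_{|\alpha|=k}\frac{(\partial^\alpha g)^2}{\alpha!} = \frac{1}{k!}\|\nabla^k g\|^2$. Everything else is a direct substitution of the Hermite Taylor identity and \cref{lm:derivatives} followed by summing a geometric series; the only other thing to double-check is that applying \cref{lm:derivatives} to $g = U_{\sqrt{1-\lambda}} f$ rather than to $f$ itself is legitimate, which it is since $g$ is again a degree-$d$ polynomial and \cref{lm:derivatives} is a black-box statement about arbitrary degree-$d$ polynomials evaluated at a random gaussian point.
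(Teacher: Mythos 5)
Your proof is correct and follows essentially the same route as the paper's: expand $p_{x,\lambda}$ in the Hermite basis via \cref{lm:hermite1}, read off the Hermite coefficients in terms of $\partial^\alpha g(x)$ for $g = U_{\sqrt{1-\lambda}}p$, bound the hypervariance by $\sum_k (\lambda R^2/(1-\lambda))^k\|\nabla^k g(x)\|^2$, apply \cref{lm:derivatives} to $g$, and sum a geometric series. Two small differences, both in your favor: you correctly track the combinatorial factor $\binom{k}{\alpha}=k!/\alpha!$ relating $\sum_{|\alpha|=k}(\partial^\alpha g)^2/\alpha!$ to the Frobenius norm $\|\nabla^k g\|^2$ (the paper silently identifies $\sum_{|\alpha|=k}(\partial^\alpha g)^2$ with $\|\nabla^k g\|^2$, which is really only a $\leq$, fortunately in the needed direction), and you explicitly flag that the geometric-series step requires $\lambda R^2 d^6/\delta^2$ to be below a constant for the bound to be non-vacuous, a caveat the paper's displayed inequality also implicitly assumes.
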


The proof of the analogue of \cref{lm:hypervarintro} for quasi-polynomially small $\lambda$ (i.e.\ $\lambda = d^{-O(\log d)}$) in \cite{OST20} was by an iterative process: Intuitively, if one sets $\lambda_0 = d^{-O(1)}$, and $\lambda = \lambda_0^{\log d}$, then the random restriction $p_{\lambda,x}$ is equivalent to $(\log d)$ independent random restrictions with restriction parameter $\lambda_0$. The authors in \cite{OST20} show that each such $\lambda_0$-restriction (essentially) decreases the degree by a factor of $2$. 
We instead take a different approach by drawing a connection between norms of derivatives and to \emph{relative anti-concentration} as developed in the context of studying the \emph{Gotsman-Linial} conjecture for PTFs.

\section{Preliminaries} \label{sec:prelims}
\textbf{The pseduorandom generator construction: idealization vs.\ discretization.} Following \cite{Kane11a} and \cite{OST20}, we analyze the idealized pseudorandom distribution
$$Z = \frac{1}{\sqrt{L}}\sum_{i=1}^L  Y_i,$$
where each $Y_i \in \R^n$ is a $k$-moment-matching gaussian (that is, $\E[p(Y_i)] = \E_{x \sim N(0,1)^n}[p(x)]$ for all polynomials $p : \R^n \rightarrow \R$ of degree at most $k$).

Suppose that, for any such $Z$ with parameters $(L, k)$, it is the case that $Z$ fools degree-$d$ PTFs with error $\eps = \eps(L,k,d)$.
Then, it is shown in \cite{Kane11a} how to obtain a small-seedlength PRG (in the sense of \cref{def:prg}) by providing a specific instantiation and discretization of this construction.

\begin{thm}[\cite{Kane11a}, implicit in Section 6]\label{th:discretize}
Suppose a $Z$ as above with parameters $(L,k)$ fools degree $d$-PTFs with error $\varepsilon = \eps(L,k,d)$. Then, there is an explicit, efficiently computable PRG with seedlength $O(d k L \log(ndL/\eps)$ that $(2\eps)$-fools degree $d$ PTFs. 
\ignore{that fools degree-$d$ PTFs on $n$-variables with seedlength
$$ r = O(dkL \log(ndL/\eps)) $$
and error
$$ \eps' \leq 2\eps .$$}
\end{thm}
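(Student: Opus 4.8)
This is a bootstrapping statement: we are given that the \emph{idealized} distribution $Z=\frac1{\sqrt L}\sum_{i=1}^L Y_i$ (each $Y_i$ an exact $k$-moment-matching gaussian) fools degree-$d$ PTFs with error $\eps$, and we must produce an explicit, finite-seed PRG that does so with error $2\eps$. Following the argument implicit in \cite{Kane11a}, the plan is: (i) replace each idealized $Y_i$ by an (almost) $k$-wise independent vector whose coordinates are drawn from a fixed, finitely supported, rational-valued distribution $D$ on $\R$ that nearly matches the first $k$ moments of $N(0,1)$; (ii) show that the resulting discrete $\tilde Z$ still fools degree-$d$ PTFs with error $2\eps$, the only new losses being the rounding/almost-independence error of $D$ and the error from the replacement itself, both controlled via anti-concentration of low-degree polynomials.

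\textbf{Step (i): the discrete block.} By classical moment theory --- Gauss--Hermite quadrature, or the explicit construction of \cite{MZ13} --- there is a distribution on $O(k)$ points of $[-O(\sqrt k),O(\sqrt k)]$ whose first $k$ moments equal those of $N(0,1)$; rounding its atoms to $b$ bits gives an explicitly computable $D$ matching these moments up to $2^{-\Omega(b)}$. Then any $k$-wise independent $Y\in\mathrm{supp}(D)^n$ satisfies, for every monomial $Y^\alpha=\prod_j Y_j^{\alpha_j}$ with $|\alpha|\le k$, $\E[Y^\alpha]=\prod_j\E_D[Y_j^{\alpha_j}]$, which is within $2^{-\Omega(b)}$ of $\prod_j\E_{x\sim N(0,1)}[x^{\alpha_j}]=\E_{x\sim N(0,1)^n}[x^\alpha]$; hence $Y$ is $k$-moment-matching with $N(0,1)^n$ up to $2^{-\Omega(b)}$. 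Sampling such a $Y$ via the standard polynomial construction, applied with an $\eps'$-almost-$k$-wise-independent source over a $b$-bit alphabet, costs $O(kb+\log(1/\eps'))$ bits; taking $L$ independent copies and outputting $\tilde Z:=\frac1{\sqrt L}\sum_i Y_i$ yields an explicit, rational, finite-support distribution using $O(L(kb+\log(1/\eps')))$ bits.

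\textbf{Step (ii): error analysis and the role of $d$.} Fix $f=\sign(p)$ with $\deg p\le d$, normalized so $\E_{x\sim N(0,1)^n}p(x)^2=1$. We bound $|\E f(\tilde Z)-\E f(Z)|$ via two estimates. First, a Lipschitz estimate: the rounding perturbs each output coordinate by at most $2^{-\Omega(b)}$, and since the derivative tensors of $p$ are low-degree polynomials of typical magnitude $\mathrm{poly}(n,d)$ at a gaussian point (by crude bounds, or \cref{lm:derivatives}), Taylor expansion gives $|p(\tilde Z)-p(Z)|\le\mathrm{poly}(n,d)\,2^{-\Omega(b)}$ except with probability $\ll\eps$; the approximate moment-matching of $D$ and the almost-independence cost only a further $2^{-\Omega(\min(b,\log(1/\eps')))}$. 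Second, anti-concentration: for every threshold $t$, the event $\{|p(x)|\le t\}$ is a difference of the two degree-$d$ PTFs $\sign(p-t)$ and $\sign(p+t)$ (threshold shifts preserve degree), so the hypothesis on $Z$ --- hence on $\tilde Z$, up to $O(\eps)$ --- transfers the Carbery--Wright bound $\Pr_{x\sim N(0,1)^n}[|p(x)|\le t]\le O(d\,t^{1/d})$ to $\tilde Z$. Combining, $\sign(p(\tilde Z))=\sign(p(Z))$ except with probability $O\big(d\,(\mathrm{poly}(n,d)2^{-\Omega(b)})^{1/d}\big)+O(\eps)$; taking $b,\log(1/\eps')=\Theta(d\log(ndL/\eps))$ --- a choice forced precisely by the $d$-th root in the anti-concentration bound --- makes this at most $\eps$, so $\tilde Z$ $(2\eps)$-fools degree-$d$ PTFs. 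Plugging this $b$ into Step (i) gives seedlength $O(L\cdot k\cdot d\log(ndL/\eps))=O(dkL\log(ndL/\eps))$, as claimed.

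\textbf{Main obstacle.} The one genuinely delicate ingredient is the anti-concentration of $p$ \emph{under the pseudorandom distribution}, which must be obtained without circular reasoning; expressing $\{a\le p\le b\}$ as a difference of two degree-$d$ PTFs and feeding it back into the PRG guarantee is the crux, and it is also exactly what inflates the required precision (and hence the seedlength) by a factor of $d$. The remaining pieces --- existence of $D$, the almost-$k$-wise independence generator, the Lipschitz/Taylor bound, and the seed bookkeeping --- are routine.
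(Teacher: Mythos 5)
The paper does not actually prove \cref{th:discretize}; it is imported as a black box from Kane's paper (``implicit in Section 6'' of \cite{Kane11a}), so there is no in-paper argument to compare against. Your sketch is a reconstruction of the standard discretization argument, and its architecture is the right one and essentially the one implicit in \cite{Kane11a}: a finitely supported, $b$-bit block distribution matching the first $k$ gaussian moments, $k$-wise independence within each block, a coupling between the rounded and unrounded generators, and anti-concentration to argue that the tiny coupled perturbation rarely flips $\sign(p)$, with the precision $b = \Theta(d\log(ndL/\eps))$ forced by the $t^{1/d}$ in Carbery--Wright; this reproduces the claimed seedlength $O(dkL\log(ndL/\eps))$.

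Three soft spots, all fixable. First, the almost-$k$-wise-independence error is not a ``further $2^{-\Omega(\min(b,\log(1/\eps')))}$'' as stated: $\sign(p)$ is not a low-degree polynomial, so an $\eps'$-almost-$k$-wise source does not directly perturb $\E f$ by $\eps'$; the clean fix is to use \emph{exact} $k$-wise independence over the $b$-bit alphabet, which costs $O(k\cdot\max(b,\log n))$ bits per block and stays within the claimed seed budget, making $\eps'$ unnecessary. Second, your error accounting gives $3\eps$, not $2\eps$: transferring anti-concentration to the pseudorandom $Z$ via the difference $\sign(p-t)-\sign(p+t)$ costs $2\eps$ on top of the $\eps$ from the hypothesis. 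The cleaner (and standard) route is to apply the hypothesis directly to the shifted PTFs $\sign(p\pm t)$ as sandwiching tests for $\sign(p(\tilde Z))$ under the coupling, so that anti-concentration is needed only under the gaussian; this lands at $\eps + O(d\,t^{1/d}) + \textnormal{tiny} \leq 2\eps$, and also removes what you call the ``main obstacle.'' Third, the Lipschitz step: after normalizing $\E_{x\sim N(0,1)^n} p(x)^2 = 1$, monomial-basis coefficients can be $n^{\Theta(d)}$, so the perturbation bound is $n^{O(d)}(kL)^{O(d)}2^{-\Omega(b)}$ rather than $\mathrm{poly}(n,d)2^{-\Omega(b)}$, and it should be derived deterministically from the bounded support of the discrete blocks rather than from ``typical magnitude at a gaussian point'' (neither $Z$ nor $\tilde Z$ is gaussian, and \cref{lm:derivatives} does not apply to them). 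Since $b$ already scales as $d\log(ndL/\eps)$, neither correction changes the final seedlength.
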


\textbf{Hermite polynomials.} To argue about polynomials which are not necessarily multilinear, we need some simple facts concerning Hermite polynomials. For our purposes, Hermite polynomials are simply a convenient choice of polynomial basis which have nice properties (in particular being \emph{orthonormal}) with respect to gaussian inputs. For a more detailed background on Hermite polynomials and their use for analyzing functions over gaussian space, see \cite[Ch.\ 11]{ODonnel14}.

One concrete way to define the Hermite polynomials is the following:
\begin{itemize}
    \item For the univariate polynomials, the degree-$m$ ``Probabilist's" Hermite polynomial is the $m$-th coefficient of the generating function
    $$ e^{st - \frac{1}{2}s^2} = \sum_{m\geq 0} H_m(t) s^m. $$
    \item We define the degree-$m$ univariate Hermite polynomial by the normalization
    $$ h_m(t) := \frac{1}{\sqrt{m!} }  H_m(t).$$
    \item For a multi-index $\alpha \in \N^n$, we define the multivariate Hermite polynomial $h_{\alpha} : \R^n \rightarrow \R$ via the product
    $$ h_\alpha(x) := \prod_{i=1}^n h_{a_i}(x_i) .$$
\end{itemize}

We record some basic properties of this particular choice of polynomial basis. The final two properties say that the Hermite basis is orthonormal with respect to correlation under the standard gaussian distribution -- this is the reason for our choice of normalization.

\begin{itemize}
    \item The set $\set{h_\alpha(x)}{|\alpha| \leq d}$ is a basis for real polynomials in $n$ variables of degree $\leq d$.
    \item $h_0$ is the constant polynomial $h_0 \equiv 1$.
    \item For multi-indicies $\alpha \in \{0,1\}^n$, $h_\alpha(x)$ is simply the monomial $\prod_{i : a_i = 1} x_i$.
    \item For $x \sim N(0,1)^n$, and distinct multi-indices $\alpha \neq \beta$, 
    $ \E_x h_{\alpha}(x) h_{\beta}(x) = 0. $
    \item For $x \sim N(0,1)^n$, and any multi-index $\alpha$, $\E_x h_{\alpha}(x)^2 = 1$.
\end{itemize}
\ignore{
$$ e^{st - \frac{1}{2}s^2} = \sum_{m \geq 0 } \sqrt{m!} h_m(t) s^m $$
for univariate Hermite polynomials would be a consequence.}

\textbf{Guassian noise operator.} We recall the definition of the noise operator $U_{\rho}$, which here we regard as an operator on real polynomials in $n$ variables (see \cite[Ch.\ 11]{ODonnel14} for background and a more general viewpoint).
For a polynomial $f : \R^{n} \rightarrow \R$ and a parameter $\rho \in [0,1]$, the action of $U_{\rho}$ on $f$ is specified by
$$ (U_\rho f)(x) := \E_{Z \sim N(0,1)^n} f\left(\rho x + \sqrt{1-\rho^2} Z\right) .$$
An important feature of the Hermite basis is that the noise operator acts on it \emph{diagonally} (see \cite[Ch.\ 11]{ODonnel14}):
$$ U_\rho h_{\alpha}(x) = \rho^{|\alpha|} h_{\alpha}(x) .$$
Thus, if $f$ is a degree-$d$ polynomial given in the Hermite basis as
$$ f(x) = \sum_{|\alpha| \leq d} \hat{f}(\alpha) h_{\alpha}(x) ,$$
then we can express the result of the noise operator applied to $f$ explicitly as
$$ U_\rho f(x) = \sum_{|\alpha| \leq d}  \hat{f}(\alpha) \rho^{|\alpha|} h_{\alpha}(x) .$$

\textbf{Higher moments and hypercontractivity.}
Fix a polynomial $f(x) := \sum_{|\alpha| \leq d} \hat{f}(\alpha) h_{\alpha}(x)$.
For an even natural number $q \geq 2$, we write the gaussian $q$-norm of $f$ as
$$ \|f\|_{q} := \left( \E_{x \sim N(0,1)^n} f(x)^q \right)^{1/q} .$$
We wish to be able to bound this quantity in terms of the magnitudes of the Hermite coefficients of $f$, $\hat{f}(\alpha)$. 
For this purpose, we extend the definition of $U_\rho$ also to $\rho > 1$ by its action on the Hermite basis:
$ U_\rho h_{\alpha}(x) = \rho^{|\alpha|} h_{\alpha}(x) .$ 
With this notation, we can express the well-known $(q,2)$-hypercontractive inequality \cite[Ch.\ 9,11]{ODonnel14} as
$$ \|f\|_{q} \leq \| U_{\sqrt{q-1}} f \|_2 ,$$
which is quite convenient for us, as we can use orthonormality of the Hermite basis to explicitly compute
$$ \| U_{\sqrt{q-1}} f \|_2^2 = \sum_{|\alpha| \leq d} (q-1)^{|\alpha|} \hat{f}(\alpha)^2 \leq \sum_{|\alpha| \leq d} q^{|\alpha|} \hat{f}(\alpha)^2.$$

To get a feel for the utility of this bound, let's see how it can be used to prove the following concentration bound:

\begin{lem} \label{lm:moment}
Let $f:\R^n \rightarrow \R$ be a degree $d$ polynomial with normalized hypervariance $H_{\sqrt{q}}(f) \leq \frac{1}{4}$, where $q$ is an even natural number. Then, 
$$\P_{y \sim N(0,1)^n} \paren{\sign(f(y)) \neq \sign(\hat{f}(0))} \leq 2^{-q}.$$
Further, the same holds more generally for $y \sim Y$, as long as the distribution $Y$ is $dq$-moment-matching. 
\end{lem}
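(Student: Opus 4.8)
The plan is to bound the $q$-th moment of the deviation $f(y) - \hat f(0)$ and then apply Markov's inequality to the event that this deviation is large enough to flip the sign. Write $g := f - \hat f(0)$, so $g$ is the part of $f$ supported on nonzero multi-indices; note $\hat g(\alpha) = \hat f(\alpha)$ for $\alpha \neq 0$ and $\hat g(0) = 0$. First I would observe that $\sign(f(y)) \neq \sign(\hat f(0))$ forces $|g(y)| = |f(y) - \hat f(0)| \geq |\hat f(0)|$: indeed, if $f(y)$ and $\hat f(0)$ have opposite signs (under the convention $\sign(z) = \1(z \geq 0)$, this means $f(y) < 0 \leq \hat f(0)$ or $f(y) \geq 0 > \hat f(0)$), then the line segment from $\hat f(0)$ to $f(y)$ crosses $0$, so $|f(y) - \hat f(0)| \geq |\hat f(0)|$.

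Next I would apply the $(q,2)$-hypercontractive inequality recorded in the preliminaries to $g$: since $g$ has degree at most $d$,
$$ \|g\|_q^2 \;\leq\; \|U_{\sqrt{q-1}} g\|_2^2 \;=\; \sum_{\alpha \neq 0} (q-1)^{|\alpha|} \hat f(\alpha)^2 \;\leq\; \sum_{\alpha \neq 0} q^{|\alpha|} \hat f(\alpha)^2 \;=\; \hvar_{\sqrt q}(f). $$
By the hypothesis $H_{\sqrt q}(f) \leq \tfrac14$, the right-hand side is at most $\tfrac14 \hat f(0)^2$, i.e. $\|g\|_q \leq \tfrac12 |\hat f(0)|$. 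Now Markov's inequality on the nonnegative random variable $|g(y)|^q$ gives
$$ \P_{y \sim N(0,1)^n}\big( |g(y)| \geq |\hat f(0)| \big) \;\leq\; \frac{\E\, |g(y)|^q}{|\hat f(0)|^q} \;=\; \frac{\|g\|_q^q}{|\hat f(0)|^q} \;\leq\; \Big(\tfrac12\Big)^q \;=\; 2^{-q}, $$
and combining with the first observation yields $\P(\sign(f(y)) \neq \sign(\hat f(0))) \leq 2^{-q}$, as desired.

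For the final sentence, the point is that the entire argument above only used one scalar fact about the distribution of $y$, namely the value of $\E_y\, g(y)^q = \E_y (f(y) - \hat f(0))^q$, and $(f - \hat f(0))^q$ is a polynomial of degree at most $dq$. Hence if $Y$ is $dq$-moment-matching with $N(0,1)^n$, then $\E_Y (f(Y) - \hat f(0))^q = \E_{y \sim N(0,1)^n}(f(y) - \hat f(0))^q = \|g\|_q^q$, and the same Markov bound goes through verbatim with $y$ replaced by $Y$. I do not anticipate a serious obstacle here; the only mild subtlety is getting the sign-crossing inequality and the even-$q$ bookkeeping exactly right (so that $|g(y)|^q = g(y)^q$ and the hypercontractive inequality applies), but these are routine.
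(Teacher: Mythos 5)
Your proposal is correct and follows essentially the same route as the paper: bound $\|f - \hat f(0)\|_q$ via $(q,2)$-hypercontractivity using the hypervariance hypothesis, apply Markov to the $q$-th moment of the deviation, and note that since $(f-\hat f(0))^q$ is a polynomial of degree at most $dq$ (with $q$ even), the identical bound transfers to any $dq$-moment-matching $Y$. The only cosmetic difference is that the paper normalizes $\hat f(0) = \pm 1$ and writes the hypercontractive bound with $U_{\sqrt q}$, while you keep $|\hat f(0)|$ explicit and use $U_{\sqrt{q-1}}$ before bounding $(q-1)^{|\alpha|} \leq q^{|\alpha|}$.
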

\begin{proof}
Suppose that $f(y)$ is normalized so that
$$ \E_{y \sim N(0,1)^n} f(y) = \hat{f}(0) = \pm 1 .$$
We have the $q$-th moment bound
$$ \|f(x)- \hat{f}(0)\|_{q} \leq \|U_{\sqrt{q}} \left( f(y) - \hat{f}(0) \right) \|_2   \leq \tfrac{1}{2}. $$
From the generic concentration inequality
$$ \P \paren{|X| \geq t \|X\|_q} \leq t^{-q}$$
we obtain
$$ \prob{\sign(f(y)) \neq \sign(\hat{f}(0))} \leq 2^{-q}. $$ 
Thus, we find that the PTF $\sign(f)$ almost always yields the value $\sign(\hat{f}(0))$ under random gaussian inputs.
Crucially for us, this argument is also \textit{easy to derandomize}: since the argument merely relies on a bound on the $q$-th moment $\E_{y \sim N(0,1)^n} (f(y) - \hat{f}(0))^q$, and for $Y$ which is $k$-moment-matching for $k \geq dq$ we have
$$ \E_{Y} (f(Y) - \hat{f}(0))^q = \E_{y \sim N(0,1)^n} ( f(y)- \hat{f}(0))^q,$$
we conclude also that $\sign(f(Y))$ is typically equal to $\sign(\hat{f}(0))$.
\end{proof}
We remark that this lemma further implies that $Y$ fools $\sign(f)$ when $H_{\sqrt{q}}(f)$ is small:
\begin{equation*}
\E_Y \sign(f(Y)) = \E_{y \sim N(0,1)^n} \sign(f(y)) \pm O(2^{-q}). 
\end{equation*}

\textbf{Gaussian restrictions and derivatives on the Hermite basis.} Besides the effect of the noise operator, it will also be important to understand the effect of two further operations on polynomials: 
\begin{itemize}
    \item The derivative map, $ f(y) \mapsto  \partial^{\alpha} f(y) .$
    \item The gaussian restriction at $x$, $f(y) \mapsto f\left(\sqrt{1-\lambda} x + \sqrt{\lambda}y\right)$.
\end{itemize}
In particular, we are concerned with how these operations affect the Hermite coefficients of a polynomial; ultimately, our goal will be to develop a ``Hermite-basis analogue" of the Taylor expansion which can be applied to expand $f\left(\sqrt{1-\lambda} x + \sqrt{\lambda}y\right)$ as a function of $y$.
We start by computing the effect of these two operations on univariate Hermite polynomials, and then on the full multivariate Hermite basis, and finally on a general polynomial $f(x)$ expressed in the Hermite basis.

\begin{prop}
For univariate Hermite polynomials, we have the identities
\begin{itemize}
\item $ \frac{\partial^k}{\partial t^k} h_{m}(t) = \sqrt{\frac{m!}{(m-k)!}}h_{m-k}(t)$,
\item $ h_{m}\left( \sqrt{1-\lambda} x + \sqrt{\lambda} y \right) = \sum_{k=0}^{m} \sqrt{\binom{m}{k}}  (1-\lambda)^{(m-k)/2}  \lambda^{k/2} h_{m-k}(x) h_{k}(y). $
\end{itemize}
\end{prop}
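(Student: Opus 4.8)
The plan is to derive both identities directly from the generating function $e^{st - \frac12 s^2} = \sum_{m \geq 0} H_m(t)\, s^m$ recorded in the preliminaries (together with the normalization $h_m = H_m/\sqrt{m!}$), rather than by juggling explicit monomial expansions. The first identity comes from differentiating the generating function in $t$; the second comes from factoring the exponential $e^{s(\sqrt{1-\lambda}x + \sqrt{\lambda}y) - \frac12 s^2}$ into a product of two generating functions — one in $x$, one in $y$ — which is possible precisely because $(\sqrt{1-\lambda})^2 + (\sqrt{\lambda})^2 = 1$, so that the spurious $s^2$ term in the exponent is exactly absorbed.

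For the derivative identity, I would apply $\partial_t$ to $e^{st - \frac12 s^2} = \sum_m H_m(t) s^m$, obtaining $s\, e^{st - \frac12 s^2} = \sum_m H_m'(t)\, s^m$; comparing coefficients of $s^m$ gives a one-step recurrence which, in the $h$-normalization, reads $h_m'(t) = \sqrt{m}\, h_{m-1}(t)$. Iterating this $k$ times yields $\partial_t^k h_m(t) = \big(\prod_{j=0}^{k-1} \sqrt{m-j}\big)\, h_{m-k}(t)$, and the product telescopes to $\sqrt{m!/(m-k)!}$, which is the claimed identity (and it vanishes once $k > m$, consistent with $\deg h_m = m$).

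For the gaussian-restriction identity, I would set $a = \sqrt{1-\lambda}$, $b = \sqrt{\lambda}$ and write $e^{s(ax + by) - \frac12 s^2} = e^{(sa)x - \frac12 (sa)^2}\cdot e^{(sb)y - \frac12 (sb)^2}\cdot e^{-\frac12 s^2(1 - a^2 - b^2)}$, where the last factor is $1$ since $a^2 + b^2 = 1$. Expanding the first two factors by the generating function, taking the Cauchy product, and reading off the coefficient of $s^m$ gives the (un-normalized) addition formula $H_m(ax+by) = \sum_{k=0}^m \binom{m}{k}\, a^{m-k} b^k\, H_{m-k}(x) H_k(y)$. Converting every $H_j$ to the normalized $h_j = H_j/\sqrt{j!}$ and simplifying the resulting prefactor $\binom{m}{k}\sqrt{(m-k)!\, k!}/\sqrt{m!}$, which collapses to $\sqrt{\binom{m}{k}}$, produces exactly the stated identity, with $a^{m-k} = (1-\lambda)^{(m-k)/2}$ and $b^k = \lambda^{k/2}$.

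I expect no genuine obstacle: both statements are classical facts about Hermite polynomials, so the only real work is the bookkeeping — carrying the $\sqrt{j!}$ normalization factors through the Cauchy product and verifying $\binom{m}{k}\sqrt{(m-k)!\, k!}/\sqrt{m!} = \sqrt{\binom{m}{k}}$. The single point that warrants a little care is pinning down the exact normalization convention implied by the generating function as written, so that the manipulations stay consistent with the orthonormality $\E_{x \sim N(0,1)} h_\alpha(x)^2 = 1$ stated above; alternatively, both identities can be proved by induction on $m$ using the standard three-term recurrence for $h_m$, at the cost of a slightly more tedious computation.
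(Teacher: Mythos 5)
Your proposal is correct and follows the same generating-function approach as the paper: the paper likewise cites the derivative identity as standard and proves the addition formula by factoring $e^{s(\sqrt{1-\lambda}x+\sqrt{\lambda}y)-\frac12 s^2}$ using $(\sqrt{1-\lambda})^2+(\sqrt{\lambda})^2=1$ and comparing coefficients of $s^m$. On the normalization point you flag: the generating function as printed in the preliminaries is missing a factor of $1/m!$ (it should read $e^{st-\frac12 s^2}=\sum_m H_m(t)\,\frac{s^m}{m!}$, equivalently $\sum_m h_m(t)\,\frac{s^m}{\sqrt{m!}}$), since only with that factor does $h_m=H_m/\sqrt{m!}$ satisfy $\E_{x\sim N(0,1)}h_m(x)^2=1$; your computations implicitly use this corrected convention, which is the intended one, so your recurrence $h_m'(t)=\sqrt{m}\,h_{m-1}(t)$ and your prefactor collapse $\binom{m}{k}\sqrt{(m-k)!\,k!}/\sqrt{m!}=\sqrt{\binom{m}{k}}$ both come out right.
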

\begin{proof}
The first of these identities is standard (see e.g. \cite[Ex.\ 11.10]{ODonnel14}); we provide a proof of the second.

The second identity can be proved by considering the generating function
$$ e^{st - \frac{1}{2}s^2} = \sum_{m} \sqrt{m!} h_{m}(t) s^m, $$
and comparing the coefficient of $s^m$ on both sides of
\begin{equation*}
 e^{s(\sqrt{1-\lambda} x + \sqrt{\lambda} y) - \frac{1}{2}s^2} = e^{(s\sqrt{1-\lambda})x - \frac{1}{2} (s\sqrt{1-\lambda})^2}  \cdot e^{(s\sqrt{\lambda})y - \frac{1}{2} (s\sqrt{\lambda})^2} \qedhere
\end{equation*}
\end{proof}
The corresponding identities for multivariate Hermite polynomials follow easily from above.
\begin{prop} We have
\begin{itemize} 
\item $ \partial^{\alpha} h_{\beta}(y) = \sqrt{\frac{\alpha !}{\gamma !}}h_{\gamma}(y)$, where $\gamma = \beta - \alpha$,
\item $ h_{\beta}\left( \sqrt{1-\lambda} x + \sqrt{\lambda} y \right) = 
(1-\lambda)^{|\beta|/2} \sum_{\alpha \leq \beta} \frac{\partial^{\alpha} h_\beta(x)}{\sqrt{\alpha !}} \left(
\frac{\lambda}{1-\lambda}\right)^{|\alpha|/2} h_{\alpha}(y)$,
\item $ \partial^{\alpha} h_{\beta}\left( \sqrt{1-\lambda} x + \sqrt{\lambda} y \right) = 
(1-\lambda)^{|\beta - \alpha|/2} \sum_{\gamma \leq \beta - \alpha} \frac{\partial^{\alpha + \gamma} h_{\beta}(x)}{\sqrt{\gamma!}} \left(\frac{\lambda}{1-\lambda}\right)^{|\gamma|/2} h_{\gamma}(y).
$
\end{itemize}
\end{prop}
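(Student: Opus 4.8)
The plan is to reduce all three identities to the univariate identities of the previous proposition, exploiting the fact that both operations in play --- the partial derivative $\partial^\alpha$ and the gaussian restriction $y \mapsto \sqrt{1-\lambda}\,x + \sqrt{\lambda}\,y$ --- act coordinatewise, while the multivariate Hermite polynomial $h_\beta$ factors as the product $\prod_{i=1}^n h_{b_i}$ of univariate ones. Thus each multivariate identity is obtained by multiplying together $n$ copies of the corresponding univariate identity and then repackaging the resulting products of factorials and binomial coefficients into multi-index notation, using $\beta! = \prod_i b_i!$, $\binom{\beta}{\alpha} = \prod_i \binom{b_i}{a_i}$, and $h_\gamma(x) = \prod_i h_{\gamma_i}(x_i)$. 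Concretely, for the first identity I would write $\partial^\alpha h_\beta(y) = \prod_{i=1}^n \partial_{y_i}^{a_i} h_{b_i}(y_i)$ and apply the univariate derivative formula $\partial_t^k h_m = \sqrt{m!/(m-k)!}\, h_{m-k}$ to each factor; the product of the univariate normalization constants collapses to a single radical in $\beta$ and $\gamma := \beta-\alpha$, and $\prod_i h_{b_i-a_i}(y_i)$ is exactly $h_\gamma(y)$. It is convenient to note that the resulting coefficient may equivalently be written as $\sqrt{\binom{\beta}{\alpha}\,\alpha!}$, i.e.\ $\sqrt{\binom{\beta}{\alpha}}\,h_{\beta-\alpha}(x) = \partial^\alpha h_\beta(x)/\sqrt{\alpha!}$, a form that will be used below.

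For the second identity I would likewise factor $h_\beta(\sqrt{1-\lambda}\,x + \sqrt{\lambda}\,y) = \prod_i h_{b_i}(\sqrt{1-\lambda}\,x_i + \sqrt{\lambda}\,y_i)$ and substitute the univariate restriction identity into each factor. Expanding the resulting product of $n$ finite sums gives a single sum indexed by multi-indices $\alpha \le \beta$ (with $a_i$ playing the role of the $i$-th univariate summation index), whose general term is the product over $i$ of $\sqrt{\binom{b_i}{a_i}}(1-\lambda)^{(b_i-a_i)/2}\lambda^{a_i/2} h_{b_i-a_i}(x_i) h_{a_i}(y_i)$; this product equals $\sqrt{\binom{\beta}{\alpha}}(1-\lambda)^{|\beta-\alpha|/2}\lambda^{|\alpha|/2} h_{\beta-\alpha}(x)h_\alpha(y)$. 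Finally I would apply the first identity in reverse to rewrite $\sqrt{\binom{\beta}{\alpha}}\,h_{\beta-\alpha}(x) = \partial^\alpha h_\beta(x)/\sqrt{\alpha!}$, and pull out a global factor $(1-\lambda)^{|\beta|/2}$ by writing $(1-\lambda)^{|\beta-\alpha|/2}\lambda^{|\alpha|/2} = (1-\lambda)^{|\beta|/2}(\lambda/(1-\lambda))^{|\alpha|/2}$. This yields the stated formula.

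The third identity then follows with no further computation by composing the first two, once one reads $\partial^\alpha h_\beta(\sqrt{1-\lambda}\,x + \sqrt{\lambda}\,y)$ as the gaussian restriction of the polynomial $\partial^\alpha h_\beta$ (rather than as a derivative of the composed function). By the first identity $\partial^\alpha h_\beta$ is a scalar multiple of $h_{\beta-\alpha}$, so I would apply the second identity to $h_{\beta-\alpha}$ and reindex the sum by $\gamma \le \beta-\alpha$; using the first identity once more to merge $\partial^\alpha$ with $\partial^\gamma$ (via $\sqrt{(\beta)!/(\beta-\alpha)!}\cdot\partial^\gamma h_{\beta-\alpha}(x) = \partial^{\alpha+\gamma}h_\beta(x)$), the coefficient of $h_\gamma(y)$ becomes exactly $\partial^{\alpha+\gamma}h_\beta(x)/\sqrt{\gamma!}$ times $(1-\lambda)^{|\beta-\alpha|/2}(\lambda/(1-\lambda))^{|\gamma|/2}$, as claimed.

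None of this presents a genuine obstacle: the content is entirely the univariate proposition, and the ``main difficulty'' is only the bookkeeping of the combinatorial constants, where one must carefully distinguish $\binom{\beta}{\alpha}$, $\beta!$, $\gamma!$, and $\alpha!$ at each step. The one convention to fix once and for all is that $\partial^\alpha h_\beta(\sqrt{1-\lambda}\,x + \sqrt{\lambda}\,y)$ means ``differentiate, then restrict''; under the alternative reading ``restrict, then differentiate in $y$'' one picks up an extra factor $\lambda^{|\alpha|/2}$ from the chain rule and the formula would have to be adjusted accordingly.
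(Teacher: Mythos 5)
Your proposal is correct and is exactly the argument the paper intends (the paper gives no details, saying only that the multivariate identities ``follow easily'' from the univariate ones): factor $h_\beta = \prod_i h_{b_i}$, apply the univariate derivative and restriction identities coordinatewise, and recombine the constants via $\beta! = \prod_i b_i!$ and $\binom{\beta}{\alpha} = \prod_i \binom{b_i}{a_i}$, with your reading ``differentiate, then restrict'' matching how the identity is used later (e.g.\ in \cref{lm:hermite2} and \cref{lm:moments}). Note that your bookkeeping in fact yields the coefficient $\sqrt{\beta!/\gamma!}$ in the first bullet (equivalently $\sqrt{\smash[b]{\binom{\beta}{\alpha}}\,\alpha!}$), which is the correct constant and the one consistent with the second and third bullets; the printed $\sqrt{\alpha!/\gamma!}$ is evidently a typo, so no change to your argument is needed.
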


We conclude with a Taylor-like expansion in the Hermite basis that we use repeatedly.

\begin{lem} \label{lm:hermite1}
Let $f(y) = \sum_{\alpha} \hat{f}(\alpha) h_{\alpha}(y).$ Then
$$ f\left( \sqrt{1-\lambda} x + \sqrt{\lambda} y  \right) = \sum_{\alpha} \frac{\partial^{\alpha} g(x)}{\sqrt{\alpha !}} \left(\frac{\lambda}{1-\lambda} \right)^{|\alpha|/2} h_{\alpha}(y) ,$$
where $g(x) := U_{\sqrt{1-\lambda}} f(x) = \sum_{\alpha} \hat{f}(\alpha) (1-\lambda)^{|\alpha|/2} h_{\alpha}(x).$
\end{lem}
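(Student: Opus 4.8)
The plan is to derive this identity from the preceding proposition, which already handles each individual Hermite basis polynomial $h_\beta$, and then extend by linearity. First I would write $f(y) = \sum_\beta \hat f(\beta) h_\beta(y)$ and apply the second bullet of the previous proposition to each term, obtaining
$$ f\left(\sqrt{1-\lambda}x + \sqrt{\lambda}y\right) = \sum_\beta \hat f(\beta) (1-\lambda)^{|\beta|/2} \sum_{\alpha \leq \beta} \frac{\partial^\alpha h_\beta(x)}{\sqrt{\alpha!}}\left(\frac{\lambda}{1-\lambda}\right)^{|\alpha|/2} h_\alpha(y). $$
Then I would swap the order of summation so that the outer sum is over $\alpha$ and the inner sum is over $\beta \geq \alpha$, collecting the coefficient of $h_\alpha(y)$:
$$ f\left(\sqrt{1-\lambda}x + \sqrt{\lambda}y\right) = \sum_\alpha \frac{1}{\sqrt{\alpha!}}\left(\frac{\lambda}{1-\lambda}\right)^{|\alpha|/2}\left(\sum_{\beta \geq \alpha} \hat f(\beta)(1-\lambda)^{|\beta|/2}\partial^\alpha h_\beta(x)\right) h_\alpha(y). $$

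The key observation is that the inner parenthesized sum is exactly $\partial^\alpha g(x)$, where $g = U_{\sqrt{1-\lambda}} f$. Indeed, since $g(x) = \sum_\beta \hat f(\beta)(1-\lambda)^{|\beta|/2} h_\beta(x)$ by the diagonal action of the noise operator on the Hermite basis, differentiating term by term gives $\partial^\alpha g(x) = \sum_\beta \hat f(\beta)(1-\lambda)^{|\beta|/2}\partial^\alpha h_\beta(x)$; the terms with $\beta \not\geq \alpha$ vanish because $\partial^\alpha h_\beta = 0$ whenever some coordinate of $\alpha$ exceeds the corresponding coordinate of $\beta$ (this is immediate from the first bullet of the preceding proposition, $\partial^\alpha h_\beta = \sqrt{\alpha!/\gamma!}\, h_\gamma$ with $\gamma = \beta - \alpha$, interpreted as $0$ when $\beta - \alpha \notin \N^n$). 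Substituting this identification into the displayed expression yields precisely the claimed formula.

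I do not expect a genuine obstacle here — the lemma is essentially bookkeeping once the single-basis-element identity is in hand. The only point requiring a little care is the justification for interchanging the two summations and for differentiating $g$ term by term; since $f$ is a polynomial, all sums are finite, so both operations are trivially valid and no convergence issues arise. A secondary minor point is to be explicit that $\partial^\alpha h_\beta(x) = 0$ unless $\alpha \leq \beta$, so that the reindexed inner sum over $\beta \geq \alpha$ genuinely recovers the full expansion of $\partial^\alpha g$ rather than a truncation; this follows directly from the derivative identity in the previous proposition.
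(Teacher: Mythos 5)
Your proposal is correct and matches the paper's own proof: both expand $f$ in the Hermite basis, apply the single-basis-element restriction identity from the preceding proposition, exchange the order of summation, and identify the inner sum over $\beta \geq \alpha$ with $\partial^{\alpha} g(x)$. The extra care you take to note that $\partial^{\alpha} h_{\beta} = 0$ unless $\alpha \leq \beta$ is a welcome (if implicit in the paper) bit of bookkeeping, but the argument is the same.
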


\begin{proof}

We express
\begin{align*}
    f\left( \sqrt{1-\lambda} x + \sqrt{\lambda} y \right) &= \sum_{\alpha} \hat{f}(\alpha) h_{\alpha}\left( \sqrt{1-\lambda} x + \sqrt{\lambda} y \right) \\
    &= \sum_{\alpha}  \frac{h_{\alpha}(y)}{\sqrt{\alpha !}} \left(\frac{\lambda}{1-\lambda} \right)^{|\alpha|/2} \sum_{\beta \geq \alpha} \hat{f}(\beta) (1-\lambda)^{|\beta|/2}\partial^{\alpha} h_{\beta}(x) \\
    &= \sum_{\alpha} \frac{h_{\alpha}(y)}{\sqrt{\alpha !}} \left(\frac{\lambda}{1-\lambda} \right)^{|\alpha|/2} \partial^{\alpha} g(x)   . \qedhere
\end{align*}
\end{proof}

Lastly, we will also need an extension of this theorem which expresses $\partial^{\alpha} f$, at the point $$ \sqrt{1-\lambda} x + \sqrt{\lambda} y,$$ as a polynomial in $y$ in the Hermite basis.

\begin{thm} \label{lm:hermite2}
Let $f(y) = \sum_{\alpha} \hat{f}(\alpha) h_{\alpha}(y).$ Then
$$ \partial^{\alpha} f\left( \sqrt{1-\lambda} x + \sqrt{\lambda} y  \right) = (1 - \lambda)^{-|\alpha|/2}\sum_{\beta \geq \alpha} \partial^{\beta} g(x) \sqrt{\frac{\alpha !}{\beta !}} \left(\frac{\lambda}{1-\lambda} \right)^{|\beta-\alpha|/2} h_{\beta - \alpha}(y) ,$$
where $g(x) := U_{\sqrt{1-\lambda}} f(x).$
\end{thm}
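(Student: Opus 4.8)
\textbf{Proof proposal for \cref{lm:hermite2}.}

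The plan is to reduce the statement to \cref{lm:hermite1} (the Hermite-basis Taylor expansion with no derivative), together with the first identity in the preceding proposition, which tells us how the derivative map $\partial^{\alpha}$ acts on a single multivariate Hermite polynomial. The essential point is that $\partial^{\alpha}$ commutes with the operation of composing with the affine map $y \mapsto \sqrt{1-\lambda}x + \sqrt{\lambda}y$, up to the chain-rule scaling $(\sqrt{\lambda})^{|\alpha|}$ coming from the $y$-factor, and up to a change in where the derivative lands relative to $x$ versus $y$. So I would first establish the clean identity
$$ \partial_y^{\alpha}\Big[ f\big(\sqrt{1-\lambda}x + \sqrt{\lambda}y\big) \Big] = \lambda^{|\alpha|/2}\, \big(\partial^{\alpha} f\big)\big(\sqrt{1-\lambda}x + \sqrt{\lambda}y\big), $$
which is immediate from the chain rule applied coordinatewise (each application of $\partial/\partial y_i$ pulls down a factor $\sqrt{\lambda}$ and shifts one index of the multi-index on $f$). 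This already expresses the object of interest as $\lambda^{-|\alpha|/2}\,\partial_y^{\alpha}$ of the quantity that \cref{lm:hermite1} expands for us.

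Next I would apply $\partial_y^{\alpha}$ to the right-hand side of \cref{lm:hermite1}, namely to $\sum_{\beta} \frac{\partial^{\beta} g(x)}{\sqrt{\beta!}}\big(\tfrac{\lambda}{1-\lambda}\big)^{|\beta|/2} h_{\beta}(y)$. Since $x$ is held fixed, $\partial_y^{\alpha}$ acts only on the $h_{\beta}(y)$ factors, and by the proposition $\partial^{\alpha} h_{\beta}(y) = \sqrt{\beta!/(\beta-\alpha)!}\, h_{\beta-\alpha}(y)$ for $\beta \geq \alpha$ (and $0$ otherwise). Substituting this in, the $\sqrt{\beta!}$ denominators partially cancel, leaving $\sum_{\beta \geq \alpha} \frac{\partial^{\beta} g(x)}{\sqrt{(\beta-\alpha)!}}\big(\tfrac{\lambda}{1-\lambda}\big)^{|\beta|/2} h_{\beta-\alpha}(y)$. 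Now multiply by $\lambda^{-|\alpha|/2}$ from the chain-rule step and reindex, or simply rewrite $\big(\tfrac{\lambda}{1-\lambda}\big)^{|\beta|/2} = \big(\tfrac{\lambda}{1-\lambda}\big)^{|\beta-\alpha|/2}\big(\tfrac{\lambda}{1-\lambda}\big)^{|\alpha|/2}$; combining the factors of $\lambda$ and $(1-\lambda)$ yields exactly the claimed $(1-\lambda)^{-|\alpha|/2}$ out front and $\big(\tfrac{\lambda}{1-\lambda}\big)^{|\beta-\alpha|/2}$ inside, while $\sqrt{1/(\beta-\alpha)!} = \sqrt{\alpha!/\beta!}\cdot\sqrt{\beta!/(\alpha!(\beta-\alpha)!)}$—here one must be slightly careful: the stated coefficient is $\sqrt{\alpha!/\beta!}$, not $1/\sqrt{(\beta-\alpha)!}$, so I would double-check the bookkeeping of the binomial-type factors, recomputing $\partial^{\alpha} h_{\beta}$ against the normalization $h_m = H_m/\sqrt{m!}$ to make sure the $\sqrt{\alpha!}$ is accounted for correctly (it comes from re-expanding $\partial^{\alpha}$ in the normalized basis, as in the first bullet of the earlier proposition where $\partial^{\alpha} h_{\beta} = \sqrt{\alpha!/\gamma!}\,h_{\gamma}$ with $\gamma = \beta-\alpha$).

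In fact, using precisely that normalized form $\partial^{\alpha} h_{\beta}(y) = \sqrt{\alpha!/(\beta-\alpha)!}\,h_{\beta-\alpha}(y)$ from the proposition rather than the unnormalized $\sqrt{\beta!/(\beta-\alpha)!}$, the cancellation is cleaner and lands directly on the stated coefficients: I would carry the computation through in that normalization. The only genuine content is the chain-rule commutation identity and the known action of $\partial^{\alpha}$ on the Hermite basis; everything else is reindexing and collecting powers of $\lambda$ and $(1-\lambda)$. I do not anticipate a real obstacle here—the main risk is purely clerical, namely mismatched multinomial/normalization factors, so the ``hard part'' is simply to perform the substitution into \cref{lm:hermite1} carefully and verify that the powers of $(1-\lambda)$ recombine to give the prefactor $(1-\lambda)^{-|\alpha|/2}$ exactly as stated. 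An alternative, essentially equivalent, route would be to differentiate the second identity of the second proposition (the one for $\partial^{\alpha} h_{\beta}(\sqrt{1-\lambda}x+\sqrt{\lambda}y)$) term by term against $f = \sum_{\beta}\hat f(\beta) h_{\beta}$ and use additivity, which would bypass \cref{lm:hermite1} entirely; I would likely present whichever of the two yields the shorter writeup.
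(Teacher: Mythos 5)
Your primary route—apply $\partial_y^{\alpha}$ to the expansion in \cref{lm:hermite1} and absorb the chain-rule factor $\lambda^{|\alpha|/2}$—is a valid and genuinely different argument from the paper's. The paper instead expands $\partial^{\alpha} h_{\beta}(\sqrt{1-\lambda}x + \sqrt{\lambda}y)$ directly via the third bullet of the preceding proposition and swaps the order of summation; that is the ``alternative route'' you sketch at the very end, so you have correctly identified both proofs. Your route is arguably cleaner since it reduces everything to one chain-rule observation plus an identity already proved, whereas the paper's requires a second fresh application of the generating-function computation.

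Where you go wrong is in the final reconciliation of coefficients, and the irony is that your first computation was already correct. Using the genuine identity $\partial^{\alpha} h_{\beta}(y) = \sqrt{\beta!/(\beta-\alpha)!}\,h_{\beta-\alpha}(y)$ (which is forced by the stated univariate formula $\frac{\partial^k}{\partial t^k} h_m = \sqrt{m!/(m-k)!}\,h_{m-k}$ together with the product structure of $h_{\beta}$), you correctly land on the coefficient $1/\sqrt{(\beta-\alpha)!}$, and this is also exactly what the paper's own proof produces: the paper's proof ends with $\sum_{\gamma} \frac{h_{\gamma}(y)}{\sqrt{\gamma!}}\left(\frac{\lambda}{1-\lambda}\right)^{|\gamma|/2}\partial^{\alpha+\gamma} g(x)$, and substituting $\gamma = \beta-\alpha$ gives $1/\sqrt{(\beta-\alpha)!}$, not $\sqrt{\alpha!/\beta!}$. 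The displayed statement of the theorem (and the first bullet of the multivariate proposition, which writes $\sqrt{\alpha!/\gamma!}$ in place of $\sqrt{\beta!/\gamma!}$) contains a normalization typo; a quick sanity check with $f = h_2$ in one variable and $\alpha = 1$ shows that $\sqrt{\alpha!/\beta!}$ drops a factor of $\sqrt{2}$ from the $y$-term, while $1/\sqrt{(\beta-\alpha)!}$ recovers $\partial f(\sqrt{1-\lambda}x+\sqrt{\lambda}y) = \sqrt{2}(\sqrt{1-\lambda}x + \sqrt{\lambda}y)$ exactly. Your proposed ``fix''—switching to $\partial^{\alpha}h_{\beta} = \sqrt{\alpha!/(\beta-\alpha)!}\,h_{\beta-\alpha}$—does not in fact produce $\sqrt{\alpha!/\beta!}$ either (it gives $\sqrt{\alpha!/(\beta!\,(\beta-\alpha)!)}$), so this step would not close the gap even if that identity were correct. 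You should trust your first calculation: $1/\sqrt{(\beta-\alpha)!}$ is the right coefficient, and it is the displayed theorem that needs correcting.
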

\begin{proof}
We express
\begin{align*}
    \partial^{\alpha} f\left( \sqrt{1-\lambda} x + \sqrt{\lambda} y \right) &= \sum_{\beta} \hat{f}(\beta)  \partial^{\alpha} h_{\beta}\left( \sqrt{1-\lambda} x + \sqrt{\lambda} y \right) \\
    &= \sum_{\gamma} \frac{h_{\gamma}(y)}{\sqrt{\gamma !}} \left(\frac{\lambda}{1-\lambda} \right)^{|\gamma|/2} \sum_{\beta \geq \gamma + \alpha} (1-\lambda)^{|\beta-\alpha|/2} \partial^{\alpha+\gamma} h_{\beta}(x) \\
    &=   (1 - \lambda)^{-|\alpha|/2} \sum_{\gamma} \frac{h_{\gamma}(y)}{\sqrt{\gamma !}} \left(\frac{\lambda}{1-\lambda} \right)^{|\gamma|/2} \partial^{\alpha + \gamma} g(x). \qedhere
\end{align*}
\end{proof}

\section{Gaussian restrictions of polynomials}
Here we prove the structural properties of gaussian restrictions of polynomials: \cref{th:rest}, \cref{lm:derivatives}, \cref{lm:hypervarintro}. Note that \cref{th:rest} follows immediately from \cref{lm:hypervarintro} and \cref{lm:moment}. We next prove \cref{lm:hypervarintro} from \cref{lm:derivatives}.

\begin{proof}[Proof of \cref{lm:hypervarintro} from \cref{lm:derivatives}]

Define $f(x) := U_{\sqrt{1-\lambda}} p(x)$.
Then, by \cref{lm:hermite1},
\begin{align*}
    p_{x}(y) &= f(x) + \sum_{\alpha \neq 0} \frac{\partial^{\alpha} f(x)}{\sqrt{\alpha !}} \left(\frac{\lambda}{1-\lambda} \right)^{|\alpha|/2} h_{\alpha}(y). 
\end{align*}
Thus,
\begin{align*}
    \hvar_R(p_x) = \sum_{\alpha \neq 0} \left(\frac{\partial^{\alpha} f(x)}{\sqrt{\alpha !}}\right)^2 \left(\frac{\lambda}{1-\lambda} \right)^{|\alpha|} R^{2|\alpha|} &\leq \sum_{\alpha \neq 0} \left(\partial^{\alpha} f(x)\right)^2 \left(\frac{\lambda}{1-\lambda} \right)^{|\alpha|} R^{2|\alpha|} \\&= \sum_{k=1}^d R^{2k} \left(\frac{\lambda}{1-\lambda}\right)^k \|\nabla^k f(x)\|^2,
\end{align*}
where the first inequality follows as $\sqrt{\alpha!} \geq 1$. 

We now conclude by applying \cref{lm:derivatives} to $f$. We have
$$ H_R(p_x) = \frac{\sum_{k=1}^{d} R^{2k} \left(\frac{\lambda}{1-\lambda}\right)^k \| \nabla^k f(x) \|^2  }{f(x)^2} .$$
Except with probability $\delta$ over $x \sim N(0,1)^n$, we can bound this by
\begin{equation*} \sum_{k=1}^d R^{2k} \left(\frac{\lambda}{1-\lambda}\right)^k \left( \frac{C d^3}{\delta}\right)^{2k} \leq  O\left(\frac{\lambda d^6 R^2}{\delta^2} \right). \qedhere
\end{equation*}
\end{proof}

\subsection{Proof of \texorpdfstring{\cref{lm:derivatives}}{Lemma 1.8}}
Our main tool will be Kane's relative-anticoncentration lemma for gaussian polynomials

\begin{lem}[\cite{Kane13}]
For a degree $d$ polynomial $p$, and independent standard gaussian vectors\\ $x,y \in \R^n$, 
$$\prob{ |p(x)| \leq \eps | \ip{y}{\nabla p(x)} |} \leq O(\eps d^2).$$
\end{lem}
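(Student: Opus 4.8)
The plan is to use the rotational symmetry of a pair of independent Gaussians to reduce this probabilistic statement to a deterministic estimate on the size of the set where a trigonometric polynomial is small compared to its derivative. For $\theta \in \R$, set $x_\theta := \cos\theta\,x + \sin\theta\,y$ and $y_\theta := \tfrac{d}{d\theta}x_\theta = -\sin\theta\,x+\cos\theta\,y$. The linear map $(x,y)\mapsto(x_\theta,y_\theta)$ is an orthogonal transformation of $\R^{2n}$, so $(x_\theta,y_\theta)$ is distributed exactly like $(x,y)$ for every $\theta$; hence the probability in the statement is unchanged under this substitution, and therefore equals its average over $\theta\in[0,2\pi)$. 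Setting $q(\theta):=p(x_\theta)$, the chain rule gives $q'(\theta)=\langle y_\theta,\nabla p(x_\theta)\rangle$, so
$$\prob{|p(x)|\le\eps|\ip{y}{\nabla p(x)}|}=\E_{x,y}\!\left[\frac{1}{2\pi}\int_0^{2\pi}\ind{|q(\theta)|\le\eps|q'(\theta)|}\,d\theta\right].$$
It thus suffices to show that the inner integral is $O(\eps d^2)$ for (almost) every fixed $(x,y)$.

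\textbf{Pass to the unit circle.} Since $p$ has degree $\le d$ and each coordinate of $x_\theta$ is a degree-$1$ trigonometric polynomial, $q$ is a trigonometric polynomial of degree $\le d$: with $z=e^{i\theta}$ we may write $q(\theta)=z^{-d}P(z)$ for an ordinary polynomial $P$ of degree $m\le 2d$, and $P\not\equiv 0$ since $q(0)=p(x)\ne 0$ almost surely (assuming $p\not\equiv 0$; the contrary case is trivial). Factoring $P(z)=c\prod_{k=1}^m(z-\rho_k)$ and using $\tfrac{d}{d\theta}=iz\tfrac{d}{dz}$, a short computation gives
$$\frac{q'(\theta)}{q(\theta)}=i\Big(z\frac{P'(z)}{P(z)}-d\Big)=i\Big((m-d)+\sum_{k=1}^m\frac{\rho_k}{z-\rho_k}\Big).$$

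\textbf{Deterministic arc bound.} We may assume $\eps<1/(8d)$, as otherwise $O(\eps d^2)$ is a vacuous bound. Then $|m-d|\le d<\tfrac{1}{2\eps}$, so the event $|q(\theta)|\le\eps|q'(\theta)|$, i.e. $|q'/q|\ge 1/\eps$, forces $\sum_{k=1}^m\tfrac{|\rho_k|}{|z-\rho_k|}\ge\tfrac{1}{2\eps}$, and by pigeonhole over the $m\le 2d$ terms, $|z-\rho_k|\le 4d\eps\,|\rho_k|$ for some $k$. If $|\rho_k|>2$ this is impossible, since then $|z-\rho_k|\ge|\rho_k|-1>|\rho_k|/2$ while $4d\eps<1/2$. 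If $|\rho_k|\le 2$, the solution set is the intersection of the unit circle with a disk of radius $\le 8d\eps$, an arc of length $O(d\eps)$. Taking the union over the $\le 2d$ roots bounds the measure of $\{\theta:|q(\theta)|\le\eps|q'(\theta)|\}$ by $O(d^2\eps)$; dividing by $2\pi$ and substituting into the displayed identity above completes the proof.

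\textbf{Main obstacle.} The symmetrization is exact and the reduction to the logarithmic derivative of $P(z)$ is a routine calculation; the only genuinely delicate point is the deterministic estimate of the last step — pinning down the $O(d\eps)$ arc length per root with a uniform constant, and arguing correctly that roots far from the unit circle contribute nothing so that no mass is lost by discarding them. One should also double-check the pigeonholing is done so that the multiplicities of repeated roots are absorbed into the bound $m \le 2d$ rather than costing an extra factor.
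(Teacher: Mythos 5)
Your proof is correct; the paper does not prove this lemma itself but imports it from \cite{Kane13}, and your argument --- rotational symmetrization to the trigonometric polynomial $\theta \mapsto p(\cos\theta\,x+\sin\theta\,y)$, passing to $z^{-d}P(z)$ with $\deg P\le 2d$, and bounding the bad set by a union of $O(\eps d)$-length arcs around the $\le 2d$ roots --- is essentially Kane's original proof, with the $O(\eps d^2)$ arising exactly as you describe. The only cosmetic caveats: discard the (finite, measure-zero) set of $\theta$ with $q(\theta)=0$ before dividing by $q$, and note that the identically-zero polynomial must simply be excluded from the statement (for $p\equiv 0$ the probability is $1$, so that case is not ``trivial'' but vacuous by convention on degree).
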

In fact, we will actually work with the following corollary which is essentially the first of the $d$ inequalities in \cref{lm:derivatives}. 
\begin{cor} \label{cor:kane_cor}
For a degree $d$ polynomial $p$, and independent standard gaussian vector $x \in \R^n$,
$$\prob{ |p(x)| \leq \eps \| \nabla p(x) \| } \leq O(\eps d^2) .$$
\end{cor}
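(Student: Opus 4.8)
The plan is to derive \cref{cor:kane_cor} from Kane's relative-anticoncentration lemma by conditioning on $x$ and exploiting the elementary fact that, for $y \sim N(0,1)^n$ independent of $x$, the random variable $\ip{y}{\nabla p(x)}$ is a one-dimensional gaussian with standard deviation exactly $\|\nabla p(x)\|$.

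First I would dispose of the degenerate cases. If $p$ is a nonzero constant then $\nabla p \equiv 0$ and $|p(x)| > 0 = \eps\|\nabla p(x)\|$ always, so the left-hand side is $0$; the case $p \equiv 0$ is excluded (or vacuous). Hence we may assume $p$ is non-constant, so that $\|\nabla p(x)\|^2 = \sum_i (\partial_i p(x))^2$ is a nonzero polynomial and therefore $\nabla p(x) \neq 0$ for all $x$ outside a set of gaussian measure zero.

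The key step is the following conditioning argument. Let $A$ be the event $\{\,|p(x)| \leq \eps\|\nabla p(x)\|\,\}$, which depends on $x$ alone, and let $B$ be the event $\{\,|\ip{y}{\nabla p(x)}| \geq \|\nabla p(x)\|\,\}$. For every fixed $x$ with $\nabla p(x) \neq 0$, independence of $y$ gives that $\ip{y}{\nabla p(x)}/\|\nabla p(x)\|$ is a standard gaussian, so $\prob{B \mid x} = c_0$, where $c_0 := \prob{|g| \geq 1}$ with $g \sim N(0,1)$ is a positive absolute constant. On the event $A \cap B$ we have $|p(x)| \leq \eps\|\nabla p(x)\| \leq \eps\,|\ip{y}{\nabla p(x)}|$, hence $A \cap B \subseteq \{\,|p(x)| \leq \eps\,|\ip{y}{\nabla p(x)}|\,\}$, and Kane's lemma bounds $\prob{A \cap B} \leq O(\eps d^2)$. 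On the other hand, since $A$ is $x$-measurable and $\prob{B \mid x} = c_0$, we get $\prob{A \cap B} = c_0\,\prob{A}$. Combining the two estimates, $\prob{A} \leq O(\eps d^2)/c_0 = O(\eps d^2)$, which is the desired bound.

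I do not expect a genuine obstacle here: the whole argument is essentially a one-line conditioning computation once Kane's lemma is in hand. The only points needing a little care are (i) restricting to the measure-one set where $\nabla p(x) \neq 0$, which is why I reduce to $p$ non-constant, and (ii) checking that $c_0$ is truly an absolute constant — which holds because, after conditioning on $x$, the ratio $\ip{y}{\nabla p(x)}/\|\nabla p(x)\|$ is exactly $N(0,1)$, independent of $x$, $n$, and $d$.
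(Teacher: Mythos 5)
Your proposal is correct and matches the paper's proof essentially verbatim: both condition on $x$, use that $\ip{y}{\nabla p(x)}/\|\nabla p(x)\|$ is a standard gaussian independent of $x$, intersect the event $|p(x)| \leq \eps\|\nabla p(x)\|$ with the constant-probability event $|Z|\geq 1$, and invoke Kane's lemma. The only (welcome) difference is that you explicitly dispose of the degenerate $\nabla p(x)=0$ case, which the paper leaves implicit.
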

\begin{proof}
We note that for any fixed $x$, $\ip{y}{\nabla p(x)}$ is identical in distribution to $Z  \| \nabla p(x) \|$, 
where $Z \sim N(0,1)$ is a standard gaussian. 
So, we express
\begin{align*}
    \prob{ |p(x)| \leq \eps | \ip{y}{\nabla p(x)} |} &= 
    \prob{ |p(x)| \leq \eps |Z|  \| \nabla p(x) \| } \\
    &\geq \prob{|p(x)| \leq \eps \| \nabla p(x) \| } \cdot \prob{|Z| \geq 1}.
\end{align*}
Since $\prob{|Z| \geq 1} \geq \Omega(1)$, we conclude that
\begin{equation*}
 \prob{|p(x)| \leq \eps \| \nabla p(x) \| } \leq O(\eps d^2) .\qedhere
\end{equation*}
\end{proof}

The heart of the proof of \cref{lm:derivatives} is a vector-valued variant of the above corrollary:

\begin{lem} \label{lm:main}
Let $\vv{f}(x) := \left(f_1(x), f_2(x), \ldots, f_m(x)\right)$ be a collection of $m$ degree-at-most $d$ polynomials $f_j(x)$. If $x \in \R^n$ is a standard gaussian vector, then
$$ \prob{ \|\vv{f}(x)\|^2 \leq \eps^2 \sum_{j=1}^m \|\nabla f_j(x) \|^2 } \leq O(\eps d^2) .$$
\end{lem}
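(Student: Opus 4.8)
The plan is to reduce \cref{lm:main} to the scalar \cref{cor:kane_cor} by applying the latter to a random Gaussian linear combination of the $f_j$, and then to handle the ``reverse'' inequality that appears via a small anti-concentration estimate for quadratic forms in Gaussian variables. Concretely, for a fixed weight vector $w \in \R^m$ let $g_w(x) := \langle w, \vv{f}(x)\rangle = \sum_j w_j f_j(x)$, a polynomial of degree at most $d$ with $\nabla g_w(x) = M(x)w$, where $M(x)$ is the $n\times m$ matrix whose $j$-th column is $\nabla f_j(x)$ (so that $\|M(x)\|_F^2 = \sum_j \|\nabla f_j(x)\|^2 =: S(x)$). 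We may assume $\vv{f}\not\equiv 0$, so $g_w \not\equiv 0$ for all $w$ outside a proper subspace of $\R^m$, and applying \cref{cor:kane_cor} to each such $g_w$ and averaging over $w \sim N(0,I_m)$ gives, by Fubini,
$$\E_{x}\,\Pr_{w}\!\left[\,|\langle w, \vv{f}(x)\rangle| \leq \eps\,\|M(x)w\|\,\right] \;\leq\; O(\eps d^2),$$
using that the constant in \cref{cor:kane_cor} is absolute (Kane's lemma is dimension-free, so it applies verbatim to $g_w$ on $\R^n$).

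It now suffices to show that for every $x$ in the bad event $B := \{\|\vv{f}(x)\|^2 \leq \eps^2 S(x)\}$ — discarding the null set where $\vv{f}(x) = 0$ — the inner probability is at least a universal constant $c_0 > 0$; this yields $c_0 \Pr[B] \leq O(\eps d^2)$, i.e.\ the lemma. Fix such an $x$ and set $\hat u := \vv{f}(x)/\|\vv{f}(x)\|$, so that $\langle w, \vv{f}(x)\rangle = \langle w, \hat u\rangle\,\|\vv{f}(x)\|$ and hence, on $B$, $|\langle w, \vv{f}(x)\rangle| \leq \eps\,|\langle w, \hat u\rangle|\sqrt{S(x)}$. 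Thus the event in question is implied by $\langle w, \hat u\rangle^2\, S(x) \leq \|M(x)w\|^2$, that is, by $w^{\top} A w \geq 0$ where $A := M(x)^{\top}M(x) - S(x)\,\hat u\hat u^{\top}$, and it is enough to prove $\Pr_{w}[\,w^{\top}Aw \geq 0\,] \geq c_0$.

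The matrix $A$ is symmetric with $\mathrm{tr}(A) = \|M(x)\|_F^2 - S(x) = 0$, and — the one structural point — it has \emph{at most one} negative eigenvalue: for $v \perp \hat u$ we have $v^{\top}Av = \|M(x)v\|^2 \geq 0$, so $A$ is positive semidefinite on a codimension-$1$ subspace. If $A = 0$ the inequality holds with probability $1$; otherwise $A$ has exactly one negative eigenvalue $\lambda_1 < 0$ and the rest satisfy $\lambda_i \geq 0$ with $\sum_{i\geq 2}\lambda_i = |\lambda_1|$. Writing $w$ in the eigenbasis (where its coordinates $\gamma_i$ are i.i.d.\ $N(0,1)$) and rescaling so that $|\lambda_1| = 1$, we get $w^{\top}Aw \geq 0 \iff Y \geq \gamma_1^2$, where $Y := \sum_{i\geq 2}\lambda_i\gamma_i^2$ is nonnegative, independent of $\gamma_1$, and satisfies $\E Y = 1$ and $\E[Y^2] = 1 + 2\sum_{i\geq 2}\lambda_i^2 \leq 3$. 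By Paley--Zygmund, $\Pr[Y \geq \tfrac12] \geq \tfrac14 / \E[Y^2] \geq \tfrac1{12}$, and since $\Pr[\gamma_1^2 \leq \tfrac12] > \tfrac12$ and $Y$ is independent of $\gamma_1$, $\Pr[Y \geq \gamma_1^2] \geq \tfrac12 \cdot \tfrac1{12} = \tfrac1{24} =: c_0$.

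I expect the only genuinely nontrivial step to be recognizing and using that $A$ has at most one negative eigenvalue: this is precisely what makes the residual form $Y$ an honest nonnegative random variable, so that a second-moment argument controls $\Pr[Y \geq \gamma_1^2]$ from below — for a general indefinite quadratic form in Gaussians no such constant lower bound holds. Everything else (the measure-zero exceptional sets $\{\vv{f}(x) = 0\}$, $\{w : g_w \equiv 0\}$, and $\{A = 0\}$, and the bookkeeping of absolute constants) is routine.
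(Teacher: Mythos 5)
Your proof is correct, and it takes a genuinely different route from the paper's, though both are decoupling arguments built on \cref{cor:kane_cor}. The paper forms the auxiliary polynomial $h(x,y) := \sum_j f_j(x) y_j$ of degree $d+1$ on $\R^{n+m}$ and applies Kane's corollary once to the joint Gaussian $(x,y)$; you instead keep the degree at $d$ by fixing $w$ first, applying the corollary to $g_w(x) = \langle w, \vec f(x)\rangle$, and averaging via Fubini. This is not merely cosmetic: the paper's joint gradient $\|\nabla h(x,y)\|^2 = \|\vec f(x)\|^2 + \|M_x y\|^2$ carries an extra $\|\vec f(x)\|^2$ term, which forces them to introduce a slack constant $C$, define a cutoff event $E$ in $x$ alone, and combine a Chebyshev bound on $\langle y,\vec f(x)\rangle$ with a Paley--Zygmund/hypercontractivity anti-concentration bound for the quadratic form $\|M_x y\|^2$. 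Your conditioning removes that term, so for each bad $x$ the statement to verify reduces cleanly to $\Pr_w[w^\top A w \ge 0] \ge \Omega(1)$ with $A := M(x)^\top M(x) - S(x)\,\hat u\hat u^\top$. The spectral observation that $A$ is traceless and PSD on $\hat u^\perp$ (hence has at most one negative eigenvalue) is the nice new ingredient: it lets you compute the second moment of the residual $Y = \sum_{i\ge 2}\lambda_i\gamma_i^2$ exactly (using $\lambda_i \ge 0$ so $\sum\lambda_i^2 \le (\sum\lambda_i)^2$) and run Paley--Zygmund without invoking hypercontractivity. The paper buys conceptual economy by treating $(x,y)$ as one Gaussian and piggybacking on a generic degree-$2$ anti-concentration fact; you buy a more self-contained and somewhat sharper-in-spirit constant-probability lower bound by exploiting the rank-one-perturbation structure of $A$. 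Either works; your spectral route would be a reasonable alternative to include as a remark.
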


\begin{proof}[Proof of \cref{lm:derivatives}]
We simply apply the above lemma $d$ times and take a union bound. For $1 \leq k \leq d$, let $\vv{f}_k(x) := ((\partial^\alpha f(x): |\alpha| = k))$. Note that $\|\vv{f}_k(x)\|^2 = \|\nabla^k f(x)\|^2$. Further, note that
$$\sum_{\alpha: |\alpha| = k} \|\nabla (\partial^\alpha f(x))\|^2 \geq  \|\nabla^{k+1} f(x)\|^2,$$
where the inequality follows as each $(k+1)$'th order derivative would be counted at least once in the expression on the left hand side. Therefore, by the above lemma, for $x \sim N(0,1)^n$, we have
$$\prob{ \|\nabla^k f(x)\|^2 \leq \eps^2  \|\nabla^{k+1} f(x)\|^2 } \leq O(\eps d^2) $$

Setting $\eps = \delta/d^3$, and taking a union bound over all $k$, we get that for a constant $C > 0$, 
$$\prob{\forall k, \|\nabla^k f(x)\|^2 > C (\delta^2/d^6) \|\nabla^{k+1} f(x)\|^2} \geq 1 - \delta.$$

This proves \cref{lm:derivatives}.
\end{proof}

\begin{proof}[Proof of \cref{lm:main}.]
Consider the auxiliary polynomial 
$$h(x,y) := \sum_{j=1}^m f_j(x) y_j.$$
As a function of both $x$ and $y$, we have
$$ \nabla h(x,y) = \vv{f}(x) \circ M_x y ,$$
where $M_x$ is the matrix with columns $\nabla f_j(x)$ (that is, $M_x$ has $(i,j)$-th entry $\frac{\partial}{\partial x_i} f_j(x)$).
So, applying \cref{cor:kane_cor} to this auxiliary polynomial gives the probability bound
\begin{align*}
 q &:= \prob{h(x,y)^2 \leq \eps^2 \| \nabla g(x,y) \|^2}  \\
 &= \prob{ \ip{y}{\vv{f}(x)}^2 \leq \eps^2 \left( \|\vv{f}(x)\|^2 + \|M_x y\|^2 \right)} \\
 &\leq O(\eps d^2).
\end{align*}
Now, for some constant $C \geq 2$ to be specified later, let $E$ denote the event that
$$ (C^2 - 1) \|\vv{f}(x)\|^2 \leq \frac{\eps^2}{2} \|M_x\|_{F}^2 ,$$
where $\|M_x\|_{F}$ is the Frobenius norm of $M_x$. We note that we can lower-bound the probability $q$ by
$$ q \geq \prob{E} \cdot \prob{\left|\ip{y}{\vv{f}(x)}\right| \leq C \|\vv{f}(x)\| \textnormal{ and } \|M_x y\|^2 \geq \frac{1}{2} \|M_x\|_F^2 \; | E }. $$
We claim that for large enough choice of constant $C$, this conditional probability can be lower-bounded by $\Omega(1)$. Indeed, we can argue for any fixed $x$: 
\begin{itemize}
    \item $\prob{\left|\ip{y}{\vv{f}(x)}\right| \geq C \|\vv{f}(x)\|} \leq \frac{1}{C^2}$.
    \item $\prob{\|M_x y\|^2 \geq \frac{1}{2} \|M_x\|_F^2} \geq \Omega(1)$.
\end{itemize}
The first item is just a Chebyshev inequality; the second item can be derived e.g.\ from the basic anticoncentration bound one obtains for degree-2 polynomials from the Paley-Zygmund bound together with hypercontractivity (since, for any fixed matrix $M$, the quadratic form $g(y) := \|M y\|^2$ has second-moment $\E g(y)^2 \geq (\E g(y))^2 = \|M\|_F^2$).

Thus, by choosing $C$ large enough, we can lower-bound this conditional probability by 
$$ \Omega(1)  - \frac{1}{C^2} \geq \Omega(1) .$$
We conclude that $\prob{E} \leq O(q) = O(\eps d^2)$.
This gives the desired conclusion
\begin{equation*}
 \prob{\|\vv{f}(x)\| \leq \Omega(\eps) \|M_x \|_F} \leq O(\eps d^2) . \qedhere
\end{equation*}
\end{proof}


\section{Pseudorandom Generator for PTFs} \label{sec:prg}
The following theorem gives quantitative bounds on the error of our main generator: 
\begin{thm}\label{th:prgmain}
Fix some parameters $\eps > 0$ and $R \in \N$.
Let $z$ be a standard gaussian, and let $Z = \frac{1}{\sqrt{L}}  \sum_{i=1}^{L} Y_i$, where each $Y_i$ is   $dR$-moment-matching. Then for some sufficiently large absolute constant $c$ and any polynomial $p$ of degree $d$, 
$$ \E_Z \sign(p(Z)) \geq \E_{z \sim N(0,1)^n} \sign(p(z)) -  O(\eps d^3) - L \cdot 2^{-\Omega(R)}, $$
as long as $L$ is at least $R d^c / \eps^2$.
\end{thm}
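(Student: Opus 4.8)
\textbf{Proof plan for \cref{th:prgmain}.}
The plan is to run the mollifier/sandwiching strategy outlined in \cref{sec:prgoutline}. Fix the degree-$d$ polynomial $p$ and set $\lambda = 1/L$, so that $\lambda \leq \eps^2/(Rd^c)$ by hypothesis. First I would define the smooth mollifier
$$ g(x) := \prod_{k=0}^{d-1} \rho\!\left( \log\!\left( \frac{1}{16\eps^2} \frac{\|\nabla^k p(x)\|^2}{\|\nabla^{k+1} p(x)\|^2} \right) \right), $$
where $\rho:\R \to [0,1]$ is a fixed smooth function with $\rho(t)=0$ for $t \leq 0$ and $\rho(t)=1$ for $t \geq 1$, and with bounded derivatives of all orders (this is what will later let us Taylor-truncate $g$ in \cref{lm:error1}). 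The mollifier satisfies $g(x) \in [0,1]$ everywhere, $g(x)=1$ whenever $p$ is well-behaved at $x$ (in the sense $\|\nabla^{k+1}p(x)\| \le (1/4\eps)\|\nabla^k p(x)\|$ for all $k$), and $g(x) = 0$ whenever $p$ is sufficiently poorly-behaved at $x$. Then pointwise $\sign(p(x)) \geq \sign(p(x)) g(x)$, and under a true gaussian $z$ we have $\E_z \sign(p(z)) g(z) \geq \E_z \sign(p(z)) - \E_z|1 - g(z)|$, where $\E_z|1-g(z)| = O(\eps d^3)$ directly from \cref{lm:derivatives} (apply it with error parameter $\delta = \Theta(\eps d^3)$, so that $O(d^3/\delta)\|\nabla^{k-1}p(z)\| \geq \|\nabla^k p(z)\|$ fails for some $k$ only with probability $O(\eps d^3)$, and whenever it holds $g(z)=1$).

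Combining these two inequalities gives
$$ \E_Z \sign(p(Z)) \geq \E_z \sign(p(z)) - O(\eps d^3) - \left| \E_Z \sign(p(Z)) g(Z) - \E_z \sign(p(z)) g(z) \right|, $$
so it remains to bound the last difference. Here I would write $z = \frac{1}{\sqrt L}\sum_{i=1}^L y_i$ with $y_i$ i.i.d.\ standard gaussians, recall $Z = \frac{1}{\sqrt L}\sum_{i=1}^L Y_i$, and run the standard hybrid argument: interpolate by swapping $Y_i \to y_i$ one index at a time. For the $i$-th swap, condition on all the other summands and write $x$ for the (fixed) value of $\sqrt{1-\lambda}\cdot\big(\tfrac{1}{\sqrt{L(1-\lambda)}}\sum_{j\neq i} (\cdot)_j\big)$ appropriately normalized, so that the remaining randomness is exactly $\sqrt{1-\lambda}\, x + \sqrt{\lambda}\,(\text{either } y_i \text{ or } Y_i)$; the function $w \mapsto \sign(p(\sqrt{1-\lambda}x + \sqrt{\lambda}w)) g(\sqrt{1-\lambda}x + \sqrt{\lambda}w)$ has exactly the form appearing in \cref{lm:error1}. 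Applying \cref{lm:error1} (whose hypothesis $\lambda \leq \eps^2/(Rd^c)$ matches ours) bounds each single-swap error by $\gamma = 2^{-\Omega(R)}$, and summing over the $L$ swaps gives $\big|\E_Z \sign(p(Z)) g(Z) - \E_z \sign(p(z)) g(z)\big| \leq L \cdot 2^{-\Omega(R)}$. Substituting back yields exactly the claimed bound. For the matching upper bound on $\E_Z \sign(p(Z))$ — not needed for the stated one-sided inequality but worth noting — one reruns the whole argument with $-p$ in place of $p$, using $1 - \sign(p(x)) = \sign(-p(x))$ up to the measure-zero set $\{p(x)=0\}$.

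The main obstacle is entirely packaged into \cref{lm:error1}: the hybrid argument only goes through because each swap costs $2^{-\Omega(R)}$ rather than something of order $\sqrt\lambda = 1/\sqrt L$, and this is exactly the point where the naive union-bound approach failed. The content of \cref{lm:error1} is the case analysis sketched in the overview — when $p$ is well-behaved at the fixed point $x$ one uses \cref{lm:moment} (via the moment/hypercontractivity bound, which is insensitive to replacing $y$ by a $dR$-moment-matching $Y$) to argue $\sign(p(\sqrt{1-\lambda}x+\sqrt{\lambda}\cdot))$ is near-constant and then controls the fooling of the mollifier $g$ by Taylor-truncating it to degree $O(dR)$ and bounding the truncation tail by higher moments of $\|\nabla^k p(\sqrt{1-\lambda}x+\sqrt\lambda Y)\| - \|\nabla^k p(\sqrt{1-\lambda}x)\|$ using \cref{lm:derivatives}-type control at $x$; and when $p$ is poorly-behaved at $x$, with $k_0$ the top bad level, one shows via an $R$-th moment bound that $\|\nabla^{k_0}p(\sqrt{1-\lambda}x+\sqrt\lambda Y)\| \le 4\eps\|\nabla^{k_0+1}p(\sqrt{1-\lambda}x+\sqrt\lambda Y)\|$ with probability $1 - 2^{-\Omega(R)}$, so the mollifier reads $g = 0$ on both sides. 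In either case the error is moment-based, hence survives the moment-matching, and is $2^{-\Omega(R)}$. The only subtlety I would be careful about when assembling the pieces is bookkeeping the normalization factors $\sqrt{1-\lambda}$ in the conditioning step of the hybrid argument and the non-multilinear case, where (as the overview notes) the relevant ``derivatives'' at $x$ are replaced by the Hermite-coefficient data of $U_{\sqrt{1-\lambda}}p$ via \cref{lm:hermite1} and \cref{lm:hermite2}; but these do not change the structure of the argument above.
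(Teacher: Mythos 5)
Your proposal matches the paper's proof essentially verbatim: same mollifier $g$, same pointwise lower-sandwiching inequality $\sign(p) \geq \sign(p)\,g$, same $O(\eps d^3)$ gaussian bound via \cref{lm:derivatives}, and the same hybrid argument invoking \cref{lm:error1} once per swap. The only quibble is a bookkeeping slip in the conditioning step: \cref{lm:error1} is stated for $p(x + \sqrt{\lambda}\,\cdot)$ with $x$ an arbitrary fixed vector, so you should simply set $x := \frac{1}{\sqrt{L}}\sum_{j\neq i}(\cdot)_j$ and $\lambda = 1/L$ rather than introducing an extra $\sqrt{1-\lambda}$ prefactor (which, as written, double-counts) — but you flag this yourself and it does not affect the argument.
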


Combining the above with \cref{th:discretize} immediately implies our main result \cref{th:mainintro}.

\begin{proof}[Proof of \cref{th:mainintro}]
Given a target error $\eps'$, set $\eps = \eps'/C d^3$, and $R = C \log(d/\epsilon)$ for a sufficiently big constant so that the error in the above lemma is at most $\eps'/2$ for $L = R d^c/\eps^2 = O(d^c \log(d/\eps)/\eps^2)$. While the above theorem only gives a lower bound, we can get an upper bound by applying the result to $-p$. Now, by applying \cref{th:discretize} there exists an efficient PRG that fools degree $d$ PTFs with error at most $\eps'$ and seedlength $O(d^{O(1)} \log(n d /\eps') \log(d/\eps')/(\eps')^2$ which can be simplified to the bound in the theorem.
\end{proof}

We now prove the above theorem by the lower-sandwiching argument outlined in \cref{sec:prgoutline}.
Fix a polynomial $p(x)$ of degree $d$.
We remind the reader of our convention $\sign(t) := \ind{t \geq 0}$. 

We define the mollifier function 
$$ g(x) := \prod_{k=0}^{d-1} \rho \left( \log\left( \frac{1}{ 16 \eps^2}\frac{ \|\nabla^{k} p(x) \|^2 }{\|\nabla^{k+1} p(x) \|^2} \right) \right),
$$
where $\rho:\R \rightarrow [0,1]$ is some smooth univariate function with $\rho(t) = 0$ for $t \leq 0$, $\rho(t) = 1$ for $t \geq 1$, and $ \| \frac{\partial^k \rho}{\partial t^k} \|_{\infty} \leq k^{O(k)}$ for all $k$. \footnote{
For example, it suffices to let $\rho(t)$ be the standard mollifier $\rho(t) := 0$ for $t \leq 0$, $\rho(t) := 1$ for $t \geq 1$, and $\rho(t) := e \cdot \textnormal{exp}\left(\frac{1}{(t-1)^2 - 1} \right)$ for $t \in (0,1)$.
} 
\begin{proof}[Proof of \cref{th:prgmain}]
For every point $x \in \R^n$ we have
\begin{align*}
    \sign(p(x)) \geq \sign(p(x)) g(x).
\end{align*}
Furthermore, under the truly-random gaussian inputs $z \sim N(0,1)^n$ we have
\begin{align*}
    \E_z \sign(p(z)) g(z) \geq \E_z \sign(p(z)) - \E_z |g(z) - 1| \geq \E_z \sign(p(z))  - O(\eps d^3),
\end{align*}
where the final inequality here follows from \cref{lm:derivatives}.
Combining these, we get that
\begin{align*}
    \E_Z \sign(p(Z)) \geq \E_{z \sim N(0,1)^n} \sign(p(z))  - O( \eps d^3) - |\E_Z \sign(p(Z)) g(Z) - \E_{z \sim N(0,1)^n} \sign(p(z)) g(z)|.
\end{align*}

Thus, it suffices to bound $|\E_Z \sign(p(Z)) g(Z) - \E_{z \sim N(0,1)^n} \sign(p(z)) g(z)|$, which we do by a hybrid argument. We first represent $z$ as $z := \frac{1}{\sqrt{L}}\sum_{i=1}^L y_i$ where each $y_i$ is an independent standard gaussian.
We can replace each $Y_i$ with $y_i$ and get 
$$ |\E_Z \sign(p(Z)) g(Z) - \E_y \sign(p(y)) g(y)| \leq 2^{-\Omega(R)} L,$$ 
as a consequence of the following lemma (restatement of \cref{lm:error1}) that we prove in the next section. \cref{th:prgmain} now follows.
\end{proof}
\begin{lem}[Main hybrid-step] \label{lm:error2}
There exists a constant $c$ such that the following holds for $\lambda \leq \epsilon^2/R d^c$. For any fixed vector $x \in \R^n$, $Y$ a $dR$-moment-matching gaussian vector, and $y \sim N(0,1)^n$, 
$$ |\E_Y \sign(p(x + \sqrt{\lambda} Y)) g(x + \sqrt{\lambda} Y) - \E_y \sign(p(x + \sqrt{\lambda} y)) g(x + \sqrt{\lambda} y) | \leq \gamma = 2^{-\Omega(R)}.$$
\end{lem}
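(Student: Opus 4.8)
The plan is to prove \cref{lm:error2} by a case analysis on the growth profile of the derivatives of $p$ at the fixed point $x$, with the same engine driving both cases: combine the Hermite-basis Taylor expansion of \cref{lm:hermite2} with hypercontractivity to get $q$-th moment control (for $q=\Theta(R)$) on the polynomials $a_k(y) := \|\nabla^k p(x+\sqrt\lambda y)\|^2$, and observe that any such bound, being a statement about $\E[\phi(y)^q]$ for a polynomial $\phi$ of degree $O(d)$, transfers verbatim from gaussian $y$ to the $dR$-moment-matching $Y$. Concretely, writing $x=\sqrt{1-\lambda}\,x'$ and $g:=U_{\sqrt{1-\lambda}}p$, \cref{lm:hermite2} writes each $\partial^\alpha p(x+\sqrt\lambda y)$ as a polynomial in $y$ whose degree-$j$ Hermite coefficients are $(\lambda/(1-\lambda))^{j/2}$ times combinations of order-$(|\alpha|+j)$ derivatives of $g$ at $x'$; crucially, the combinatorial weights $\sqrt{\alpha!/\beta!}$ keep these combinations sub-exponential in $d$ (the sums collapse with only a $\mathrm{poly}(d)^j/j!$ factor), which is what yields polynomial rather than quasipolynomial dependence on $d$. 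It follows that whenever $p$ is \emph{well-behaved above level $k$} --- meaning $\|\nabla^{k+1+i}p(x)\| \leq \eps^{-i}\|\nabla^{k+1}p(x)\|$ for all $i\geq 0$, so that the derivative tail from level $k$ decays geometrically --- one gets $\E_y a_k = A_k\bigl(1 + O(\lambda\,\mathrm{poly}(d)/\eps^2)\bigr)$ and $\hvar_{\sqrt q}(a_k - \E_y a_k) \leq \mathrm{poly}(d)\,(q\lambda/\eps^2)\,A_k^2$, with $A_k := \|\nabla^k p(x)\|^2$; equivalently, the fluctuation $\delta_k(y) := a_k(y)/A_k - 1$ has $\|\delta_k\|_q \leq \sqrt{q\lambda\,\mathrm{poly}(d)}/\eps$, which under the hypothesis $\lambda \leq \eps^2/(Rd^c)$ is at most $1/\mathrm{poly}(dR)$ for $c$ large enough.

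\textbf{Case 1: $p$ is well-behaved at $x$}, i.e.\ $\|\nabla^{k+1}p(x)\| \leq \eps^{-1}\|\nabla^k p(x)\|$ for every $k$. First, using \cref{lm:hermite1} together with well-behavedness one checks $H_{\sqrt q}(h) \leq \tfrac14$ for $h(y):=p(x+\sqrt\lambda y)$ provided $\lambda \leq \eps^2/(q\,\mathrm{poly}(d))$; then \cref{lm:moment} gives that $\sign(h(y))$ and $\sign(h(Y))$ both equal $b := \sign(\hat h(0))$ except with probability $2^{-\Omega(R)}$. Since $|g|\leq 1$, this lets us replace $\sign$ by the constant $b$ on both sides at a cost of $2^{-\Omega(R)}$, reducing the lemma (in this case) to showing $|\E_Y g(x+\sqrt\lambda Y) - \E_y g(x+\sqrt\lambda y)| \leq 2^{-\Omega(R)}$. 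Let $\mathcal E$ be the event $\{a_k \in [A_k/2,\,2A_k]\text{ for all }k\}$; the moment bound above gives $\Pr[\neg\mathcal E] \leq 2^{-\Omega(R)}$ under both distributions. On $\mathcal E$ the denominators $a_{k+1}$ appearing in $g$ stay bounded away from $0$, so $g(x+\sqrt\lambda y) = \Phi(a_0(y),\dots,a_d(y))$ with $\Phi$ smooth on the box $\prod_k[A_k/2,2A_k]$ and $|\partial^{\vec j}\Phi| \leq (C|\vec j|)^{O(|\vec j|)}\prod_k A_k^{-j_k}$ --- the $A_k^{-j_k}$ from differentiating $\log a_k$, the rest from the controlled derivatives of $\rho$. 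Let $\tilde g(y)$ be the degree-$T$ Taylor polynomial of $\Phi$ about $(A_0,\dots,A_d)$, composed with $(a_0(y),\dots,a_d(y))$, for $T=\Theta(R)$; this is a polynomial in $y$ of degree at most $dR$ (with the constant in $T$ chosen appropriately). On $\mathcal E$ the multivariate Taylor remainder telescopes the $A_k^{-j_k}$ against $|a_k-A_k|^{j_k} = A_k^{j_k}|\delta_k|^{j_k}$ and is bounded by $(CdT)^{O(T)}(\max_k|\delta_k|)^{T+1}$, whose expectation is $\leq 2^{-\Omega(R)}$ because $\|\delta_k\|_{T+1} \leq 1/\mathrm{poly}(dR)$; off $\mathcal E$, Cauchy--Schwarz gives $\E\bigl[\ind{\neg\mathcal E}\,|g-\tilde g|\bigr] \leq (1+\|\tilde g\|_2)\sqrt{\Pr[\neg\mathcal E]} \leq 2^{-\Omega(R)}$, a crude $2^{o(R)}$ bound on $\|\tilde g\|_2$ being enough. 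Hence $\E_y g(x+\sqrt\lambda y) = \E_y\tilde g(y) \pm 2^{-\Omega(R)}$, similarly for $Y$, and $\E_y\tilde g = \E_Y\tilde g$ by moment-matching; Case 1 follows.

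\textbf{Case 2: $p$ is not well-behaved at $x$.} Let $k_0$ be the largest $k$ with $\|\nabla^{k+1}p(x)\| > \eps^{-1}\|\nabla^k p(x)\|$, so $p$ is well-behaved above level $k_0$ and $A_{k_0} < \eps^2 A_{k_0+1}$. Applying the moment machinery of the first paragraph at levels $k_0$ and $k_0+1$ (both inside the well-behaved range, so the estimates apply), one gets with probability $1-2^{-\Omega(R)}$ under both $y$ and $Y$ --- again an $O(R)$-th moment bound --- that simultaneously $a_{k_0}(y) \leq 4\eps^2 A_{k_0+1}$ (using $A_{k_0} < \eps^2 A_{k_0+1}$ and $\lambda A_{k_0+1} \ll \eps^2 A_{k_0+1}$ to bound $\E a_{k_0}$, then concentration) and $a_{k_0+1}(y) \geq \tfrac14 A_{k_0+1}$ (using $\E a_{k_0+1} \geq \tfrac12 A_{k_0+1}$ and a lower-tail moment bound). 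Dividing, $a_{k_0}(y) \leq 16\eps^2 a_{k_0+1}(y)$, which forces the $k_0$-th factor of the mollifier to vanish, so $g(x+\sqrt\lambda y)=0$ off an event of probability $2^{-\Omega(R)}$; hence both $\E_Y[\sign(p(x+\sqrt\lambda Y))g(x+\sqrt\lambda Y)]$ and $\E_y[\sign(p(x+\sqrt\lambda y))g(x+\sqrt\lambda y)]$ are $2^{-\Omega(R)}$ in magnitude, and the lemma holds here too.

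The hard part is the $g$-fooling step in Case 1. Unlike $\sign(p(x+\sqrt\lambda y))$, which \cref{lm:moment} disposes of cleanly, the mollifier is a product of $d$ smooth-but-non-polynomial factors built from the ratios $\|\nabla^k p\|^2/\|\nabla^{k+1}p\|^2$, so its low-degree polynomial approximation has to cope with the danger that a denominator is small --- this is precisely what forces the split into $\mathcal E$ (where $\Phi$ is genuinely smooth) and its complement (handled crudely via Cauchy--Schwarz). The quantitatively delicate point is to run the Taylor truncation so that the hypothesis $\lambda \leq \eps^2/(Rd^c)$ already makes the fluctuations $\|\delta_k\|_{T+1}$ small enough to absorb the $(CdT)^{O(T)}$ loss inherent in truncating a compactly-transitioning mollifier; this is exactly where the sub-exponential-in-$d$ combinatorics of \cref{lm:hermite2} is indispensable. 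A secondary, more routine subtlety is that the non-multilinear case must be run through $g = U_{\sqrt{1-\lambda}}p$ at $x' = x/\sqrt{1-\lambda}$ rather than through $p$ at $x$, checking that well-behavedness and the derivative estimates survive this $\lambda$-perturbation up to constants (they do, as $\lambda$ is tiny).
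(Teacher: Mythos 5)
Your proposal is, in structure, the paper's own proof: the same dichotomy on the growth of derivative norms, the same use of the Hermite--Taylor expansion (\cref{lm:hermite2}) plus hypercontractivity to get $R$-th moment control that transfers to a $dR$-moment-matching $Y$ (this is exactly the paper's \cref{lm:moments}), the same two-estimate argument forcing the mollifier to vanish in the poorly-behaved case, and the same Taylor-truncation of the mollifier on a good event with a Cauchy--Schwarz bound off it in the well-behaved case. There is, however, one genuine flaw: you run the case split and the centering on $\nabla^k p(x)$ (your $A_k$) and dismiss the passage to $\phi(z):=U_{\sqrt{1-\lambda}}p(z/\sqrt{1-\lambda})$ as a routine perturbation that holds ``up to constants, as $\lambda$ is tiny.'' That transfer claim is false in general, and it is not a cosmetic issue: the natural centers produced by \cref{lm:hermite2} are $\partial^\beta\phi(x)=\E_y\,\partial^\beta p(x+\sqrt{\lambda}y)$, and $\nabla^k\phi(x)-\nabla^k p(x)$ is a sum of traced higher derivatives (heat semigroup: $\phi = e^{\frac{\lambda}{2}\Delta}p$ up to the rescaling), whose norms carry factors of $n$, not $\mathrm{poly}(d)$; the hypothesis $\lambda\le \eps^2/(Rd^c)$ gives no dimension-dependent smallness.

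Concretely, with $d=2$ take $p(z)=\|z\|^2-\|x\|^2-\lambda n$ at a point $x$ with $\|x\|=\sqrt{n}$, and $n\ge (2Rd^c/\eps)^2$ so that $\lambda\ge 2\eps/\sqrt{n}$ is consistent with $\lambda\le\eps^2/(Rd^c)$. Then $|p(x)|=\lambda n$, $\|\nabla p(x)\|=2\sqrt{n}$, $\|\nabla^2 p\|_F=2\sqrt{n}$, so $p$ is well-behaved at $x$ in your sense and your Case 1 applies; yet $\phi(x)=0$ and $p(x+\sqrt{\lambda}y)=2\sqrt{\lambda}\langle x,y\rangle+\lambda(\|y\|^2-n)$ has sign essentially a fair coin, so the intermediate claims of your Case 1 (sign concentration on a constant $b$, and $\E_y a_k = A_k(1+O(\lambda\,\mathrm{poly}(d)/\eps^2))$) both fail, and with them the reduction from fooling $\sign\cdot g$ to fooling $g$. (The lemma itself is still true for this $(p,x)$ --- it is simply handled by the other case.) The repair is exactly what the paper does and what your own moment machinery already supports: define well-behavedness, the pivot index $k_0$, and all centers directly through the derivatives $\nabla^k\phi(x)$, never comparing them to $\nabla^k p(x)$. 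With the dichotomy restated in terms of $\phi$, your two cases and all your estimates go through and the argument coincides with the paper's.
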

\ignore{
\begin{lem} \label{lm:error2}
Let $x \in \R^n$ be any fixed vector, and let $Y$ be a $d R$-moment-matching gaussian vector, and let $y$ be a standard gaussian vector. Then
$$ |\E_Y \sign(p(x + \sqrt{\lambda} Y)) g(x + \sqrt{\lambda} Y) - \E_y \sign(p(x + \sqrt{\lambda} y)) g(x + \sqrt{\lambda} y) | \leq 2^{-\Omega(R)},$$
as long as
$$ \lambda \leq \frac{\eps^2 }{R d^{c}} $$
for some sufficiently large absolute constant $c$.
\end{lem}}


\subsection{Analysis of the main hybrid-step}
The proof of \cref{lm:error2} is by a case-analysis as outlined in the introduction. Consider the setting as in the lemma and define 
$$\phi(z) := U_{\sqrt{1-\lambda}} p\left( \frac{z}{\sqrt{1-\lambda}}\right).$$

The core argument will be a case-analysis on the derivatives of $\phi$ at the fixed point $x$ and whether these are slow-growing. Note that if $p$ were multi-linear, then we would simply have $\phi \equiv p$. The starting point is the following re-scaling of \cref{lm:hermite1}:
\begin{equation}\label{eq:hermite1-rescaling}
 p\left( x + \sqrt{\lambda} y  \right) = \sum_{|\alpha| \leq d} \frac{\partial^{\alpha} \phi(x)}{\sqrt{\alpha !}} \lambda^{|\alpha|/2} h_{\alpha}(y) .
\end{equation}
Further, by a re-scaling of \cref{lm:hermite2}, we get the following identity which gives a nice nearly self-referential expression relating the derivatives of $p$ to those of $\phi$: 

\begin{equation}\label{eq:hermite2-rescaling}
    \partial^{\alpha}p\left( x + \sqrt{\lambda} y  \right) = \sum_{\beta \geq \alpha} \sqrt{ \frac{\alpha !}{\beta!}}  \partial^{\beta} \phi(x) \lambda^{|\beta - \alpha|/2} h_{\beta - \alpha}(y).
\end{equation}

Now, note that for a truly random gaussian $y$ we have $\partial^{\alpha} \phi(x) = \E_y \partial^\alpha p(x + \sqrt{\lambda} y) $. Thus, it is reasonable to expect that for typical points $x$ and small enough $\lambda$, $\partial^\alpha p(x + \sqrt{\lambda} y)$ will be strongly concentrated around $\partial^{\alpha}\phi(x)$. The following lemma gives quantitative bounds on how much the derivatives $\partial^\alpha p(x + \sqrt{\lambda} y)$ deviate from their expectations $\partial^\alpha \phi(x)$ for a random $y \sim N(0,1)^n$. As we will need such bounds even for $k$-moment-matching $Y$, we state the deviation bound in terms of moments: 
\begin{lem} \label{lm:moments}
Suppose $f$ is a degree-$d$ polynomial, and let $\phi(z) = U_{\sqrt{1-\lambda}} f(\frac{z}{\sqrt{1-\lambda}})$.
Consider the polynomial
$$ D(y) := \|\nabla^{k} f(x + \sqrt{\lambda} y) - \nabla^k \phi(x) \|^2 ,$$
which measures the euclidean distance between the $k$-th order derivatives $\nabla^{k} f(x + \sqrt{\lambda} y)$ and their expectations $ \nabla^k \phi(x) $.

For $y \sim N(0,1)^n$, we have the moment bound
$$ \| D(y) \|_{q/2} \leq \sum_{t =  k + 1}^d (\lambda d q)^{t - k} \| \nabla^{t} \phi(x) \|^2 .$$
That is,
$$ \left(\E_{y \sim N(0,1)^n} \|\nabla^{k} f(x + \sqrt{\lambda} y) - \nabla^k \phi(x) \|^q \right)^{1/q} \leq 
\sqrt{ \sum_{t =  k + 1}^d (\lambda d q)^{t - k} \| \nabla^{t} \phi(x) \|^2} .
$$
\end{lem}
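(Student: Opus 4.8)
The plan is to bound the $(q/2)$-norm of the nonnegative polynomial $D(y)$ by applying the hypercontractive inequality recorded in the preliminaries, which says $\|D\|_{q/2} \leq \sum_{|\gamma|} (q/2)^{|\gamma|}\widehat{D}(\gamma)^2$ once we have $D$ expressed in the Hermite basis. The key is therefore to get a clean Hermite expansion of $D(y) = \sum_{|\alpha| = k}\left(\partial^\alpha f(x+\sqrt\lambda y) - \partial^\alpha\phi(x)\right)^2$ as a polynomial in $y$. For this I would invoke the rescaled identity \eqref{eq:hermite2-rescaling}, which expresses $\partial^\alpha f(x+\sqrt\lambda y) = \sum_{\beta\geq\alpha}\sqrt{\alpha!/\beta!}\,\partial^\beta\phi(x)\,\lambda^{|\beta-\alpha|/2}h_{\beta-\alpha}(y)$; the $\beta=\alpha$ term is exactly the constant $\partial^\alpha\phi(x)$, so the centered quantity $\partial^\alpha f(x+\sqrt\lambda y) - \partial^\alpha\phi(x)$ equals $\sum_{\beta > \alpha}\sqrt{\alpha!/\beta!}\,\partial^\beta\phi(x)\,\lambda^{|\beta-\alpha|/2}h_{\beta-\alpha}(y)$, i.e.\ its Hermite-degree-$m$ part corresponds to derivatives of $\phi$ of order $k+m$.

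Next I would pass from the $(q/2)$-norm of the square $D$ to the $q$-norm of a vector-valued object. Writing $V(y) := (\partial^\alpha f(x+\sqrt\lambda y) - \partial^\alpha\phi(x))_{|\alpha|=k}$, we have $D(y) = \|V(y)\|^2$ and $\|D\|_{q/2} = \|\,\|V(y)\|\,\|_q^2$. Applying hypercontractivity coordinate-wise (or, cleaner, applying the inequality $\|\,\|V\|\,\|_q \le \|U_{\sqrt{q}}V\|_2$ in the natural vector-valued sense — which follows from the scalar version by summing over coordinates) gives
$$ \|D\|_{q/2} \;\le\; \sum_{|\alpha|=k}\ \sum_{\beta > \alpha} q^{\,|\beta-\alpha|}\,\frac{\alpha!}{\beta!}\,\lambda^{|\beta-\alpha|}\,\partial^\beta\phi(x)^2 . $$
Now reorganize the double sum by the order $t := |\beta|$ of the multi-index $\beta$, so $|\beta-\alpha| = t-k$, and collect the $\lambda^{t-k}q^{t-k}$ factor out front for each $t$ from $k+1$ to $d$. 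The remaining combinatorial task is to show that for each fixed $\beta$ with $|\beta|=t$,
$$ \sum_{|\alpha|=k,\ \alpha\le\beta}\ \frac{\alpha!}{\beta!}\ \le\ d^{\,t-k}, $$
which is where the factor $d^{t-k}$ in the claimed bound comes from. This is a standard multinomial estimate: $\alpha!/\beta! = 1/\binom{\beta}{\alpha}\cdot(\text{stuff})$; more simply, $\sum_{\alpha\le\beta,|\alpha|=k}\binom{\beta}{\alpha}$-type identities bound the count of sub-multi-indices, and since $t\le d$ each of the $t-k$ "extra" coordinates contributes a factor at most $t\le d$. Putting this together yields exactly $\|D\|_{q/2} \le \sum_{t=k+1}^{d}(\lambda d q)^{t-k}\|\nabla^t\phi(x)\|^2$, and the second displayed inequality in the lemma is just the same statement with both sides raised to a power and $\|D\|_{q/2}^{1/2}$ rewritten.

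The main obstacle I anticipate is bookkeeping the combinatorial factors correctly — precisely, verifying the $\sum_{\alpha\le\beta}\alpha!/\beta! \le d^{t-k}$ bound and making sure no extra factorial or binomial blowup sneaks in when going from the scalar hypercontractive inequality to its vector-valued form and when relating $\|\nabla^t\phi(x)\|^2 = \sum_{|\beta|=t}\partial^\beta\phi(x)^2$ to the reorganized double sum (each $\partial^\beta\phi(x)^2$ should be counted with total weight $\le d^{t-k}$, not more). A secondary point to be careful about: the identity \eqref{eq:hermite2-rescaling} has the $(1-\lambda)^{\pm}$ factors absorbed into the definition of $\phi(z) = U_{\sqrt{1-\lambda}}f(z/\sqrt{1-\lambda})$, so I should double-check that this rescaling is exactly the one that makes the constant term equal $\partial^\alpha\phi(x)$ with no residual $(1-\lambda)$ powers, and that replacing $(q-1)$ by $q$ and $\lambda/(1-\lambda)$ by $\lambda$ in the bounds only loses constant factors that are harmless at this level of precision.
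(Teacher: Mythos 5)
Your proposal is correct and is essentially the paper's own argument. Both start from the rescaled identity \eqref{eq:hermite2-rescaling} to write each component of the centered vector $V_\alpha(y) = \partial^\alpha f(x+\sqrt{\lambda}y) - \partial^\alpha\phi(x)$ as a Hermite series with coefficients $\sqrt{\alpha!/\beta!}\,\partial^\beta\phi(x)\,\lambda^{|\beta-\alpha|/2}$; both then control $\|D\|_{q/2}=\bigl\|\sum_\alpha V_\alpha^2\bigr\|_{q/2}$ via the triangle inequality in $L_{q/2}$ followed by scalar $(q,2)$-hypercontractivity applied to each $V_\alpha$ (which is exactly what your ``vector-valued'' shorthand $\|\,\|V\|\,\|_q \le \|\,\|U_{\sqrt q}V\|\,\|_2$ unpacks to); and both finish by regrouping the double sum over $(\alpha,\beta)$ by $t=|\beta|$, using $\alpha!/\beta!\le 1$ and the count $\#\{\alpha\le\beta: |\alpha|=k\}\le\binom{t}{t-k}\le d^{t-k}$ to arrive at $\sum_{t=k+1}^d(\lambda d q)^{t-k}\|\nabla^t\phi(x)\|^2$. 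The only loose end in your write-up is the combinatorial step, which you flag but leave as a sketch; the clean version is simply that each term $\alpha!/\beta!$ is at most $1$ and the number of sub-multi-indices $\alpha\le\beta$ with $|\alpha|=k$ is at most $\binom{t}{k}\le d^{t-k}$, as in the paper.
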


\begin{proof}
We express
$$ D(y) = \sum_{\alpha} \left( \partial^{\alpha} f(x + \sqrt{\lambda} y) - \partial^{\alpha} \phi(x)  \right)^2
= \sum_{\alpha} \left( \sum_{\beta > \alpha} \sqrt{\frac{\alpha!}{\beta!}}\partial^{\beta} \phi(x) \lambda^{|\beta-\alpha|/2} h_{\beta - \alpha}(y) \right)^2.
$$
First, by triangle-inequality, we get
\begin{align*}
    \| D(y) \|_{q/2} &\leq   \sum_{\alpha}  \left\| \left( \sum_{\beta > \alpha} \sqrt{\frac{\alpha!}{\beta!}}\partial^{\beta} \phi(x) \lambda^{|\beta-\alpha|/2} h_{\beta - \alpha}(y) \right)^2 \right\|_{q/2} \\
    &=   \sum_{\alpha} \left\| \sum_{\beta > \alpha} \sqrt{\frac{\alpha!}{\beta!}}\partial^{\beta} \phi(x) \lambda^{|\beta-\alpha|/2} h_{\beta - \alpha}(y)  \right\|_{q}.
\end{align*}
Applying hypercontractivity, we now get 
\begin{align*}
    \| D(y) \|_{q/2}     &\leq \sum_{\alpha} \left\| U_{\sqrt{q}} \sum_{\beta > \alpha} \sqrt{\frac{\alpha!}{\beta!}}\partial^{\beta} \phi(x) \lambda^{|\beta-\alpha|/2} h_{\beta - \alpha}(y)  \right\|_{2} \\
    &= \sum_{\alpha} \sum_{\beta > \alpha} \frac{\alpha!}{\beta!} \partial^{\beta} \phi(x)^2 \lambda^{|\beta-\alpha|} q^{|\beta-\alpha|} \\
    &\leq \sum_{\alpha} \sum_{\beta > \alpha} \partial^{\beta} \phi(x)^2 \lambda^{|\beta-\alpha|} q^{|\beta-\alpha|} \\
    &= \sum_{t = k + 1}^d \binom{t}{t-k} (\lambda q)^{t - k} \|\nabla^t \phi(x) \|^2 \\
    &\leq \sum_{t = k + 1}^d (\lambda  d q)^{t - k} \|\nabla^t \phi(x) \|^2. \qedhere
\end{align*}
\end{proof}

We are now ready to prove \cref{lm:error2}. 

\begin{proof}[Proof of \cref{lm:error2}]
We study two cases:
\begin{enumerate}
    \item $x$ is poorly-behaved for $\phi$. In this case, we will show that $g(x + \sqrt{\lambda} Y) = 0$ with probability at least $1 - 2^{-\Omega(R)}$. 
    \item $x$ is well-behaved for $\phi$: In this case, we will exploit the fact that $\sign(p(x+\sqrt{\lambda Y}))$ will equal $\sign(\phi(x))$ with probability $1-2^{-\Omega(R)}$. We then have to show that $Y$ fools the mollifier $g$ which is a bit technically involved (hence we deal with this case second unlike in \cref{sec:prgoutline}). 
\end{enumerate}
We begin with the first case.

\noindent{\bf Case 1: $x$ is poorly-behaved for $\phi$}. 
Consider the case where the inequality $ \| \nabla^{k}\phi(x) \| \geq \eps \| \nabla^{k+1}\phi(x) \| $ is violated for some $k$, and indeed let $k_0$ be the largest $k$ such that this inequality is violated.
We will argue that with probability at least $1- 2^{-\Omega(R)}$, over random choice of $Y$, that
$$\| \nabla^{k_0}p(x+\sqrt{\lambda} Y) \| \leq 4 \eps \| \nabla^{k_0+1}p(x + \sqrt{\lambda} Y) \|, $$
in which case $g(x + \sqrt{\lambda} Y) = 0$. 

More specifically, we will show that it is highly likely that both
\begin{itemize}
    \item $ \| \nabla^{k_0}p(x+\sqrt{\lambda} Y) \| \leq 2 \eps \| \nabla^{k_0+1}\phi(x) \| $, and
    \item $ \| \nabla^{k_0+1}p(x+\sqrt{\lambda} Y) \| \geq \frac{1}{2}  \| \nabla^{k_0+1}\phi(x) \| $.
\end{itemize}

For this, we will use \cref{eq:hermite2-rescaling} and \cref{lm:moments}. Supposing $k_0$ is the largest $k$ such that
$$ \| \nabla^{k}\phi(x) \| < \eps \| \nabla^{k+1}\phi(x) \|, $$
we have
\begin{itemize}
    \item  $ \| \nabla^{k_0}\phi(x) \| \leq \eps  \| \nabla^{k_0+1}\phi(x) \| $ and 
    \item $  \| \nabla^{k_0+1}\phi(x) \| \geq \eps^{t}  \| \nabla^{k_0+1+t}\phi(x) \|  $ for all $t \geq 0$.
\end{itemize}

\cref{lm:moments} therefore gives the bounds
$$ \left(\E_Y \|\nabla^{k_0} p(x + \sqrt{\lambda} Y) - \nabla^{k_0} \phi(x) \|^R \right)^{1/R} 
\leq \eps \| \nabla^{k_0+1}\phi(x) \| \sqrt{\sum_{t \geq 1} (\lambda d R / \eps^2 )^{t} } 
$$ 
and 
$$ \left(\E_Y \|\nabla^{k_0+1} p(x + \sqrt{\lambda} Y) - \nabla^{k_0+1} \phi(x) \|^R \right)^{1/R} 
\leq  \| \nabla^{k_0+1}\phi(x) \| \sqrt{\sum_{t \geq 1} (\lambda d R / \eps^2 )^{t} } .
$$ 
So, as long as $\lambda d R / \eps^2$ is at most a sufficiently small constant, we conclude that the following bounds hold with probability at least $1 - 2^{-R}$:

\begin{itemize}
    \item $ \| \nabla^{k_0}p(x+\sqrt{\lambda} Y) \| \leq  
    \| \nabla^{k_0}\phi(x) \| + \|\nabla^{k_0} p(x + \sqrt{\lambda} Y) - \nabla^{k_0} \phi(x) \| 
    \leq 2 \eps \| \nabla^{k_0+1}\phi(x) \| $, and
    \item $ \| \nabla^{k_0+1}p(x+\sqrt{\lambda} Y) \| 
    \geq \| \nabla^{k_0+1}\phi(x) \| - \|\nabla^{k_0+1} p(x + \sqrt{\lambda} Y) - \nabla^{k_0+1} \phi(x) \|
    \geq \frac{1}{2}  \| \nabla^{k_0+1}\phi(x) \| $.
\end{itemize}
In the case that these bounds hold, we get $$\| \nabla^{k_0}p(x+\sqrt{\lambda} Y) \| \leq 4 \eps \| \nabla^{k_0+1}p(x + \sqrt{\lambda} Y) \|,$$ and so
$ g(x + \sqrt{\lambda} Y) = 0 .$ As this holds with probability at least $1 - 2^{-\Omega(R)}$ for both $y \sim N(0,1)^n$ as well as $Y$, the conclusion of \cref{lm:error2} follows. This finishes the proof of Case 1. $\qedsymbol$

\noindent{\bf Case 2: $x$ is well-behaved for $\phi$.}  We now consider the complimentary case where
$$ \| \nabla^k \phi(x) \| \geq \eps \| \nabla^{k+1} \phi(x) \| $$
for all $k = 0,1,\ldots,d-1$. Consider the normalized polynomial
$$ f(y) := \frac{p(x + \sqrt{\lambda} y)}{\phi(x)} = 1 + \frac{1}{\phi(x)} \sum_{\alpha \neq 0} \partial^{\alpha} \phi(x) \lambda^{|\alpha|/2} h_{\alpha}(y) .$$
Using hypercontractivity, we bound the $R$-th moment of $f(y) - 1$ by its $\sqrt{R}$-hypervariance:
$$ \|f(y) - 1\|_{R} \leq \|U_{\sqrt{R}} \left(f(y) - 1\right)\|_{2} \leq  \sqrt{ \sum_{k \geq 1}  \left( \frac{\lambda R}{\eps^2}  \right)^k } \leq \frac{1}{2} .
$$
So, by a Markov argument, we have
$$ \prob{\sign(p(x+\sqrt{\lambda}Y)) \neq \sign(\phi(x))} \leq 2^{-R} ,$$
and this holds whenever $Y$ is $k$-moment-matching for $k \geq d R$. 
So, $\sign(p(x+\sqrt{\lambda}Y))$ is nearly a constant for random $Y$; it remains to show that $Y$ fools $g(x + \sqrt{\lambda} Y)$. We do this by (essentially) truncating the Taylor-series of $g$ about $x$ so that we are left with a degree $dR$ polynomial, which is fooled by $Y$. The truncation-error will be small because our assumption, 
$$\| \nabla^{k}\phi(x) \| \geq \eps \| \nabla^{k+1}\phi(x) \| \textnormal{ for all } k, $$ 
gives us good control on the $R$-th order moments of the deviations  $\| \nabla^k \phi(x) \| - \| \nabla^k p(x + \sqrt{\lambda} Y) \|$. The exact calculations are somewhat cumbersome and are given below. 
We will show that $Y$ fools the mollifier function
$$ g(x + \sqrt{\lambda} y) = \prod_{k=0}^{d-1} \rho \left( \log\left( \frac{1}{ 16 \eps^2}\frac{ \|\nabla^{k} p(x + \sqrt{\lambda} y) \|^2 }{\|\nabla^{k+1} p(x + \sqrt{\lambda} y) \|^2} \right) \right).
$$
To simplify notation we define the shifted function $\sigma(t) := \rho(t - \log(16 \eps^2))$, and express
$$ g(x + \sqrt{\lambda} y) = \prod_{k=0}^{d-1} \sigma \left( \log \|\nabla^{k} p(x + \sqrt{\lambda} y) \|^2 - \log  \|\nabla^{k+1} p(x + \sqrt{\lambda} y) \|^2 \right).
$$
It will be convenient to think of $g$ (redundantly) as function of $2d$ auxiliary variables $s_1 \ldots s_d$, $t_1, \ldots t_d$, which we will eventually fix to
\begin{itemize}
    \item $s_i := \|\nabla^{i-1} p(x + \sqrt{\lambda} y) \|^2$
    \item $t_i := \|\nabla^{i} p(x + \sqrt{\lambda} y) \|^2$,
\end{itemize}
so we write
$$ g(s,t) := \prod_{i=1}^d \sigma \left( \log(s_i) - \log(t_i) \right).
$$
We Taylor-expand $g(s,t)$ around the points
\begin{itemize}
    \item $a_i := \|\nabla^{i-1} \phi(x) \|^2$
    \item $b_i := \|\nabla^{i} \phi(x) \|^2$,
\end{itemize}
which gives
$$ g(s,t) = \ell(s,t) + h(s,t), $$
with low-degree part
$$\ell(s,t) := \sum_{\substack{\alpha, \beta \in \N^d \\ |\alpha| + |\beta| < R}} 
\frac{\partial^{\alpha}_s \partial^{\beta}_t g(a,b)}{\alpha ! \beta !} \left( s - a \right)^{\alpha} \left(t - b \right)^{\beta}
$$
and remainder 
$$ |h(s,t)| \leq  \sum_{\substack{\alpha, \beta \in \N^d \\ |\alpha| + |\beta| = R}} 
 \frac{ |\partial^{\alpha}_s \partial^{\beta}_t g(s^*,t^*) |}{\alpha ! \beta !}  \left| s - a \right|^{\alpha} \left| t - b \right|^{\beta},
$$
where ``$ |\partial^{\alpha}_s \partial^{\beta}_t g(s^*,t^*) |$'' is notation for the maximum magnitude of $\partial^{\alpha}_s \partial^{\beta}_t g$ on any point on the line segment from $(a,b)$ to $(s,t)$. 
We need the following fact to bound the size of the derivatives of $g$,

\begin{claim}
Suppose $\sigma$ is a smooth univariate function with uniform derivative bounds
$$ \| \sigma^{(n)} \|_{\infty} \leq n^{O(n)} .
$$
The bivariate function 
$$ r(u,v) :=  \sigma( \log(u) - \log(v)) $$
has derivatives bounded in size by
$$ \left|  \frac{\partial^{n}}{\partial u^{n}} \frac{\partial^{m}}{\partial v^m} r(u,v) \right| \leq 
\frac{n^{O(n)}}{|u|^{n}} \frac{m^{O(m)}}{|v|^{m}}.
$$
\end{claim}
This claim follows easily from the generalized chain rule (Faà di Bruno's formula).
As a result, we get the derivative bounds
$$ \left| \partial^{\alpha}_s \partial^{\beta}_{t} g(s,t) \right| \leq 
\frac{|\alpha|^{O(|\alpha|)}}{|s^\alpha|} \frac{|\beta|^{O(|\beta|)}}{|t^\beta|}.
$$
Using this, we bound the remainder
$$ |h(s,t)| \leq \sum_{\substack{\alpha, \beta \in \N^d \\ |\alpha| + |\beta| = R}} 
d^{O(R)}
\prod_{i = 1}^d \left( \frac{|1 - \tfrac{s_i}{a_i}|}{1 - |1 - \tfrac{s_i}{a_i}|} \right)^{\alpha_i }
 \left( \frac{|1 - \tfrac{t_i}{b_i}|}{1 - |1 - \tfrac{t_i}{b_i}|} \right)^{\beta_i}.
$$
Now, consider the event $E$ (which depends on $y$) that 
$$  (1-\delta) \|\nabla^{i} \phi(x) \|^2 \leq \|\nabla^{i} p(x + \sqrt{\lambda} y) \|^2 \leq (1+\delta) \|\nabla^{i} \phi(x) \|^2
$$
holds for all $i$, where $\delta \leq 1/2$ is a parameter we will set shortly.
In the case that this indeed holds, we get
$$ |h(s,t)| \leq d^{O(R)} O(\delta)^{R} .$$
We set $\delta$ just small enough to ensure
$$ |h(s,t)| \leq 2^{-R} .$$
Now, we express $g$ (which we now think of as a function of the underlying variable $y$) as
\begin{align*}
    g &= g \cdot \1_E + g \cdot \1_{\bar{E}} \\
    &= \ell \cdot \1_E + h \cdot \1_E + g \cdot \1_{\bar{E}} \\
    &= \ell - \ell \cdot \1_{\bar{E}} + h \cdot \1_E + g  \cdot \1_{\bar{E}} ,
\end{align*}
and we obtain the pointwise bound
$$ \left| g - \ell \right| \leq 2^{-R} + \1_{\bar{E}} + |\ell| \cdot \1_{\bar{E}} .$$
On average over $Y$, we get truncation error
\begin{align*} \E_Y \left| g(x + \sqrt{\lambda} Y) - \ell(Y) \right| &\leq 2^{-R}  + \E_Y  \1_{\bar{E}}(Y) + \sqrt{\E_Y \ell^2(Y)} \sqrt{\E_Y \1_{\bar{E}}(Y)} \\
&\leq 2^{-R} +  O\left( \frac{d}{\delta} \right)^R \cdot \left( \frac{\lambda d R}{\eps^2} \right)^{-\Omega(R)} \\
&\leq 2^{-R}  + d^{O(1)} \cdot \left( \frac{\lambda d R}{\eps^2} \right)^{-\Omega(R)} 
\end{align*}
where the second inequality here follows from the moment bounds in \cref{lm:moments}. 
As required by the conditions of \cref{lm:error2}, we insist that $\lambda$ is small enough that this error is at most $2^{-\Omega(R)}$. Since this bound holds also for truly-random standard gaussian $y$, and $\E_Y \ell(Y) = \E_y \ell(y)$, we obtain the desired bound
\begin{equation*}
|\E_Y g(x + \sqrt{\lambda} Y) - \E_y g(x + \sqrt{\lambda} y)| \leq 2^{-\Omega(R)} . 
\end{equation*}

This finishes the proof in Case 2 and hence of \cref{lm:error2}. \qedhere

\end{proof}

\ignore{
\begin{proof}
We argue by a case analysis based on the behavior of the polynomial $p$ at the fixed point $x$.
For this, we use the following re-scaling of \cref{lm:hermite1}:
\begin{lem}[Re-scaling of \cref{lm:hermite1}] \label{lm:hermite1-rescaling}
Let $p$ be a degree-$d$ polynomial. Then
$$ p\left( x + \sqrt{\lambda} y  \right) = \sum_{|\alpha| \leq d} \frac{\partial^{\alpha} \phi(x)}{\sqrt{\alpha !}} \lambda^{|\alpha|/2} h_{\alpha}(y) ,$$
where $\phi(x) := U_{\sqrt{1-\lambda}} p\left( \frac{x}{\sqrt{1-\lambda}}\right).$
\end{lem}

We do our case analysis based on the derivatives of $\phi(x)$. 
To get a handle on this polynomial, note that if $p(x)$ is multilinear then we simply have $\phi(x) = p(x)$. Furthermore, note that for a truly random gaussian $y$ we have $\partial^{\alpha} \phi(x) = \E_y \partial^\alpha p(x + \sqrt{\lambda} y) $. Thus, it is reasonable to expect that for typical points $x$ and small enough $\lambda$, $\partial^\alpha p(x + \sqrt{\lambda} y)$ will be strongly concentrated around $\phi(x)$.

\paragraph{Case 1: $\|\nabla^{k}p\|$ is well-behaved for all $k$.} We begin with the first case. Suppose that
$$ \| \nabla^k \phi(x) \| \geq \eps \| \nabla^{k+1} \phi(x) \| $$
for all $k = 0,1,\ldots,d-1$. Consider the normalized polynomial

$$ f(y) := \frac{p(x + \sqrt{\lambda} y)}{\phi(x)} = 1 + \frac{1}{\phi(x)} \sum_{\alpha \neq 0} \partial^{\alpha} \phi(x) \lambda^{|\alpha|/2} h_{\alpha}(y) .$$
Using hypercontractivity, we bound the $R$-th moment of $f(y) - 1$ by its $\sqrt{R}$-hypervariance:
$$ \|f(y) - 1\|_{R} \leq \|U_{\sqrt{R}} \left(f(y) - 1\right)\|_{2} \leq  \sqrt{ \sum_{k \geq 1}  \left( \frac{\lambda R}{\eps^2}  \right)^k } \leq \frac{1}{2} .
$$
So, by a Markov argument, we have
$$ \prob{\sign(p(x+\sqrt{\lambda}Y)) \neq \sign(\phi(x))} \leq 2^{-R} ,$$
and this holds whenever $Y$ is $k$-moment-matching for $k \geq d R$. 
So, $\sign(p(x+\sqrt{\lambda}Y))$ is nearly a constant for random $Y$; it remains to show that $Y$ fools $g(x + \sqrt{\lambda} Y)$. We do this by (essentially) truncating the Taylor-series of $g$ about $x$ so that we are left with a degree $dR$ polynomial, which is fooled by $Y$. The truncation-error will be small because our assumption, 
$$\| \nabla^{k}\phi(x) \| \geq \eps \| \nabla^{k+1}\phi(x) \| \textnormal{ for all } k, $$ 
gives us good control on the $R$-th order moments of the deviations  $\| \nabla^k \phi(x) \| - \| \nabla^k p(x + \sqrt{\lambda} Y) \|$. The exact calculations are fairly routine, so we return to this after showing how to handle the second case. 

\paragraph{Case 2: $\|\nabla^{k_0}p\|$ is poorly-behaved.} 
Now, we consider the complement case where the inequality $ \| \nabla^{k}\phi(x) \| \geq \eps \| \nabla^{k+1}\phi(x) \| $ is violated for some $k$, and indeed let $k_0$ be the largest $k$ such that this inequality is violated.
We will argue that it is highly likely, over random choice of $Y$, that
$$\| \nabla^{k_0}p(x+\sqrt{\lambda} Y) \| \leq 4 \eps \| \nabla^{k_0+1}p(x + \sqrt{\lambda} Y) \|, $$
in which case $g(x + \sqrt{\lambda} Y) = 0$. 

More specifically, we will show that it is highly likely that both
\begin{itemize}
    \item $ \| \nabla^{k_0}p(x+\sqrt{\lambda} Y) \| \leq 2 \eps \| \nabla^{k_0+1}\phi(x) \| $, and
    \item $ \| \nabla^{k_0+1}p(x+\sqrt{\lambda} Y) \| \geq \frac{1}{2}  \| \nabla^{k_0+1}\phi(x) \| $.
\end{itemize}

For this, we use the following re-scaling of \cref{lm:hermite2}, which gives a nice, nearly self-referential expression for the higher-order derivatives.
\begin{lem}[Re-scaling of \cref{lm:hermite2}] \label{lm:hermite2-recaling}
Let $p$ be a degree-$d$ polynomial. Then
$$ \partial^{\alpha}p\left( x + \sqrt{\lambda} y  \right) = \sum_{\beta \geq \alpha} \sqrt{ \frac{\alpha !}{\beta!}}  \partial^{\beta} \phi(x) \lambda^{|\beta - \alpha|/2} h_{\beta - \alpha}(y) ,$$
where  $\phi(x) := U_{\sqrt{1-\lambda}} p\left( \frac{x}{\sqrt{1-\lambda}}\right).$
\end{lem}

With this we derive the following bound on the higher-moments of the deviation of $\nabla^{k} p(x + \sqrt{\lambda} y)$ from its expectation.
\begin{lem} \label{lm:moments}
Suppose $f$ is a degree-$d$ polynomial, and let $\phi(x) = U_{\sqrt{1-\lambda}} f(\frac{x}{\sqrt{1-\lambda}})$.
Consider the polynomial
$$ D(y) := \|\nabla^{k} f(x + \sqrt{\lambda} y) - \nabla^k \phi(x) \|^2 ,$$
which measures the euclidean distance between the $k$-th order derivatives $\nabla^{k} f(x + \sqrt{\lambda} y)$ and their expectations $ \nabla^k \phi(x) $.

For $y \sim N(0,1)^n$, we have the moment bound
$$ \| D(y) \|_{q/2} \leq \sum_{t =  k + 1}^d (\lambda d q)^{t - k} \| \nabla^{t} \phi(x) \|^2 .$$
That is,
$$ \left(\E_{y \sim N(0,1)^n} \|\nabla^{k} f(x + \sqrt{\lambda} y) - \nabla^k \phi(x) \|^q \right)^{1/q} \leq 
\sqrt{ \sum_{t =  k + 1}^d (\lambda d q)^{t - k} \| \nabla^{t} \phi(x) \|^2} .
$$
\end{lem}

\begin{proof}
We express
$$ D(y) = \sum_{\alpha} \left( \partial^{\alpha} f(x + \sqrt{\lambda} y) - \partial^{\alpha} \phi(x)  \right)^2
= \sum_{\alpha} \left( \sum_{\beta > \alpha} \sqrt{\frac{\alpha!}{\beta!}}\partial^{\beta} \phi(x) \lambda^{|\beta-\alpha|/2} h_{\beta - \alpha}(y) \right)^2.
$$
Now, using first the triangle-inequality and then hypercontractivity we get
\begin{align*}
    \| D(y) \|_{q/2} &\leq   \sum_{\alpha}  \left\| \left( \sum_{\beta > \alpha} \sqrt{\frac{\alpha!}{\beta!}}\partial^{\beta} \phi(x) \lambda^{|\beta-\alpha|/2} h_{\beta - \alpha}(y) \right)^2 \right\|_{q/2} \\
    &=   \sum_{\alpha} \left\| \sum_{\beta > \alpha} \sqrt{\frac{\alpha!}{\beta!}}\partial^{\beta} \phi(x) \lambda^{|\beta-\alpha|/2} h_{\beta - \alpha}(y)  \right\|_{q} \\
    &\leq \sum_{\alpha} \left\| U_{\sqrt{q}} \sum_{\beta > \alpha} \sqrt{\frac{\alpha!}{\beta!}}\partial^{\beta} \phi(x) \lambda^{|\beta-\alpha|/2} h_{\beta - \alpha}(y)  \right\|_{2} \\
    &= \sum_{\alpha} \sum_{\beta > \alpha} \frac{\alpha!}{\beta!} \partial^{\beta} \phi(x)^2 \lambda^{|\beta-\alpha|} q^{|\beta-\alpha|} \\
    &\leq \sum_{\alpha} \sum_{\beta > \alpha} \partial^{\beta} \phi(x)^2 \lambda^{|\beta-\alpha|} q^{|\beta-\alpha|} \\
    &= \sum_{t = k + 1}^d \binom{t}{t-k} (\lambda q)^{t - k} \|\nabla^t \phi(x) \|^2 \\
    &\leq \sum_{t = k + 1}^d (\lambda  d q)^{t - k} \|\nabla^t \phi(x) \|^2. \qedhere
\end{align*}
\end{proof}

Suppose $k_0$ is the largest $k$ such that
$$ \| \nabla^{k}\phi(x) \| < \eps \| \nabla^{k+1}\phi(x) \|. $$
We have
\begin{itemize}
    \item  $ \| \nabla^{k_0}\phi(x) \| \leq \eps  \| \nabla^{k_0+1}\phi(x) \| $ and 
    \item $  \| \nabla^{k_0+1}\phi(x) \| \geq \eps^{t}  \| \nabla^{k_0+1+t}\phi(x) \|  $ for all $t \geq 0$.
\end{itemize}

Our recent \cref{lm:moments} therefore gives the bounds
$$ \left(\E_Y \|\nabla^{k_0} p(x + \sqrt{\lambda} Y) - \nabla^{k_0} \phi(x) \|^R \right)^{1/R} 
\leq \eps \| \nabla^{k_0+1}\phi(x) \| \sqrt{\sum_{t \geq 1} (\lambda d R / \eps^2 )^{t} } 
$$ 
and 
$$ \left(\E_Y \|\nabla^{k_0+1} p(x + \sqrt{\lambda} Y) - \nabla^{k_0+1} \phi(x) \|^R \right)^{1/R} 
\leq  \| \nabla^{k_0+1}\phi(x) \| \sqrt{\sum_{t \geq 1} (\lambda d R / \eps^2 )^{t} } .
$$ 
So, as long as $\lambda d R / \eps^2$ is at most a sufficiently small constant, we conclude that the following bounds hold with probability at least $1 - 2^{-R}$:

\begin{itemize}
    \item $ \| \nabla^{k_0}p(x+\sqrt{\lambda} Y) \| \leq  
    \| \nabla^{k_0}\phi(x) \| + \|\nabla^{k_0} p(x + \sqrt{\lambda} Y) - \nabla^{k_0} \phi(x) \| 
    \leq 2 \eps \| \nabla^{k_0+1}\phi(x) \| $, and
    \item $ \| \nabla^{k_0+1}p(x+\sqrt{\lambda} Y) \| 
    \geq \| \nabla^{k_0+1}\phi(x) \| - \|\nabla^{k_0+1} p(x + \sqrt{\lambda} Y) - \nabla^{k_0+1} \phi(x) \|
    \geq \frac{1}{2}  \| \nabla^{k_0+1}\phi(x) \| $.
\end{itemize}
In the case that these bounds hold, we get $$\| \nabla^{k_0}p(x+\sqrt{\lambda} Y) \| \leq 4 \eps \| \nabla^{k_0+1}p(x + \sqrt{\lambda} Y) \|,$$ and so
$ g(x + \sqrt{\lambda} Y) = 0 .$

\textbf{Truncation error for Case 1.} Now, we return to show that $Y$ fools the mollifier function
$$ g(x + \sqrt{\lambda} y) = \prod_{k=0}^{d-1} \rho \left( \log\left( \frac{1}{ 16 \eps^2}\frac{ \|\nabla^{k} p(x + \sqrt{\lambda} y) \|^2 }{\|\nabla^{k+1} p(x + \sqrt{\lambda} y) \|^2} \right) \right).
$$
To simplify notation we define the shifted function $\sigma(t) := \rho(t - \log(16 \eps^2))$, and express
$$ g(x + \sqrt{\lambda} y) = \prod_{k=0}^{d-1} \sigma \left( \log \|\nabla^{k} p(x + \sqrt{\lambda} y) \|^2 - \log  \|\nabla^{k+1} p(x + \sqrt{\lambda} y) \|^2 \right).
$$
It will be convenient to think of $g$ (redundantly) as function of $2d$ auxiliary variables $s_1 \ldots s_d$, $t_1, \ldots t_d$, which we will eventually fix to
\begin{itemize}
    \item $s_i := \|\nabla^{i-1} p(x + \sqrt{\lambda} y) \|^2$
    \item $t_i := \|\nabla^{i} p(x + \sqrt{\lambda} y) \|^2$,
\end{itemize}
so we write
$$ g(s,t) := \prod_{i=1}^d \sigma \left( \log(s_i) - \log(t_i) \right).
$$
We taylor-expand $g(s,t)$ around the points
\begin{itemize}
    \item $a_i := \|\nabla^{i-1} \phi(x) \|^2$
    \item $b_i := \|\nabla^{i} \phi(x) \|^2$,
\end{itemize}
which gives
$$ g(s,t) = \ell(s,t) + h(s,t), $$
with low-degree part
$$\ell(s,t) := \sum_{\substack{\alpha, \beta \in \N^d \\ |\alpha| + |\beta| < R}} 
\frac{\partial^{\alpha}_s \partial^{\beta}_t g(a,b)}{\alpha ! \beta !} \left( s - a \right)^{\alpha} \left(t - b \right)^{\beta}
$$
and remainder 
$$ |h(s,t)| \leq  \sum_{\substack{\alpha, \beta \in \N^d \\ |\alpha| + |\beta| = R}} 
 \frac{ |\partial^{\alpha}_s \partial^{\beta}_t g(s^*,t^*) |}{\alpha ! \beta !}  \left| s - a \right|^{\alpha} \left| t - b \right|^{\beta},
$$
where ``$ |\partial^{\alpha}_s \partial^{\beta}_t g(s^*,t^*) |$'' is notation for the maximum magnitude of $\partial^{\alpha}_s \partial^{\beta}_t g$ on any point on the line segment from $(a,b)$ to $(s,t)$. 
We need the following fact to bound the size of the derivatives of $g$,

\begin{claim}
Suppose $\sigma$ is a smooth univariate function with uniform derivative bounds
$$ \| \sigma^{(n)} \|_{\infty} \leq n^{O(n)} .
$$
The bivariate function 
$$ r(u,v) :=  \sigma( \log(u) - \log(v)) $$
has derivatives bounded in size by
$$ \left|  \frac{\partial^{n}}{\partial u^{n}} \frac{\partial^{m}}{\partial v^m} r(u,v) \right| \leq 
\frac{n^{O(n)}}{|u|^{n}} \frac{m^{O(m)}}{|v|^{m}}.
$$
\end{claim}
This claim follows easily from the generalized chain rule (Faà di Bruno's formula).
As a result, we get the derivative bounds
$$ \left| \partial^{\alpha}_s \partial^{\beta}_{t} g(s,t) \right| \leq 
\frac{|\alpha|^{O(|\alpha|)}}{|s^\alpha|} \frac{|\beta|^{O(|\beta|)}}{|t^\beta|}.
$$
Using this, we bound the remainder
$$ |h(s,t)| \leq \sum_{\substack{\alpha, \beta \in \N^d \\ |\alpha| + |\beta| = R}} 
d^{O(R)}
\prod_{i = 1}^d \left( \frac{|1 - \tfrac{s_i}{a_i}|}{1 - |1 - \tfrac{s_i}{a_i}|} \right)^{\alpha_i }
 \left( \frac{|1 - \tfrac{t_i}{b_i}|}{1 - |1 - \tfrac{t_i}{b_i}|} \right)^{\beta_i}.
$$
Now, consider the event $E$ (which depends on $y$) that 
$$  (1-\delta) \|\nabla^{i-1} \phi(x) \|^2 \leq \|\nabla^{i-1} p(x + \sqrt{\lambda} y) \|^2 \leq (1+\delta) \|\nabla^{i-1} \phi(x) \|^2
$$
holds for all $k$, where $\delta \leq 1/2$ is a parameter we will set shortly.
In the case that this indeed holds, we get
$$ |h(s,t)| \leq d^{O(R)} O(\delta)^{R} .$$
We set $\delta$ just small enough to ensure
$$ |h(s,t)| \leq 2^{-R} .$$
Now, we express $g$ (which we now think of as a function of the underlying variable $y$) as
\begin{align*}
    g &= g \cdot \1_E + g \cdot \1_{\bar{E}} \\
    &= \ell \cdot \1_E + h \cdot \1_E + g \cdot \1_{\bar{E}} \\
    &= \ell - \ell \cdot \1_{\bar{E}} + h \cdot \1_E g \cdot \1_{\bar{E}} ,
\end{align*}
and we obtain the pointwise bound
$$ \left| g - \ell \right| \leq 2^{-R} + \1_{\bar{E}} + \ell \cdot \1_{\bar{E}} .$$
On average over $Y$, we get truncation error
\begin{align*} \E_Y \left| g(x + \sqrt{\lambda} Y) - \ell(Y) \right| &\leq 2^{-R}  + \E_Y  \1_{\bar{E}}(Y) + \sqrt{\E_Y \ell^2(Y)} \sqrt{\E_Y \1_{\bar{E}}(Y)} \\
&\leq 2^{-R} +  O\left( \frac{d}{\delta} \right)^R \cdot \left( \frac{\lambda d R}{\eps^2} \right)^{-\Omega(R)} \\
&\leq 2^{-R}  + d^{O(1)} \cdot \left( \frac{\lambda d R}{\eps^2} \right)^{-\Omega(R)} 
\end{align*}
where the second inequality here follows from the moment bounds in \cref{lm:moments}. 
As required by the conditions of \cref{lm:error2}, we insist that $\lambda$ is small enough that this error is at most $2^{-\Omega(R)}$. Since this bound holds also for truly-random standard gaussian $y$, and $\E_Y \ell(Y) = \E_y \ell(y)$, we obtain the desired bound
\begin{equation*}
|\E_Y g(x + \sqrt{\lambda} Y) - \E_y g(x + \sqrt{\lambda} y)| \leq 2^{-\Omega(R)} . \qedhere
\end{equation*}
\end{proof}}

\bibliographystyle{alpha}
\bibliography{gaussianptf}

\end{document}